\newif\ifanonymous
\newcommand{\emaillink}[1]{\href{mailto:#1}{\raisebox{0.05em}{\textcolor{gray}{\fontsize{9}{12}\sffamily\bfseries\faIcon[regular]{envelope}}}}}
\definecolor{codegreen}{rgb}{0,0.6,0}
\definecolor{codegray}{rgb}{0.5,0.5,0.5}
\definecolor{codepurple}{rgb}{0.58,0,0.82}
\definecolor{backcolour}{rgb}{0.95,0.95,0.95}
\lstdefinestyle{overleafstyle}{
    backgroundcolor=\color{backcolour},
    commentstyle=\color{codegreen},
    keywordstyle=\color{magenta},
    numberstyle=\tiny\color{codegray},
    stringstyle=\color{codepurple},
    basicstyle=\ttfamily\footnotesize,
    breakatwhitespace=false,
    breaklines=true,
    captionpos=b,
    keepspaces=true,
    numbers=left,
    numbersep=5pt,
    showspaces=false,
    showstringspaces=false,
    showtabs=false,
    tabsize=2
}
\newcommand{\RR}{\mathbb{R}}
\newcommand{\NN}{\mathbb{N}}
\newcommand{\defn}[1]{\textbf{#1}}
\DeclareMathOperator{\EX}{\mathbb{E}}
\DeclareMathOperator{\OO}{\mathcal{O}}
\DeclareMathOperator{\oo}{o}
\newcommand{\remove}[1]{}
\newtheorem{theorem}{Theorem}[section]
\newtheorem{corollary}[theorem]{Corollary}
\newtheorem{definition}[theorem]{Definition}
\newtheorem{lemma}[theorem]{Lemma}
\newtheorem{proposition}[theorem]{Proposition}
\newtheorem{example}[theorem]{Example}
\title{Reweighted Spectral Partitioning Works:\\A Simple Algorithm for Vertex Separators in Special Graph Classes}
    \author{Anonymous author(s)}
    \author{Jack Spalding-Jamieson\,\orcidlink{0000-0002-1209-4345}\,\emaillink{jacksj@uwaterloo.ca}}
\date{}
\begin{document}

\maketitle

\begin{abstract}
We establish that a simple polynomial-time algorithm
that we call \defn{reweighted spectral partitioning}
obtains small $\frac23$-balanced vertex-separators
for a number of graph classes,
including $\OO(\sqrt{n})$-sized separators
for planar graphs,
$\OO(\min\{(\log g)^2,\log\Delta\}\cdot\sqrt{gn})$-sized
separators for genus-$g$ graphs of maximum degree $\Delta$,
and
$\OO(\min\{\log h,\sqrt{\log\Delta}\}(h\log h\log\log h)\sqrt{n})$-sized separators
for $K_h$-minor-free graphs of maximum degree $\Delta$.

To accomplish this,
we first obtain a refined form of a Cheeger-style inequality
relating the vertex expansion of a graph
and the solution to a semidefinite program defined over the graph.
Then, to obtain the guarantees for specific graph classes,
we derive direct bounds on the value of the semidefinite program.

We also obtain several other results of independent interest,
including an improved separator theorem for
the intersection graphs of
$d$-dimensional balls with bounded ply,
a new bound on the Fiedler value of genus-$g$ graphs,
and a new ``spectral'' proof of the planar separator theorem.
\end{abstract}

\section{Introduction}

In this work, we study simple algorithms for computing small balanced vertex separators.

Let $G=(V,E)$ be a graph with $n$ vertices and $m$ edges.
For a parameter $\alpha\in(\frac12,1)$,
a subset of the vertices $S\subset V$ is called an
\defn{$\alpha$-balanced vertex-separator} of $G$
if every connected component of the induced subgraph of $G$ on $V\setminus S$
has at most $\alpha\cdot n$ vertices.
An \defn{$\alpha$-balanced edge-separator} of $G$ is a subset of edges
meeting the same criteria.
For simplicity, we will sometimes use the term \defn{separator} to refer to a $\frac23$-balanced vertex-separator.

Small separators are useful for a vast number of applications,
including divide and conquer algorithms and dynamic programming~\cite{separatorsbook}.
Moreover, small separators are known to exist for a large number of common graph classes
(see \cref{sec:class-defs} for definitions of various graph classes).
Most notably, the \defn{planar separator theorem} states
that a planar graph with $n$ vertices has a separator consisting of $\OO(\sqrt{n})$ vertices~\cite{ungar1951theorem,lipton1979separator}. Such a separator can also be found in linear time.
Similar existential results are also known for other graph classes,
such as (oriented) genus-$g$ graphs always having separators
consisting of $\OO(\sqrt{g\,n})$ vertices~\cite{gilbert1984separator},
and $K_h$-minor-free graphs
always having separators consisting of $\OO(h\sqrt{n})$ vertices~\cite{kawarabayashi2010separator}.
Both of these example bounds are also tight.
Unfortunately, while all of these existential results also imply algorithms,
these algorithms
also require time exponential in the class parameter ($g$ or $h$).

Since the implicit algorithms for producing these separators are slow, it is natural to consider
more general approximation algorithms.
However, approximating the minimum-size $\alpha$-balanced vertex-separator of a graph is NP-hard even for graphs of maximum degree $3$, and even up to an $\OO(n^{\frac12-\epsilon})$
additive approximation~\cite{bui1992finding}
(for any $\epsilon>0$).
Fortunately, pseudo-approximations that relax the balance $\alpha$ are just as useful for most applications:
For a graph admitting a $2/3$-balanced vertex-separator of size $k$, an algorithm of Feige, Hajiaghayi, and Lee~\cite{FeigeHL05,FeigeHL08} can be
used to produce a $3/4$-balanced separator of size $\OO(k\cdot\sqrt{\log n})$ in polynomial time.
In fact, they can obtain a separator of size $\OO(k\cdot\sqrt{\log k})$.

We study a slightly different approach.
Rather than aim for general (pseudo-)approximation algorithms or specialized per-class algorithms,
we will study a polynomial-time algorithm for producing $\frac23$-balanced vertex-separators
in an arbitrary graph.
We call this algorithm \defn{reweighted spectral partitioning}
due to its numerous similarities with
the popular ``spectral-partitioning'' algorithm~\cite{spielman1996spectral,spielman2007spectral}.
This algorithm is sufficiently simple that we have implemented it,%
\footnote{We have not implemented an optional partition oracle used by the algorithm.}
and examples of separators that it produces can be found in \cref{fig:separator-examples}.

\begin{figure}[h]
\centering
\includegraphics[width=0.25\textwidth]{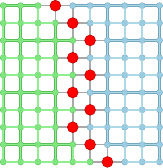}
\hspace{2.25em}
\includegraphics[width=0.25\textwidth]{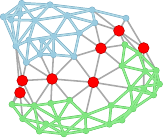}
\hspace{2.25em}
\includegraphics[width=0.25\textwidth]{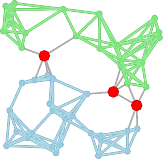}
\caption{Examples of vertex separators produced by reweighted spectral partitioning: A $100$-vertex grid graph (left), a $50$-vertex random planar graph (centre), and $50$-vertex $2$-dimensional $5$-nearest-neighbour graph (right). The separator vertices are the large red ones.}
\label{fig:separator-examples}
\end{figure}

Compared to the pseudo-approximations mentioned above, reweighted spectral partitioning has relatively
weak guarantees for general graphs.
However, the primary goal of this paper is to show that it admits quite strong guarantees for a number of specific graph classes, including planar graphs and bounded-genus graphs.
In fact, we will show that the separators it produces
nearly match the per-class separator sizes for a number
of important graph classes.

\subsection{The Method of Reweighted Spectral Partitioning}

The reweighted spectral partitioning algorithm that we consider is a rounding algorithm for a semidefinite program.
In particular, it is the combination of algorithms implicit in the proofs
of Kwok, Lau, and Tung~\cite{kwok2022cheeger}
and Biswal, Lee, and Rao~\cite{biswal2010eigenvalue}.
Critically, it makes use of the following optimization problem parametrized by a value $d\in\NN$:

\begin{equation*}
\everymath{\displaystyle}
\begin{array}{r c c r c l l}
\gamma^{(d)}(G) & := & \min_{\substack{f:V\to\RR^d\\y:V\to\RR_{\geq0}}} & \frac{\sum_{v\in V}y(v)}{\sum_{x\in V}||f(x)||_2^2} \\
&& \textup{subject to} & \sum_{v\in V} f(v) & = & \overline{0}\\
&& & y(u)+y(v) & \geq & ||f(u)-f(v)||_2^2 & \forall uv\in E
\end{array}
\end{equation*}
Several previous works have studied this problem for $d=n$ or $d=1$~\cite{boyd2004fastest,roch2005bounding,olesker2022geometric,jain2022dimension,kwok2022cheeger}.
In particular, when $d=n$, this is a semidefinite program and it is known to satisfy Slater's condition,
so it can be approximately solved in polynomial time with general methods~\cite{jiang2020faster}.

The reweighted spectral partitioning algorithm we consider then consists of four high-level steps for a graph $G$:
\begin{enumerate}
  \item Approximately solve the semidefinite program $\gamma^{(n)}(G)$,
    obtaining variables
    $f(v)\in\RR^n$ and $y(v)\in\RR_{\geq0}$ for each vertex $v\in V$.
  \item Dimension-reduce the $f(v)$ variables to get one-dimensional values $f'(v)\in\RR$.
    Several methods/samples for dimension-reduction are applied, and the best result is used.
    One of the methods takes, as a parameter,
    an \textbf{optional oracle that partitions the vertices of the graph}
    based on a vertex-weighting.
  \item Sort the vertices by their $f'(v)$ values to get an ordering $v_1,\dots,v_n$,
    and consider partitions of the form $A_k=\{v_1,\dots,v_k\},B_k=\{v_{k+1},\dots,v_n\}$.
    For each crossing edge in $A_k\times B_k\cap E$, select its incident vertex $v$ maximizing $y(v)$,
    and let $S_k$ be the set of selected vertices.
    Return $(A,S,B):=(A_k,S_k,B_k)$ minimizing $\frac{|S_k|}{\min\{|A_k|,|B_k|\}}$.
  \item Repeat on the subgraph of $G$ induced by larger of the sets $A$ and $B$
    until the union of all the computed sets $S$ forms a balanced separator.
\end{enumerate}

Importantly, the algorithm requires \emph{no information} about the graph
other than its vertices and edges.
The optional partitioning oracle could make use of such information (if desired),
but the useful partitioning oracles we will suggest do not.
This makes the algorithm suitable for very general applications.

\subsection{Main Results}

We will show that the reweighted spectral partitioning algorithm expressed in the previous subsection
produces provably good results in a number of different senses.
Some of these results will make use of a specialized random partition oracle as a subroutine, while others will not.

We will use the notation $a\lesssim b$ to mean $a=\OO(b)$.

\subsubsection{Refined Cheeger-Style Inequality}

The first main result is an inequality
relating $\gamma^{(n)}$ and $\gamma^{(1)}$
with several important quantities that we will now define.
The most important of these is a quantity related to vertex separators:

\begin{definition}
For a subset $A\subset V$ with $|A|\leq n/2$,
let $\partial A$ denote the set of vertices in $V\setminus A$ adjacent to some vertex in $A$.
The \defn{vertex expansion} of a set $A$ is the ratio $\psi(A):=\frac{|\partial A|}{|A|}$.
The \defn{vertex expansion} of a graph $G$ is the minimum vertex expansion over all possible sets $|A|\leq\frac n2$,
and it is denoted as $\psi(G)$.
\end{definition}

Computing vertex expansion is known to be NP-hard (see
\cref{sec:vertex-expansion} for details),
so it is usually studied alongside approximation algorithms.
Vertex expansion is intimately related to balanced vertex-separators.
Notably, approximations of vertex-expansion imply
pseudo-approximations
of balanced vertex-separators~\cite{FeigeHL05,FeigeHL08,leighton1999multi}.
In our case, a simpler fact is more relevant:
Given a graph $G$ with $n$ vertices,
if every induced subgraph $H$ of $G$ on $n'$ vertices
has vertex expansion at most $\frac{\kappa}{\sqrt{n'}}$,
then $G$ admits a $\frac23$-balanced vertex separator
of size $2\kappa\sqrt{n}$.
This separator can be obtained by repeatedly finding sets $A$ of vertices
with small vertex expansion,
and then deleting $\partial(A)$ and $A$ from the graph.
The final separator is the union of all sets $\partial A$ from this process.
This reduction is standard for a number of different formulations of expansion and separators
(e.g., see \cite[Lemma A.1]{spielman2007spectral}).

The second quantity relates to our optional partition oracle
used for dimension-reduction.
Intuitively, the following quantity arises when we aim to randomly partition a metric
while trying to avoid cutting large metric balls (in expectation):

\begin{definition}
Let $(X,d)$ be a metric space.
The (weak) \defn{modulus of padded decomposability} of $(X,d)$ is defined as the smallest value of $\alpha$
so that for every $\Delta>0$,
there exists a distribution $\mu$ of partitions of $X$
with parts of diameter at most $\Delta$,
and
\begin{equation*}
\label{eq:padded-decompose-modulus}
\everymath{\displaystyle}
\begin{array}{c c c c}
\sup_{x\in X}
 &
\Pr_{P\sim\mu}\left[B\left(x,\frac\Delta\alpha\right)\subset P(x)\right] & \geq & \frac12,
\end{array}
\end{equation*}
where $B(x,r)$ denotes the metric ball of radius $r$ centred at $x$.
We denote this quantity as $\alpha(X,d)$.
\end{definition}

\begin{definition}
For a graph $G=(V,E)$, the extremal vertex modulus of padded decomposability \emph{on} $G$ is defined as
the maximum value of $\alpha(V,d_\omega)$
for vertex-weight functions $\omega:V\to\RR_{\geq0}$
inducing shortest-path metrics $d_\omega$ on $G$.
We denote this quantity as $\alpha(G)$.
\end{definition}
A similar definition exists for edge-weighted shortest-path metrics, but we will only use this one.
$\alpha(G)$ is known to be bounded for a number of graph classes.
In particular, for $K_h$-minor-free graphs, $\alpha(G)\in\OO(\log h)$~\cite{conroy2025protect}.
Note that genus-$g$ graphs are $K_{\Theta(\sqrt{g})}$-minor-free.

Finally, the result we will prove is:

\begin{theorem}[Refined Cheeger-Style Inequality for Vertex Expansion]
\label{thm:refined-cheeger}
For a graph $G$ with $n$ vertices and maximum degree $\Delta$,
$$\frac{\psi(G)^2}{\min\{\log\Delta,[\alpha(G)]^2\}}
\lesssim\frac{\gamma^{(1)}(G)}{\min\{\log\Delta,[\alpha(G)]^2\}}
\lesssim\gamma^{(n)}(G)\lesssim\gamma^{(1)}(G)\lesssim\psi(G).$$
Moreover, given an oracle that produces an $\alpha$-padded partition
for any vertex-weighting of $G$ in polynomial time,
the reweighted spectral partitioning algorithm can compute in
(Monte Carlo randomized) polynomial time
a set $S$ with $|S|\leq\frac n2$ and vertex expansion
$\psi(S)\lesssim\sqrt{\gamma^{(n)}(G)\cdot\min\{\log\Delta,\alpha^2\}}$.
\end{theorem}

This is referred to as a ``Cheeger-style'' inequality
since the upper bounds on $\psi(G)$ are given in terms of a square root of an expression
involving either $\gamma^{(1)}(G)$ or $\gamma^{(n)}(G)$,
while the lower bounds on $\psi(G)$ have no such square root operation.
This is similar to the well-known Cheeger inequality,
which relates some similar quantities~\cite[Theorem 4.11]{hoory2006expander}.

The proof of this inequality will directly use the algorithm of reweighted spectral
partitioning (more specifically, \cref{alg:rsp} from \cref{sec:rsp}).
In fact, this result
is a refinement of a result from a long sequence of
Cheeger-style inequalities for
vertex-expansion~\cite{roch2005bounding, olesker2022geometric, jain2022dimension, kwok2022cheeger}.
In particular, the specific algorithm we use is a refinement of the one implicit in the proofs
of Kwok, Lau, and Tung~\cite{kwok2022cheeger},
incorporating a new dimension-reduction step
based on a technique of Biswal, Lee, and Rao~\cite{biswal2010eigenvalue}
that uses random partitions.
Their algorithm is not the only one that could form the basis of ours,
but it has the benefit of being quite simple.

\subsubsection{Vertex Separators for Special Graph Classes}

The bulk of this work is dedicated to showing that reweighted spectral partitoning
can be used to recover balanced vertex separators
for a number of special graph classes
that have nearly-optimal size for that class.
All of these proofs will make use of part of the Cheeger-style inequality:
By directly bounding $\gamma^{(d)}$ for a particular class (and some dimension $d$),
we will obtain guarantees for the performance of reweighted spectral partitioning on that class
by applying \cref{thm:refined-cheeger}.
We present several such results.

\begin{theorem}[Bounds on $\gamma^{(d)}$]
\label{thm:all-base-gamma-bounds}
Let $G$ be a graph with $n$ vertices and maximum degree $\Delta$.
Then the following conditional bounds hold:
\begin{itemize}
  \item If $G$ is planar, then $\gamma^{(1)}(G)\lesssim\frac{1}{n}$.
  \item If $G$ has (oriented) genus $g$, then $\gamma^{(1)}(G)\lesssim\frac{g\cdot\min\{(\log g)^2,\log\Delta\}}{n}$.
  \item If $G$ is $K_h$-minor-free, then $\gamma^{(1)}(G)\lesssim\frac{(h\log h\log\log h)^{2}}{n}$.
  \item If $G$ is a $d$-dimensional $k$-ply ball-intersection graph%
      \footnote{Sometimes called
      a neighbourhood-system.}
    in $\RR^d$, then $\gamma^{(d+1)}(G)\lesssim\left(\frac kn\right)^{\frac2d}$.
  \item If $G$ is a $d$-dimensional $k$-nearest-neighbour graph in $\RR^d$,
    then $\gamma^{(d+1)}(G)\lesssim\left(\frac kn\right)^{\frac2d}$.
\end{itemize}
\end{theorem}

All of these classes are hereditary (closed under induced subgraphs),
so these bounds can be combined with \cref{thm:refined-cheeger} in several different ways to obtain
results for balanced vertex-separators.
The first is a purely existential construction that uses the bound
$\psi(G)^2\lesssim\gamma^{(d)}(G)\cdot\min\{\log\Delta,[\alpha(G)]^2\}$:

\begin{corollary}[Existential Separator Bounds]
\label{cor:existential-separators}
Let $G$ be a graph with $n$ vertices and maximum degree $\Delta$.
Then the following conditional bounds hold:
\begin{itemize}
  \item If $G$ is planar, then $G$ admits a $\frac23$-balanced vertex-separator of size $\OO(\sqrt{n})$.
  \item If $G$ has (oriented) genus $g$, then $G$ admits a $\frac23$-balanced vertex-separator of size\\%
    $\OO(\min\{\log g,\sqrt{\log\Delta}\}\sqrt{gn})$.
  \item If $G$ is $K_h$-minor-free, then $G$ admits a $\frac23$-balanced vertex-separator of size $\OO((h\log h\log\log h)\sqrt{n})$.
  \item If $G$ is a $k$-ply ball-intersection graph in $\RR^d$, then $G$ admits a $\frac23$-balanced vertex-separator of size $\OO(\sqrt{\min\{d,\log\Delta\}}\cdot k^{1/d}n^{1-1/d})$.
  \item If $G$ is a $k$-nearest-neighbour graph in $\RR^d$, then $G$ admits a $\frac23$-balanced vertex-separator of size $\OO(\sqrt{\min\{d,\log\Delta\}}\cdot k^{1/d}n^{1-1/d})$.
\end{itemize}
\end{corollary}

Take note of the fact that these bounds do not result from running reweighted spectral partitioning,
and are not inherently constructive.
Moreover, the first three of these five bounds are not new, and the bounds for genus-$g$
and $K_h$-minor-free graphs are also weaker than the known existential bounds
of $\OO(\sqrt{gn})$~\cite{gilbert1984separator} and $\OO(h\sqrt{n})$~\cite{kawarabayashi2010separator}, respectively.
However, the final two of these bounds for the geometric classes
are brand new separator bounds,
improving over the previously-known bound of $\OO(dk^{1/d}n^{1-1/d})$~\cite{miller1997separators}.
While separators of this size would not result from reweighted spectral partitioning,
we will discuss their algorithmic aspects in \cref{subsubsec:sphere-separator-theorem}.

The second set of bounds we obtain for separators are those that are achieved by reweighted spectral partitioning
when supplied with the polynomial-time random padded partition oracle of Conroy and Filtser~\cite{conroy2025protect}:

\begin{corollary}[Separator Bounds via Reweighted Spectral Partitioning with Padded Partitions]
\label{cor:oracle-constructible-separators}
Let $G$ be a graph with $n$ vertices and maximum degree $\Delta$.
When provided with a known polynomial-time algorithm
to (randomly) partition the vertex weights,
reweighted spectral partitioning can be used to compute the following sizes of separators in (Monte Carlo randomized) polynomial time:
\begin{itemize}
  \item If $G$ is planar, then it produces a $\frac23$-balanced vertex-separator of size $\OO(\sqrt{n})$.
  \item If $G$ has (oriented) genus $g$, then it produces a $\frac23$-balanced vertex-separator of size\\%
    $\OO(\min\{(\log g)^2,\log\Delta\}\sqrt{gn})$.
  \item If $G$ is $K_h$-minor-free, then it produces a $\frac23$-balanced vertex-separator of size\\%
    $\OO(\min\{\log h,\sqrt{\log\Delta}\}(h\log h\log\log h)\sqrt{n})$.
\end{itemize}
\end{corollary}

The final set of bounds we obtain for separators are those that are achieved by reweighted spectral partitioning
when no partitioning oracle is provided:

\begin{corollary}[Separator Bounds via Reweighted Spectral Partitioning with No Partition Oracle]
\label{cor:polytime-constructible-separators}
Let $G$ be a graph with $n$ vertices and maximum degree $\Delta$.
Then reweighted spectral partitioning can be used to compute the following sizes of separators in (Monte Carlo randomized) polynomial time:
\begin{itemize}
  \item If $G$ is planar, then it produces a $\frac23$-balanced vertex-separator of size $\OO(\sqrt{n\log\Delta})$.
  \item If $G$ has (oriented) genus $g$, then it produces a $\frac23$-balanced vertex-separator of size\\%
    $\OO(\min\{\log g \cdot \sqrt{\log\Delta}, \log\Delta\} \cdot \sqrt{gn})$.
  \item If $G$ is $K_h$-minor-free, then it produces a $\frac23$-balanced vertex-separator of size\\%
    $\OO(\sqrt{\log\Delta}\cdot(h\log h\log\log h)\cdot\sqrt{n})$.
  \item If $G$ is a $k$-ply ball-intersection graph in $\RR^d$, then it produces a $\frac23$-balanced vertex-separator
    of size $\OO(\sqrt{\log\Delta}\cdot k^{1/d}n^{1-1/d})$.
  \item If $G$ is a $k$-nearest-neighbour graph in $\RR^d$, then it produces a $\frac23$-balanced vertex-separator
    of size $\OO(\sqrt{\log\Delta}\cdot k^{1/d}n^{1-1/d})\leq\OO(\sqrt{d+\log k}\cdot k^{1/d}n^{1-1/d})$.
\end{itemize}
\end{corollary}

Not providing an oracle has the benefit of keeping the algorithm incredibly simple
(see \cref{sec:rsp}), but the guarantees on
the size of the resulting separators
are weaker.

\subsubsection{Comparison to Other Algorithms}

The main benefit of reweighted spectral partitioning
is that it is a simple polynomial-time algorithm that provides near-optimal separators
for a number of graph classes,
and requires nothing but the graph itself as input.
However, despite its simplicity,
it remains comparable with other known polynomial-time algorithms that do not require anything but the graph itself,
and in some cases it produces improved results.
We provide comparisons to other work in \cref{tab:polytime-comparison}.

\begin{table}[H]
\centering
\everymath{\displaystyle}
\begin{TAB}(3,1pt,1pt)[4pt]{|c|c|c|}{|c|c|c|c|c|}
\textbf{Graph class} & \textbf{This work} & \textbf{Previous work}\\
Genus-$g$ & $\OO(\min\{(\log g)^2\sqrt{gn},\boldsymbol{\log\Delta\sqrt{gn}}\})$ & \makecell{$\OO(\min\{\boldsymbol{(\log g)\sqrt{gn}},\text{poly}(\Delta)\sqrt{gn}\})$\\\cite{conroy2025protect,kelner2006spectral}}\\
$K_h$-minor-free & $\OO(\min\{\log h,\sqrt{\log\Delta}\}(h\log h\log\log h)\sqrt{n})$ & \makecell{$\boldsymbol{\OO((\log h) h\sqrt{n})}$\\\cite{conroy2025protect}}\\
\makecell{$k$-ply ball-int-\\ersection in $\RR^d$} & $\boldsymbol{\OO(\sqrt{\log\Delta}\cdot k^{1/d}n^{1-1/d})}$ & \makecell{$\OO(\sqrt{\log n}\cdot dk^{1/d}n^{1-1/d})$\\\cite{FeigeHL08}}\\
\makecell{$k$-nearest-nei-\\ghbour in $\RR^d$} & $\boldsymbol{\OO(\sqrt{\log\Delta}\cdot k^{1/d}n^{1-1/d})}$ & \makecell{$\OO(\sqrt{\log n}\cdot dk^{1/d}n^{1-1/d})$\\\cite{FeigeHL08}}\\
\end{TAB}%
\caption{%
Comparison of our performance guarantees for reweighted spectral partitioning
with existing (Monte Carlo randomized) polynomial-time methods.
The best bound is always bolded (multiple are bolded if the best bound depends on parameters).
Note that one of the previous works' bounds for genus-$g$ graphs requires
a fix given in \cref{subsec:genus-graphs-geometric-bound}.
}
\label{tab:polytime-comparison}
\end{table}

\section{Technical Overview, Related Work, and Additional Results}

In this section, we will briefly outline some related work in spectral graph theory,
and then discuss our techniques for obtaining bounds.
In particular, we will use two different families of techniques for obtaining bounds,
and we will discuss each one separately.

\subsection{Fiedler Bounds and the Cheeger Inequality}

For an undirected graph $G$ with $n$ vertices,
let its adjacency matrix (or weight matrix) be denoted $A$,
and the diagonal matrix formed from its degrees (or sums of incident weights) be denoted $D$.
The Laplacian matrix of $G$
is $L(G):=D-A$,
and its eigenvalues are denoted
$0=\lambda_1(G)\leq\lambda_2(G)\leq\lambda_3(G)\leq\cdots\leq\lambda_n(G)$.
The second-smallest Laplacian eigenvalue
($\lambda_2(G)$)
was named the \defn{algebraic connectivity} of a graph
by Fiedler~\cite{fiedler1973algebraic},
and is sometimes called the \defn{Fiedler value} of
a graph.
As the name would suggest,
algebraic connectivity is related
to a combinatorial definition of connectivity:

For a graph $G=(V,E)$, the \defn{edge expansion} of a set $S\subset V$ with $|S|\leq\frac n2$
is defined as
$\phi(S):=\frac{|\{uv\in E\cap S\times(V\setminus S)\}|}{|S|}$,
and the edge expansion of the graph $G$ is defined as
$\phi(G):=\min_{|S|\leq\frac n2}\phi(S)$.

The edge expansion $\phi(G)$ and the Fiedler value $\lambda_2(G)$
are related by the following important result:

\begin{theorem}[Cheeger inequality for edge expansion \cite{alon1985lambda1, mohar1989isoperimetric, jerrum1988conductance}]
\label{thm:cheeger-inequality-edge-expansion}
For a graph $G$
with maximum degree $\Delta$,
$$\frac{\phi(G)^2}{2\Delta}\leq\lambda_2(G)\leq2\phi(G).$$
\end{theorem}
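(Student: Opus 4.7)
The plan is to prove the two sides of the inequality separately via the variational characterization
\[
\lambda_2(G)=\min_{x\perp\mathbf{1}}\frac{\sum_{(i,j)\in E}(x_i-x_j)^2}{\sum_i x_i^2},
\]
which follows from the spectral theorem applied to the symmetric positive semidefinite matrix $L(G)$, noting that $\mathbf{1}$ is the eigenvector for $\lambda_1(G)=0$. The upper bound $\lambda_2(G)\leq 2\phi(G)$ is then a direct test-vector computation: letting $A$ attain the minimum edge expansion with $|A|\leq n/2$, I would plug in $x_i:=1-|A|/n$ for $i\in A$ and $x_i:=-|A|/n$ otherwise. Then $x\perp\mathbf{1}$, the numerator of the Rayleigh quotient collapses to $|\delta(A)|$ (only cut edges contribute, each with $(x_i-x_j)^2=1$), and the denominator simplifies to $|A|(1-|A|/n)\geq |A|/2$, yielding $\lambda_2(G)\leq 2|\delta(A)|/|A|=2\phi(G)$.

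The lower bound $\phi(G)^2/(2\Delta)\leq\lambda_2(G)$ is the harder direction, and proceeds by a sweep-cut argument applied to a Fiedler eigenvector $v$ (satisfying $Lv=\lambda_2 v$ and $v\perp\mathbf{1}$). After possibly flipping the sign of $v$, I may assume that the positive support $\{i:v_i>0\}$ has at most $n/2$ vertices, and then set $y_i:=\max(v_i,0)$. A key preliminary claim is that the Rayleigh quotient of the truncated vector $y$ is still bounded by $\lambda_2(G)$, even though $y$ is not orthogonal to $\mathbf{1}$. To see this, note that for each $i$ in the support of $y$, the identity $(Lv)_i=\lambda_2 v_i$, combined with $y_j\geq v_j$ for all $j$, gives $(Ly)_i=\deg(i)\,y_i-\sum_{j\sim i}y_j\leq (Lv)_i=\lambda_2 y_i$; multiplying by $y_i$ and summing yields $y^T L y\leq\lambda_2\|y\|^2$.

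Now I sweep: define $S_t:=\{i:y_i>t\}$ for $t>0$. Each nonempty $S_t$ lies in the positive support of $v$ and so has size at most $n/2$, giving $|\delta(S_t)|\geq\phi(G)|S_t|$. Integrating against the measure $2t\,dt$ and applying Fubini yields
\[
\sum_{(i,j)\in E}|y_i^2-y_j^2|=\int_0^\infty 2t\,|\delta(S_t)|\,dt\geq\phi(G)\int_0^\infty 2t\,|S_t|\,dt=\phi(G)\|y\|^2.
\]
Factoring $|y_i^2-y_j^2|=|y_i-y_j|(y_i+y_j)$ and applying Cauchy--Schwarz,
\[
\sum_{(i,j)\in E}|y_i^2-y_j^2|\leq\sqrt{\sum_{(i,j)\in E}(y_i-y_j)^2}\cdot\sqrt{\sum_{(i,j)\in E}(y_i+y_j)^2}\leq\sqrt{\lambda_2\|y\|^2}\cdot\sqrt{2\Delta\|y\|^2},
\]
where the last step uses $(y_i+y_j)^2\leq 2(y_i^2+y_j^2)$ together with $\sum_{(i,j)\in E}(y_i^2+y_j^2)=\sum_i\deg(i)\,y_i^2\leq\Delta\|y\|^2$. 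Chaining the two displays gives $\phi(G)\leq\sqrt{2\Delta\,\lambda_2(G)}$, as required. The main obstacle is the Rayleigh-quotient bound on the truncated vector $y$: this is the only step that actually uses $v$ being an honest eigenvector rather than an arbitrary test vector, and without it the sweep gives no useful bound. Everything downstream is the standard sweep-plus-Cauchy--Schwarz manipulation.
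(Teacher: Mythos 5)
The paper states this theorem as a cited background result (from Alon--Milman, Mohar, and Jerrum--Sinclair) and does not supply a proof, so there is no in-paper argument to compare against. Your proof is correct and is the standard one: the upper bound follows by plugging the mean-centered indicator of the optimal set into the Rayleigh quotient, and the lower bound is the usual sweep-cut argument, where the one nontrivial ingredient --- that truncating the Fiedler vector to its positive part does not increase the Rayleigh quotient --- is handled properly by using $(Ly)_i \leq (Lv)_i = \lambda_2 y_i$ on the positive support before applying the coarea identity and Cauchy--Schwarz.
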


This inequality is very similar to
\cref{thm:refined-cheeger} in a number of ways.
In particular,
there is a simple near-linear-time algorithm, called \defn{spectral partitioning},
that can produce a set $S$ with $\phi(S)\lesssim\sqrt{\Delta\lambda_2(G)}$~\cite{koutis2011nearly}.
Very similarly to our results,
one can obtain direct bounds on $\lambda_2(G)$
for several special graph classes
(see \cref{tab:lambda2-bounds}).
Consequently,
this algorithm can be used to produce good cuts,
and even near-optimal balanced vertex-separators if the maximum degree is constant.

\begin{table}[h]
\centering
\everymath{\displaystyle}
\begin{TAB}(5,3pt,3pt)[5pt]{|l|c|c|}{|c|c|c|c|c|c|c|c|}
\textbf{Graph class} & $\lambda_2(G)\lesssim$ & \textbf{Refs.}\\
Planar & $\dfrac{\Delta}{n}$ & \cite{spielman1996spectral, spielman2007spectral}\\
Genus-$g$ & $\dfrac{\Delta\,g(\log g)^2}{n}$ & \cite{biswal2010eigenvalue,lee2010genus}\\
Triangulated\footnotemark\ genus-$g$ & $\dfrac{\text{poly}(\Delta)\cdot g}{n}$ & \cite{kelner2004spectral,kelner2006spectral,kelner2006new}\\
Genus-$g$ & $\dfrac{\Delta\,g\cdot\log\Delta}{n}$ & This work (\cref{cor:fiedler-genus-bound})\\
$K_h$-minor-free & $\dfrac{\Delta\,(h\log h\log\log h)^{2}}{n}$ & \cite{biswal2010eigenvalue,conroy2025protect,delcourt2025reducing}\\
$k$-ply balls in $\RR^d$ & $\Delta\left(\frac kn\right)^{\frac2d}$ & \cite{spielman1996spectral,spielman2007spectral}\\
$k$-nearest-neighbour graph in $\RR^d$ & $\Delta\left(\frac kn\right)^{\frac2d}$ & \cite{spielman1996spectral,spielman2007spectral}
\end{TAB}%
\caption{Best-known asymptotic upper bounds on $\lambda_2(G)$ for several graph classes,
  for $n$-vertex graphs of maximum degree $\Delta$.
}
\label{tab:lambda2-bounds}
\end{table}

\subsection{The Fastest-Mixing Markov Chain Problem and the Cheeger-Style Inequality for Vertex Expansion}

Since the Fiedler value $\lambda_2(G)$
is associated with higher connectivity,
it is natural to ask if different weights along the edges
of the graph can result in higher values of $\lambda_2(G)$.
Without any constraints on the weights,
the value of $\lambda_2(G)$ is unbounded.
A natural normalization constraint is to require
that the total weight around each vertex is at most $1$.
Boyd, Diaconis and Xiao
studied exactly this problem,
which they called
the fastest-mixing (reversible) Markov chain problem~\cite{boyd2004fastest}:

\begin{equation*}
\label{eq:lambda_2_star}
\everymath{\displaystyle}
\begin{array}{r c c r c l l}
\lambda_2^*(G) & := & \max_{P\geq0} & \lambda_2(I-P)\\
&& \textup{subject to} & P(u,v) & = & 0 & \forall uv\not\in E\cup\{vv:v\in V(G)\}\\
&& & \sum_{v\in V} P(u,v) & = & 1 & \forall u\in V\\
&& & P(u,v) & = & P(v,u) & \forall uv\in E\cup\{vv:v\in V(G)\}
\end{array}
\end{equation*}

Here $P$ is a reweighted adjacency matrix of $G$ (with self-loops)
constrained to produce a reversible Markov chain with
a uniform stationary distribution.
Hence, $I$ is the degree-matrix for the weighted graph,
and $I-P$ is the Laplacian matrix,
so $\lambda_2(I-P)$ is the Fiedler value.

\footnotetext{Due to some subtle issues, this bound does not apply to general genus-$g$ graphs, which is what the references claim. We both discuss and amend this issue in
\cref{subsec:genus-graphs-geometric-bound}.
\todo[inline]{Verify this footnote is on the same page as the eigenvalue table before submitting.}
}

There are two simple and important relationships
relating $\lambda_2(G)$, $\lambda_2^*(G)$, and $\gamma^{(n)}(G)$:
First, it can be shown that $\lambda_2^*(G)=\gamma^{(n)}(G)$
by a duality argument~\cite{boyd2004fastest,roch2005bounding}.
Second, $\lambda_2(G)\leq\Delta\cdot\lambda_2^*(G)$.
Therefore,
\cref{thm:all-base-gamma-bounds}
actually generalizes the known bounds on $\lambda_2(G)$.
In fact, one of the resulting bounds is new:

\begin{corollary}
  \label{cor:fiedler-genus-bound}
  If $G$ is a genus-$g$ graph, then its Fiedler value satisfies
  $\lambda_2(G)\lesssim\frac{g\cdot\Delta\log\Delta}n$.
\end{corollary}

We will henceforth use $\gamma^{(n)}(G)$ instead of $\lambda_2^*(G)$
since they are equal.

There has been significant prior work
towards relating $\gamma^{(n)}$ to vertex-expansion.
In particular,
our \cref{thm:refined-cheeger}
is a refinement of the following result:

\begin{theorem}[Cheeger inequality for vertex expansion~\cite{roch2005bounding, olesker2022geometric, jain2022dimension, kwok2022cheeger}]
\label{thm:cheeger-inequality-vertex-expansion}
For an undirected graph $G$
with maximum degree $\Delta$
$$\frac{\psi(G)^2}{\log\Delta}\lesssim\gamma^{(n)}(G)\lesssim\psi(G).$$
\end{theorem}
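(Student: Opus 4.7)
The plan is to prove the two inequalities separately, with the lower bound being by far the harder direction.

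\textbf{Upper bound $\lambda_2^*(G)\lesssim\psi(G)$.} Since $\lambda_2^*(G)=\gamma^{(n)}(G)\leq\gamma^{(1)}(G)$, it suffices to exhibit a feasible solution to $\gamma^{(1)}(G)$ of value $O(\psi(G))$. I would take $A\subset V$ realizing $\psi(G)$ with $|A|\leq n/2$, and define $f:V\to\RR$ by $f(v)=1$ for $v\in A$ and $f(v)=-\alpha$ for $v\notin A$, with $\alpha=|A|/(n-|A|)\in(0,1]$ so that $\sum_v f(v)=0$. Then $\sum_v f(v)^2=|A|(1+\alpha)\geq|A|$, and the only edges contributing a nonzero right-hand side are crossing edges, for which $\|f(u)-f(v)\|^2/\sum_x f(x)^2\leq2/|A|$. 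Since the $V\setminus A$ endpoint of every crossing edge lies in $\partial A$, setting $y(v)=2/|A|$ for $v\in\partial A$ and $y(v)=0$ otherwise gives feasibility with $\sum_v y(v)=2|\partial A|/|A|=2\psi(G)$.

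\textbf{Lower bound $\psi(G)^2/\log\Delta\lesssim\lambda_2^*(G)$.} I would take an optimal $(f^*,y^*)$ attaining $\gamma^{(n)}(G)=\lambda_2^*(G)$, normalized so that $\sum_v\|f^*(v)\|_2^2=1$, and round it to a set $A\subset V$ with $|A|\leq n/2$ and $|\partial A|/|A|\lesssim\sqrt{\lambda_2^*(G)\log\Delta}$. The plan is a three-stage procedure: (i) dimension-reduce $f^*$ via a Johnson--Lindenstrauss-style random Gaussian projection to some moderate dimension $d$, obtaining $g:V\to\RR^d$ that approximately preserves the edge constraints while $(g,y^*)$ remains almost feasible for $\gamma^{(d)}$; (ii) project $g$ onto a random unit direction $\theta\in\RR^d$ to obtain a one-dimensional map $h(v)=\langle g(v),\theta\rangle$; (iii) run a sweep cut $A_t=\{v:h(v)>t\}$ and argue that, for the best choice of $t$ (or averaged over $t$ and $\theta$), $|\partial A_t|/|A_t|\lesssim\sqrt{\lambda_2^*(G)\log\Delta}$. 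In the analysis of step (iii), each $v\in\partial A_t$ has some incident edge $uv$ whose image crosses the threshold, and the dual constraint $y^*(u)+y^*(v)\geq\|f^*(u)-f^*(v)\|_2^2$ charges this event; Gaussian anti-concentration combined with $\sum_v y^*(v)=\lambda_2^*(G)$ yields the bound.

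\textbf{Main obstacle.} The hardest step is (i): choosing $d$ so that the final loss is $\sqrt{\log\Delta}$ rather than the naive $\sqrt{\log n}$ one gets by preserving all $\binom{n}{2}$ distances. The improvement exploits that only edge constraints matter, and that only at most $\Delta$ of them involve any given vertex, so a per-vertex union bound should suffice. The delicate point is that the constraint right-hand side is normalized by $\sum_x\|f(x)\|_2^2$, a global quantity mixing contributions from all vertices rather than from a single neighborhood; controlling how this denominator transforms under projection, simultaneously with preserving each edge's numerator, is what forces the specialized construction in the cited works of Kwok--Lau--Lee and Jain--Pham--Vuong. Once that is established, steps (ii) and (iii) follow the standard Cheeger-rounding template for vertex expansion.
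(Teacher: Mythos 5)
Your upper bound argument is a correct, self-contained proof of the ``easy'' direction $\lambda_2^*(G)\le\gamma^{(1)}(G)\lesssim\psi(G)$, and it coincides with the standard construction (it is essentially the easy half of \cref{thm:cheeger-vertex-one-dim}). The paper itself does not reprove \cref{thm:cheeger-inequality-vertex-expansion}; it observes that the theorem follows by composing two cited results: \cref{thm:cheeger-vertex-one-dim} ($\psi(G)^2\lesssim\gamma^{(1)}(G)\lesssim\psi(G)$, Olesker-Taylor--Zanetti / Kwok--Lau--Tung) with the dimension-reduction bound \cref{thm:degree-dim-red} ($\gamma^{(1)}(G)\lesssim\gamma^{(n)}(G)\log\Delta$, Kwok--Lau--Tung). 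Your plan for the hard direction has the right overall architecture --- project the optimal $(f^*,y^*)$ to one dimension, charge crossing edges to the dual slack, and sweep --- but it is not a proof. You explicitly defer the key technical step, saying that step (i) ``is what forces the specialized construction'' and that the rest ``follows the standard template.'' That deferred step is precisely the content of \cref{thm:degree-dim-red}, so the proposal as written leaves the hard direction unproven.

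Two more concrete remarks on the sketch. First, the intermediate-dimension stage (i) followed by a 1D projection in (ii) is redundant relative to the cited proofs: Kwok--Lau--Tung and Jain--Pham--Vuong both project directly to $\RR^1$ and gain the $\log\Delta$ factor by \emph{modifying $y$}, not by JL-compressing $f$ first. The loss arises in how much $\sum_v y(v)$ must grow so that the 1D edge constraints hold, and the per-vertex structure (each vertex sees only $\le\Delta$ edge constraints, so one charges each vertex for the expected maximum Gaussian squared projection among its incident edge vectors, which is $\OO(\log\Delta)$) is exactly what you intuit but do not execute. Second, your ``delicate point'' about the global denominator $\sum_x\|f(x)\|_2^2$ is overstated: after projection this denominator is a weighted chi-squared that concentrates and is easily handled by normalization (or by a Markov argument in one direction); the genuine delicacy lies entirely in the numerators of the edge constraints and the replacement of a global union bound by a per-vertex one. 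So the proposal is correct in spirit and matches the approach of the cited works, but the lower bound remains a plan with the hardest step flagged rather than filled in.
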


This line of work is what gives rise to the name
\emph{reweighted} spectral partitioning.
In particular, the most recent improvements to this result
were in the dimension-reduction step of the algorithm,
improving a $\log n$ factor to the current $\log\Delta$ factor~\cite{jain2022dimension,kwok2022cheeger}.
Our result further improves this dimension-reduction step to
get a $\min\{\alpha(G)^2,\log\Delta\}$ factor.

\subsection{Bounds via Geometry with Ball-Intersection Graphs}

Our first family of methods for obtaining upper bounds
makes use of ball-intersection graphs.
These are a very natural object to consider:
For any feasible solution to $\gamma^{(d)}(G)$,
we will show that $f,y$ induce a ball-intersection graph $H=(V,F)$ with
centre $f(v)$ and radius $2\sqrt{y(v)}$ for each vertex $v\in V$.
In fact, we will reformulate $\gamma^{(d)}(G)$
into a new form that more directly relates to such a representation,
at only a constant-factor loss.

We will use this form to give upper bounds on $\gamma^{(d)}(G)$
for planar graphs, genus-$g$ graphs,
ball-intersection graphs of bounded ply,
and $k$-nearest-neighbour graphs.
Specifically, we will obtain the following bounds:

\begin{proposition}
\label{prop:planar-lambda-2-star}
Let $G$ be a planar graph with $n$ vertices and maximum degree $\Delta$.
Then,
$$\gamma^{(1)}(G)\lesssim\gamma^{(3)}(G)\leq\frac 8n.$$
\end{proposition}

\begin{proposition}
\label{prop:lambda-2-star-neighbourhood-system}
Let $G$ be the intersection graph of $d$-dimensional balls with ply $k$ and maximum degree $\Delta$.
Then,
$$\gamma^{(d+1)}(G)\lesssim\left(\frac kn\right)^{\frac2d}.$$
\end{proposition}

\begin{corollary}
\label{cor:knn-lambda-2-star-bound}
Let $G$ be a $d$-dimensional $k$-nearest neighbour graph
with $n$ vertices.
Then,
$$\gamma^{(d+1)}(G)\lesssim\left(\frac kn\right)^{\frac2d}.$$
\end{corollary}

\begin{proposition}
\label{prop:genus-geometric-bound}
Let $G$ be a graph with $n$ vertices,
maximum degree $\Delta$, and genus at most $g$.
Then,
$$\gamma^{(1)}(G)\lesssim\frac{g\log\Delta}n.$$
\end{proposition}

Note that
\cref{prop:planar-lambda-2-star} implies the planar separator theorem
when applied to \cref{thm:refined-cheeger},
while
\cref{prop:lambda-2-star-neighbourhood-system,cor:knn-lambda-2-star-bound}
both imply the new separator theorems for their respective classes
mentioned in
\cref{cor:existential-separators}.

Many of the techniques for obtaining these bounds will be similar to existing techniques
for bounding the Fiedler value $\lambda_2(G)$ of graphs
in these classes~\cite{spielman1996spectral,spielman2007spectral,kelner2004spectral,kelner2006spectral,kelner2006new}.
At a high-level, the method is always to map a set of $d$-dimensional balls representing
the graph to a $d$-dimensional (hyper)sphere
(that is, the (hyper)surface of a $(d+1)$-dimensional ball)
so that
the measure of the surface of the (hyper)sphere can be related to the average ply of the balls.
In most cases, it is easier to bound the maximum ply of the balls,
but genus-$g$ graphs require a more robust approach.

\subsubsection{New techniques for bounded-genus graphs}

While most of these bounds can be obtained primarily by adapting known techniques,
we require some very significant novel constructions
to prove
\cref{prop:genus-geometric-bound}.

First, we show that Kelner's method for triangulated genus-$g$ graphs
that obtains a bound of $\lambda_2(G)\lesssim\frac{\text{poly}(\Delta)\cdot g}{n}$
can also be extended to bound $\gamma^{(1)}(G)$.
Kelner's method is quite involved, so there are many aspects that have to be re-checked
in the context of our modified problem,
but this step uses similar methods overall.

However, this is not enough to obtain our bound of
$\gamma^{(n)}(G)\lesssim\frac{g\cdot\log\Delta}n$,
which applies not just to triangulated genus-$g$ graphs
but also to general genus-$g$ graphs.
To obtain this bound, we provide a pair of results:

\begin{lemma}
\label{lemma:genus-degree-reduction}
Let $G$ be a graph with genus $g$, $n$ vertices, and maximum degree $\Delta$.
Then there exists a graph $H$ with $n\Delta$ vertices, maximum degree $4$,
and genus $g$,
so that
$\gamma^{(1)}(G)\lesssim\gamma^{(1)}(H)\cdot\Delta\cdot\log\Delta$.
\end{lemma}

\begin{lemma}
\label{lemma:genus-triangulated-reduction}
Let $G$ be a graph of genus $g$ with $n$ vertices and maximum degree $\Delta$.
Then there is a triangulated genus-$g$ graph $H$ with $(\Delta+1)\cdot n$ vertices and maximum degree $O(\Delta)$, so that
$\gamma^{(1)}(G)\lesssim(\Delta+1)\cdot\gamma^{(1)}(H)$.
\end{lemma}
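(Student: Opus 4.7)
My plan is to follow the template of Lemma~\ref{lemma:genus-degree-reduction} but replace the binary-tree gadget by a constant-diameter ``wheel'' gadget, so that the shallow-minor loss factor becomes $\Delta+1$ rather than $(\Delta+1)\log\Delta$, and to add just enough structure to triangulate. Fix a rotation system of $G$ certifying genus $g$. For each $v \in V(G)$, I would introduce a triangulated wheel of $\Delta+1$ vertices in $H$: a center $c_v$ joined by ``spoke'' edges to $\Delta$ port vertices arranged in a cycle (consecutive ports joined by ``rim'' edges). The ports are ordered around the wheel by the rotation at $v$, with one port reserved per incident edge of $G$ and the rest serving as ``dummies''. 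Each edge $uv \in E(G)$ is replaced by a single inter-wheel edge connecting the reserved port at $u$ to the reserved port at $v$. This uses exactly $(\Delta+1)n$ vertices, has max degree $\Delta$ at centers and $O(1)$ at ports, and its canonical embedding has genus $g$ (the wheels are locally planar, and $G$ is recovered by contracting each wheel to its center).

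The wheels are triangulated internally, but each original face of $G$ of length $k$ contributes a non-triangular outer $2k$-gon to this partial embedding, whose boundary alternates between cross-edges and single rim arcs. To triangulate these outer faces while respecting the uniform shallow-minor constraint (every edge of $H$ must stay within a preimage or connect two preimages whose images are $G$-adjacent), I would repurpose dummy ports as Steiner vertices inside the outer faces. An inductive ``ear-cut'' peels off one corner at a time: to remove corner $v_i$ from the face, allocate a Steiner $s \in p^{-1}(v_i)$ (a spare dummy of the wheel at $v_i$) inside the face and connect it to the two neighboring boundary ports by valid inter-preimage edges (using $v_{i-1}v_i, v_iv_{i+1} \in E(G)$). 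After the cut, the polygon shrinks by one vertex; iterating triangulates the face while adding only $O(1)$ new incidences to each port and Steiner. Global counting confirms that the total Steiner demand $\sum_F (k_F-3) = O(|E(G)|)$ stays within the global dummy budget $\sum_v (\Delta - \deg v) = n\Delta - 2|E(G)|$, and the maximum degree of $H$ stays at $O(\Delta)$.

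Since every vertex of $p^{-1}(v)$ lies within distance $2$ of $c_v$ via a spoke, $G$ is a uniform shallow minor of $H$ with $r = \Delta+1$ and $L = 2$, and Lemma~\ref{lemma:uniform-shallow-minor-application} delivers $\gamma^{(1)}(G) \lesssim (\Delta+1)\cdot\gamma^{(1)}(H)$ directly. The main obstacle is the ear-cut step: it requires a spare dummy at the preimage of each corner we wish to peel, which may not be locally available when a face boundary is dominated by vertices of degree close to $\Delta$. Handling this requires either a global re-assignment of spare dummies across wheels (transferring a dummy through an inter-wheel edge into a denser wheel's face) or a light preprocessing that subdivides long face corners before applying the wheel construction; the careful bookkeeping to keep every preimage at size exactly $\Delta+1$ while still triangulating every outer face is the technical heart of the proof.
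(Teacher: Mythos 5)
Your proposal has the right starting shape but gets stuck on a problem that the paper's argument sidesteps entirely. You correctly construct a graph $H_1$ (your wheel gadget) on $(\Delta+1)n$ vertices of which $G$ is a uniform shallow minor with $r = \Delta+1$ and $L = 2$, and you correctly invoke \cref{lemma:uniform-shallow-minor-application} to get $\gamma^{(1)}(G) \lesssim (\Delta+1)\gamma^{(1)}(H_1)$. But then you try to \emph{maintain} the uniform-shallow-minor constraint throughout the triangulation step, and you honestly acknowledge that the resulting bookkeeping (allocating spare dummies as Steiner vertices, keeping every preimage at size exactly $\Delta+1$, handling faces dominated by near-$\Delta$-degree vertices) is unresolved. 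This is the gap: you are solving a harder problem than you need to.

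The key observation you are missing is that $\gamma^{(1)}$ is \emph{monotone non-decreasing under edge addition} --- adding an edge only adds a constraint to a minimization problem --- so once you have $\gamma^{(1)}(G) \lesssim (\Delta+1)\gamma^{(1)}(H_1)$, you may freely add edges to $H_1$ to form $H$ without preserving any minor structure at all, since $\gamma^{(1)}(H_1) \leq \gamma^{(1)}(H)$. In other words, the uniform-shallow-minor certificate only needs to hold for \emph{some subgraph of $H$ on the same vertex set}, not for $H$ itself. The paper exploits this by building $H_1$ with pendant vertices (one per rotation-system gap), establishing the minor bound, and then triangulating purely by edge additions: first inserting a cycle through the new vertices inside each face (so that no two faces of size $>4$ are adjacent, which prevents parallel edges in what follows), then triangulating size-$4$ faces arbitrarily, then applying the zig-zag ear-cut algorithm of Kant and Bodlaender to the remaining faces, which are now non-adjacent and simple. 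This edge-only triangulation works directly on the rotation system, preserves genus, and never touches the preimage structure. If you drop the requirement that the final graph respect the minor constraint, your wheel gadget could likely be completed along similar lines; the Steiner-vertex machinery and its attendant budgeting problems are unnecessary.
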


Both of these results use a new structure we call
\defn{uniform shallow minors},
which are essentially shallow minors
in which all collections of vertices are of exactly equal size,
and all vertices are in some collection.
One could also think
of these as a very special kind of bounded-diameter decomposition.
However, the way we use this structure is in some sense
opposite to how both minors and bounded-diameter decompositions
are typically used:
In each result, we construct a graph $H$ that contains $G$ as a uniform shallow minor.

\subsubsection{A new separator theorem for ball-intersection graphs of bounded ply}
\label{subsubsec:sphere-separator-theorem}

For the case of $d$-dimensional ball-intersection graphs of ply $k$,
and $d$-dimensional $k$-nearest-neighbour graphs,
the sizes of the separators guaranteed to exist by
\cref{cor:existential-separators} are actually
improvements over the previous bests
by a factor of $\frac{\sqrt{\min\{d,\log\Delta\}}}d$.
The previous best bound for both classes was a proof that $\OO(d)$-balanced vertex-separators
exist of size $\OO(k^{1/d}n^{1-1/d})$~\cite{miller1997separators},
which implies the existence of a $\frac23$-balanced vertex separator
of size $\OO(dk^{1/d}n^{1-1/d})$.
While our separators can't be constructed via reweighted spectral partitioning,
they still can be constructed in polynomial time
if the explicit
\emph{geometric representation of the graph} is given.
We will obtain the following results:

\begin{corollary}
\label{cor:new-neighbourhood-separator-algorithm}
Let $G$ be the intersection graph of $d$-dimensional balls with ply $k$ and maximum degree $\Delta$,
provided as a set of $d$-dimensional coordinates and radii for each vertex.
Then, in polynomial time, we can compute a balanced $2/3$-vertex-separator of size $\OO\left(\sqrt{\min\{d,\log\Delta\}}\cdot\left(\frac kn\right)^{\frac1d}\right)$.
\end{corollary}

\begin{corollary}
\label{cor:new-knn-separator-algorithm}
Let $G$ be a $d$-dimensional $k$-nearest neighbour graph with maximum degree $\Delta$,
provided as a set of $d$-dimensional coordinates and a value $k$.
Then, in polynomial time, we can compute a balanced $2/3$-vertex-separator of size
$\OO\left(\sqrt{\min\{d,\log\Delta\}}\cdot\left(\frac kn\right)^{\frac1d}\right)$.
\end{corollary}

\subsection{Bounds via Extremal Spread}

Our second family of methods for obtaining upper bounds on $\gamma^{(n)}$ and $\gamma^{(1)}$
makes use of a widely-used quantity for approximation algorithms:
Extremal spread of (vertex-weighted) shortest-path metrics.
Intuitively, extremal spread is a (normalized) weighting of the vertices in a graph
so that the pairwise distances are maximized.

\begin{definition}
For a graph $G$ and a non-negative vertex weighting function
$\omega:V(G)\to\RR_{\geq0}$,
let $d_G^\omega:V\times V\to\RR_{\geq0}$
be the vertex-weighted shortest-path semi-metric through $G$
with vertex-weights given by $\omega$
(the vertex-weighted length of a path is the sum of the weights of the vertices along it, halving the contributions of the first and last vertex).
The \defn{spread}
of $\omega$ is
$$\sum_{u,v\in V(G)}d_G^\omega(u,v).$$
The \defn{$L^p$-extremal spread} of $G$
is
$$\overline s_p(G):=\sup_{\omega:V(G)\to\RR_{\geq0},||\omega||_p\leq1}\sum_{u,v\in V(G)}d_G^\omega(u,v).$$
\end{definition}

$L^1$-extremal spread is known to be closely related to vertex expansion.
However, in this work, we are exclusively interested in the case of $L^2$-extremal spread,
which has two very important known bounds:

\begin{proposition}[{\cite[Theorem 3.1]{biswal2010eigenvalue}}]
\label{prop:genus-s2-bound}
For a genus-$g$ graph $G$ with $n$ vertices,
if $n\geq 3\sqrt{g}$,
then
$$\overline s_2(G)\gtrsim\frac{n^2}{\sqrt{g}}.$$
\end{proposition}

\begin{proposition}[{\cite{biswal2010eigenvalue,delcourt2025reducing}}]
\label{prop:minor-free-s2-bound}
For a $K_h$-minor-free graph $G$ on $n$ vertices,
if $n\gtrsim h\log\log h$,
then $$\overline s_2(G)\gtrsim\frac{n^2}{h\log\log h}.$$
\end{proposition}

The first of these theorems is optimal up to constant factors,
while the second is optimal up to the $\OO(\log\log h)$ factors.
These $\OO(\log\log h)$ factors
come from the current state of progress towards Hadwiger's conjecture,
a major open problem in graph theory~\cite{Diestel2025}.

The key result we will obtain is as follows:
\begin{proposition}
\label{prop:gamma-s2-relationship}
For a graph $G$ with maximum degree $\Delta$,
$$\gamma^{(1)}(G)
\lesssim
\left[\alpha(G)\right]^2
\cdot\frac{n^3}{\left[\overline s_2(G)\right]^2}.$$
\end{proposition}

\section{The Algorithm}
\label{sec:rsp}

We will now describe the three steps of the reweighted spectral partitioning algorithm in detail.
This is only a reference for the algorithm and its time complexity/variations -- the corresponding theorems
are stated in the next section.

We start by presenting the simplest possible form of the algorithm,
which is Monte Carlo randomized.

\paragraph{Dimension Reduction}
We obtain a solution to $\gamma^{(1)}$ from the solution to $\gamma^{(n)}$.

\begin{algorithm}[H]
\caption{ProjectionReduce($f:V\to\RR^n$, $G=(V,E)$)}
\begin{algorithmic}
\label{alg:proj-red}
\For{$i=1$ to $\Theta(\log n)$}
\State Sample $r_i\sim\mathcal{N}(0,1)^n$
\State $f'_i(v)\gets\langle f(v),r_i\rangle$ for all $v\in V$
\EndFor
\State\Return $f'_i$ maximizing $\sum_{u,v\in V}[f'_i(u)-f'_i(v)]^2$
\end{algorithmic}
\end{algorithm}

\begin{algorithm}[H]
\caption{DistanceReduce($y:V\to\RR$, $G=(V,E)$, RandomPartition)}
\begin{algorithmic}
\label{alg:rp-red}
\State Compute $d(u,v)$ for all $u,v\in V$ as shortest-path distance in $G$ with vertex weights $y$
\State $S_0\gets\left\{\arg\max_{v\in V}\sum_{u\in V}d(u,v)^2\right\}$
\For{$i=1$ to $\Theta(\log n)$}
\State $\mathcal{P}\gets$ RandomPartition$(G,y)$ \Comment{$\mathcal{P}$ is a partition of the vertices}
\State Let $S_i$ be the union of parts $P\in\mathcal{P}$ each sampled with probability $\frac12$
\EndFor
\For{each $S_i$}
\State $f'_i(u)\gets\min_{v\in S_i}d(u,v)$ for all $u\in V$
\EndFor
\State\Return $f'_i$ maximizing $\sum_{uv\in E}[f'_i(u)-f'_i(v)]^2$
\end{algorithmic}
\end{algorithm}

\begin{algorithm}[H]
\caption{BestReduce($f:V\to\RR^n$, $y:V\to\RR$, $G$, RandomPartition $=\bot$)}
\begin{algorithmic}
\label{alg:best-red}
\State $f'_1\gets$ ProjectionReduce($f$, $G$)
\If{RandomPartition $\neq\bot$} \Comment{If optional partition oracle is provided, use it}
\State $f'_2\gets$ DistanceReduce($y$, $G$, RandomPartition)
\State\Return $f'_i$ maximizing $\sum_{uv\in E}[f'_i(u)-f'_i(v)]^2$
\Else
\State\Return $f'_1$
\EndIf
\end{algorithmic}
\end{algorithm}

\paragraph{Sweep Algorithm}
Extract a set of vertices from the one-dimensional embedding.

\begin{algorithm}[H]
\caption{Sweep($f':V\to\RR$, $y:V\to\RR$, $G=(V,E)$)}
\begin{algorithmic}
\label{alg:sweep}
\State Sort vertices by $f'$ value: $v_1,\dots,v_n$
\State Initialize current set of crossing edges $\mathcal{C}\gets\emptyset$
\For{$k=1$ to $n-1$}
\State $A_k\gets\{v_1,\dots,v_k\}$, $B_k\gets\{v_{k+1},\dots,v_n\}$
\State Update $\mathcal{C}$ by adding edges in $v_k\times B_k\cap E$
and removing edges in $v_k\times A_k\cap E$.
\State $S_k\gets\{u : uv\in\mathcal{C}, y(u)\geq y(v)\}\cup\{v : uv\in\mathcal{C}, y(v)>y(u)\}$
\State $\psi_k\gets|S_k|/\min\{|A_k|,|B_k|\}$
\EndFor
\State\Return the partition $(A_k,B_k,S_k)$ minimizing $\psi_k$
\end{algorithmic}
\end{algorithm}

\paragraph{The Complete Algorithms} We can use one pass of all three steps to obtain a cut with small vertex expansion, or use $\OO(n)$ passes to obtain a balanced vertex separator.

\begin{algorithm}[H]
\caption{ReweightedSpectralPartitioning($G=(V,E)$, RandomPartition $=\bot$)}
\begin{algorithmic}
\label{alg:rsp}
  \State \textbf{Step 1}: Solve the SDP $\gamma^{(n)}(G)$, obtaining $f:V\to\RR^n$ and $y:V\to\RR$
  \State \textbf{Step 2}: $f'\gets$ BestReduce($f$, $y$, $G$, RandomPartition)
  \State \textbf{Step 3}: $(A,B,S)\gets$ Sweep($f'$, $y$, $G$)
  \State\Return $(A,B,S)$
\end{algorithmic}
\end{algorithm}

\begin{algorithm}[H]
\caption{BalancedRSP($G=(V,E)$, RandomPartition $=\bot$)}
\begin{algorithmic}
\label{alg:rsp-balanced}
  \State Initialize $S_{\mathrm{union}}\gets\emptyset$
  \State $V_{\mathrm{residual}}\gets V$
  \While{$|V_{\mathrm{residual}}|>\frac23|V|$}
    \State Let $G_{\mathrm{residual}}$ be the induced subgraph of $G$ on $V_{\mathrm{residual}}$
    \State $(A,B,S)\gets$ ReweightedSpectralPartitioning($G_{\mathrm{residual}}$, RandomPartition)
    \State $S_{\mathrm{union}}\gets S_{\mathrm{union}}\cup S$
    \State $V_{\mathrm{residual}}\gets\arg\max_{X\in\{A,B\}}|X|$
  \EndWhile
  \State\Return $S_{\mathrm{union}}$
\end{algorithmic}
\end{algorithm}

Note that
\cref{alg:rsp-balanced}
is exactly the algorithm one can apply to get the guarantees in
\cref{cor:oracle-constructible-separators}
and
\cref{cor:polytime-constructible-separators}.

\subsection{Practical considerations}

As previously mentioned, we have implemented this algorithm (specifically,
\cref{alg:rsp-balanced}),
although without support for partition oracles.
To solve the SDP, we used the Clarabel solver~\cite{Clarabel_2024}.
Unfortunately, practical implementations for general SDPs
are still quite slow and memory-inefficient (they often involve computing the Hessian),
and this was not able to compute $\gamma^{(n)}$
for even planar graphs beyond around 150 vertices.
Boyd, Diaconis, and Xiao described and tested a much more practical projected subgradient method for this problem~\cite{boyd2004fastest},
but it works in the formulation of $\lambda_2^*$ rather than $\gamma^{(n)}$,
and so does not natively provide the vectors $f(v)\in\RR^n$ that we use for dimension-reduction and rounding.
Moreover, it has no clear stopping criterion to guarantee any level of optimality,
and it is not guaranteed to run in polynomial time.

\section{A Refined Cheeger Inequality}
\label{sec:refined-cheeger}

In this section, we will prove \cref{thm:refined-cheeger}
via the algorithm outlined in \cref{sec:rsp}.
Fix a graph $G=(V,E)$ with maximum degree $\Delta$.
Several of the inequalities stated in \cref{thm:refined-cheeger}
are already known:

\begin{theorem}[{\cite[Theorem 2.10]{olesker2022geometric}}]
\label{thm:cheeger-vertex-one-dim}
For a graph $G$,
$$\psi(G)^2\lesssim\gamma^{(1)}(G)\lesssim\psi(G).$$
\end{theorem}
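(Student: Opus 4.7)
The plan is to prove both sides of the inequality separately: $\gamma^{(1)}(G) \lesssim \psi(G)$ via an explicit 1D embedding built from an optimal vertex-expanding set, and $\psi(G)^2 \lesssim \gamma^{(1)}(G)$ via a Cheeger-style threshold rounding of a 1D embedding in which the $y$-values are interpreted as squared radii around the embedded points.

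For the easy direction, given $A$ with $|A| \leq n/2$ and $|\partial A|/|A| = \psi := \psi(G)$, I set $f(v) = a$ on $A$ and $f(v) = -b$ on $V \setminus A$, with $a|A| = b|V \setminus A|$ so that $\sum_v f(v) = 0$. The only edges with $(f(u) - f(v))^2 \neq 0$ are those from $A$ to $\partial A$ (each contributing $(a+b)^2$), so taking $y(v) = (a+b)^2/\sum_x f(x)^2$ on $\partial A$ and $0$ elsewhere is a valid witness. A direct computation gives $\sum_x f(x)^2 = a(a+b)|A|$ and hence $\sum_v y(v) = \psi \cdot n/|V \setminus A| \leq 2\psi$, since $|V \setminus A| \geq n/2$.

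For the hard direction, I start with optimal $(f, y)$ normalized so $\sum_v f(v)^2 = 1$; the edge constraint becomes $y(u) + y(v) \geq (f(u) - f(v))^2$ and $\sum_v y(v) = \gamma := \gamma^{(1)}(G)$. Let $m$ be the median of $f$. By the median property, at least $n/2$ vertices have $f(v)^2 \geq m^2$, so $1 = \sum f^2 \geq (n/2) m^2$, giving $nm^2 \leq 2$ and hence $\sum_v (f(v) - m)^2 = 1 + nm^2 \in [1, 3]$. Consequently the larger of $\sum_v((f(v)-m)_+)^2$ and $\sum_v((f(v)-m)_-)^2$ lies in $[1/2, 3]$; let $g$ be whichever of $(f-m)_+, (f-m)_-$ is larger in $\ell_2^2$-mass. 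Then $g \geq 0$, $|\operatorname{supp}(g)| \leq n/2$, $\sum_v g(v)^2 \in [1/2, 3]$, and the edge constraint $(g(u) - g(v))^2 \leq y(u) + y(v)$ survives because $x \mapsto x_+$ is a contraction. Setting $r_v := \sqrt{y(v)}$, this constraint is equivalent to saying that the intervals $I_v := [g(v) - r_v, g(v) + r_v]$ overlap along every edge.

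Now sweep a threshold on $[0, M]$ with $M := \max_v g(v)$: for each $t$, define $A_t := \{v : g(v) - r_v > t\}$, $S_t := \{v : t \in I_v\}$, and $B_t := \{v : g(v) + r_v < t\}$. The interval overlap forces $S_t$ to be a vertex separator between $A_t$ and $B_t$, so $\partial A_t \subseteq S_t$; also $|A_t| \leq |\operatorname{supp}(g)| \leq n/2$ for $t > 0$. Integrating against $2t\,dt$ on $[0, M]$ gives $\int_0^M |A_t| \cdot 2t\, dt = \sum_v ((g(v) - r_v)_+)^2$ and $\int_0^M |S_t| \cdot 2t\, dt \leq 4\sum_v r_v(g(v) + r_v)$. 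Two applications of Cauchy--Schwarz, using $\sum_v r_v^2 = \gamma$ and $\sum_v g(v)^2 \lesssim 1$, bound the numerator by $\lesssim \sqrt\gamma$ and, in the regime where $\gamma$ is smaller than an absolute constant (which is all we need, since $\psi(G) \lesssim 1 \lesssim \sqrt\gamma$ otherwise), bound the denominator below by $\gtrsim 1$. Picking the minimizing $t$ produces $A_t$ with $|\partial A_t|/|A_t| \lesssim \sqrt\gamma$. The main obstacle is controlling the interaction between the median shift (needed so that $g$ has support at most $n/2$) and the $\ell_2^2$ normalization; the bound $nm^2 \leq 2$ is the mechanism that makes this clean, keeping $\sum g^2$ bounded by a constant so that the final Cauchy--Schwarz step yields the sought $\sqrt\gamma$ dependence without losing factors in $n$ or $\Delta$.
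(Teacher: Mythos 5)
Your proof is correct. Both directions check out: the easy side is the standard two-point embedding with $y$ supported on $\partial A$, and the hard side is the standard Cheeger-style threshold rounding, with the right adaptation for vertex expansion (interpreting $r_v = \sqrt{y(v)}$ as radii, noting that $(g(u)-g(v))^2 \le y(u)+y(v) \le (r_u+r_v)^2$ gives interval overlap across edges, and observing $\partial A_t \subseteq S_t$). The median shift plus the bound $nm^2 \le 2$ keeps $\sum g^2 \in [1/2,3]$ after restricting to the larger of the two halves, which is exactly what makes the Cauchy--Schwarz at the end produce $\sqrt{\gamma}$ with no parasitic factors. The paper itself does not prove this theorem --- it cites Olesker-Taylor and Zanetti (and Kwok, Lau, Tung for the weighted case) --- and your argument is in essence the same one those references use, so there is no substantive divergence to flag.
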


The result that $\psi(G)^2\lesssim\gamma^{(1)}(G)$
was re-proven
by Kwok, Lau, and Tung~\cite{kwok2022cheeger},
who gave a slightly simpler algorithm
that takes the best of a set of candidate cuts
via a sweep.
Our \cref{alg:sweep}
further simplifies their algorithm
and considers a superset of the cuts that they consider.

\begin{theorem}[{\cite[Proposition 3.14]{kwok2022cheeger}}]
\label{thm:degree-dim-red}
For a graph $G$ with maximum degree $\Delta$,
$$\gamma^{(n)}(G)\lesssim\gamma^{(1)}(G)\lesssim\gamma^{(n)}(G)\cdot\log\Delta.$$
\end{theorem}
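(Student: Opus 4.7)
The easy direction $\gamma^{(n)}(G) \lesssim \gamma^{(1)}(G)$ I would dispose of immediately: any 1-dimensional feasible solution $(f,y)$ lifts to an $n$-dimensional one by placing $f(v) \in \RR$ in the first coordinate of a vector in $\RR^n$ and zeros elsewhere. The centering constraint, per-edge constraints, and denominator $\sum_x \|f(x)\|_2^2$ are all preserved exactly, so the objective value transfers.

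For the hard direction, my plan is to use a Gaussian random projection. Fix an optimal $n$-dimensional $(f,y)$ with $\sigma := \sum_x \|f(x)\|_2^2 > 0$, draw $h \sim \mathcal{N}(0, I_n)$, and define $g : V \to \RR$ by $g(v) := \langle h, f(v)\rangle$. By linearity, $\sum_v g(v) = \langle h, \sum_v f(v)\rangle = 0$, so the centering constraint of the 1-dimensional program is preserved automatically. The random quantity $\sigma' := \sum_v g(v)^2 = h^{\top}\bigl(\sum_x f(x)f(x)^{\top}\bigr)h$ is a Gaussian quadratic form with $\EE[\sigma'] = \sigma$, and a Paley--Zygmund estimate (using $\EE[\sigma'^2] \leq 3\sigma^2$ since the underlying matrix is PSD) gives $\Pr[\sigma' \geq \sigma/2] \geq c_0$ for an absolute constant $c_0 > 0$.

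For the 1-dimensional weights I would set $y'(v) := 2\, y(v) \cdot M_v$, with $M_v := \max_{u \sim v} Z_{uv}^2$ and $Z_{uv} := (g(u)-g(v))/\|f(u)-f(v)\|_2$ (adopting the convention $Z_{uv} := 0$ whenever $f(u)=f(v)$). Each $Z_{uv}$ is a standard Gaussian. On the event $\sigma' \geq \sigma/2$, the original edge constraint $y(u)+y(v) \geq \|f(u)-f(v)\|_2^2/\sigma$ gives
\begin{equation*}
y'(u) + y'(v) \;\geq\; 2(y(u)+y(v))Z_{uv}^2 \;\geq\; \frac{2 Z_{uv}^2 \|f(u)-f(v)\|_2^2}{\sigma} \;=\; \frac{2(g(u)-g(v))^2}{\sigma} \;\geq\; \frac{(g(u)-g(v))^2}{\sigma'},
\end{equation*}
so $(g, y')$ is feasible for the 1-dimensional program.

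The hard part is controlling $\EE\bigl[\sum_v y'(v)\bigr] = 2\sum_v y(v)\, \EE[M_v]$, and this is where the $\log \Delta$ factor will appear. The Gaussians $\{Z_{uv} : u \sim v\}$ are \emph{not} independent across the neighbors of $v$ --- they share the single realization of $h$ --- so I cannot use an i.i.d.\ maximum-of-chi-squared bound. Instead I would invoke the standard bound on the maximum of finitely many (possibly correlated) centered unit-variance Gaussians, $\EE[\max_{u \sim v} Z_{uv}^2] \lesssim \log(\deg(v)+1) \leq \log(\Delta+1)$, which holds with no independence assumption. Summing over $v$ gives $\EE[\sum_v y'(v)] \lesssim \log \Delta \cdot \sum_v y(v)$. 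A conditional Markov argument combined with $\Pr[\sigma' \geq \sigma/2] \geq c_0$ then yields a single realization of $h$ that is simultaneously feasible and has objective $\lesssim \log \Delta \cdot \gamma^{(n)}(G)$, giving $\gamma^{(1)}(G) \lesssim \gamma^{(n)}(G) \cdot \log \Delta$ as required.
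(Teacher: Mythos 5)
The paper cites this result from Kwok, Lau, and Tung without reproving it, so there is no in-paper argument to compare against; your Gaussian-projection proof with the locally rescaled weights $y'(v)=2y(v)\max_{u\sim v}Z_{uv}^2$ is a correct and complete reconstruction, and it is essentially the argument used in the cited reference (and in Jain--Pham--Vuong). The one point worth flagging is cosmetic: the maximum-of-Gaussians bound naturally gives $\EE[M_v]\lesssim\log(\deg(v)+1)$, so the final bound is really $\gamma^{(1)}(G)\lesssim\gamma^{(n)}(G)\log(\Delta+1)$, which is the standard reading of the stated $\log\Delta$.
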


The proof that $\gamma^{(1)}(G)\lesssim\gamma^{(n)}(G)\cdot\log\Delta$
uses a random projection and succeeds in expectation,
so \cref{alg:proj-red} replicates this guarantee with high probability.

Thus, the only remaining inequality in \cref{thm:refined-cheeger} left for us to prove is:
$$\gamma^{(1)}(G)\lesssim\gamma^{(n)}(G)\cdot[\alpha(G)]^2.$$
In order to prove the result, we must also do so in a way that
shows
\cref{alg:rp-red} produces a feasible solution for $\gamma^{(1)}(G)$ obtaining this bound
with high probability.

We will do this in several steps.
First, we define the following quantity:

\begin{definition}
\label{def:embedded-spread}
For a graph $G$, we define the $d$-dimensional $L^p$-extremal embedded spread as:
\begin{equation*}
\label{eq:embedded-spread}
\everymath{\displaystyle}
\begin{array}{r l}
    \overline Q^{(d)}_p(G) := &
    \begin{array}{c l}
    \max_{\substack{y:V(G)\to\mathbb{R}_{\geq0},\\f:V(G)\to\mathbb{R}^d}}
    &
    \sum_{u,v\in V(G)} \left|\left|f(u)-f(v)\right|\right|_p^p\\
    \textup{subject to} &
    \begin{array}{r c l l}
    ||y||_1 & \leq & 1\\
    y(u) + y(v) & \geq & \left|\left|f(u)-f(v)\right|\right|_p^p & \forall uv\in E(G).
    \end{array}
\end{array}
\end{array}
\end{equation*}
\end{definition}

For the purposes of this section, $Q^{(1)}_2$ will be a bit easier to work with than $\gamma^{(1)}$,
and the more general definition will also be useful in a later section when we
try to relate $\gamma^{(1)}$ and the $L^2$-extremal spread $\overline s_2$.
$Q^{(d)}_2(G)$ and $\gamma^{(d)}(G)$ are related as follows:

\begin{lemma}
\label{lemma:qd-to-gammad}
For a graph $G$,
$\gamma^{(d)}(G)=\frac{2n}{\overline Q^{(d)}_2(G)}$
\end{lemma}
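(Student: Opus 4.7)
The plan is to exhibit both $\gamma^{(d)}(G)$ and $2n/\overline Q_2^{(d)}(G)$ as the optimal value of a single scale-invariant ratio program, so the equality falls out by matching normalizations. The common feasible set will be
\[
\mathcal{F} := \left\{(f,y) : f:V\to\RR^d,\ y:V\to\RR_{\geq 0},\ \textstyle\sum_v f(v)=0,\ y(u)+y(v)\geq\|f(u)-f(v)\|_2^2\ \forall uv\in E\right\},
\]
and the common objective will be the ratio $\rho(f,y) := \sum_v y(v)\,/\,\sum_v\|f(v)\|_2^2$ over pairs in $\mathcal{F}$ with $f\not\equiv 0$.

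First I would handle $\overline Q_2^{(d)}(G)$. Since both the objective $\sum_{u,v}\|f(u)-f(v)\|_2^2$ and the edge constraints depend only on differences of $f$, translating $f$ by its centroid shows we may assume $\sum_v f(v)=0$ without loss of generality. Under this centroid condition, a routine expansion gives $\sum_{u,v\in V}\|f(u)-f(v)\|_2^2 = 2n\sum_v\|f(v)\|_2^2$. Next I would note that the rescaling $(f,y)\mapsto(cf,c^2 y)$ preserves membership in $\mathcal{F}$ and multiplies both $\|y\|_1$ and $\sum_v\|f(v)\|_2^2$ by $c^2$, so at the optimum $\|y\|_1=1$, which yields
\[
\overline Q_2^{(d)}(G) \;=\; 2n\cdot\sup\!\left\{\sum_v\|f(v)\|_2^2 \;:\; (f,y)\in\mathcal{F},\ \textstyle\sum_v y(v)=1\right\} \;=\; \frac{2n}{R(G)},
\]
where $R(G) := \inf_{(f,y)\in\mathcal{F},\,f\not\equiv 0}\rho(f,y)$, by the same scaling argument applied to the ratio.

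Next I would handle $\gamma^{(d)}(G)$. Here the constraint $y(u)+y(v)\geq \|f(u)-f(v)\|_2^2/\sum_x\|f(x)\|_2^2$ is already invariant under $f\mapsto cf$, so one may normalize $\sum_v\|f(v)\|_2^2=1$ and rewrite the constraint as $y(u)+y(v)\geq\|f(u)-f(v)\|_2^2$. Thus $\gamma^{(d)}(G)$ is the infimum of $\sum_v y(v)$ over $(f,y)\in\mathcal{F}$ with $\sum_v\|f(v)\|_2^2=1$, which is exactly the infimum of $\rho(f,y)$ under the same normalization. By homogeneity, this infimum coincides with $R(G)$.

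Combining the two characterizations gives $\gamma^{(d)}(G) = R(G) = 2n/\overline Q_2^{(d)}(G)$. The only non-routine step is the centroid-based simplification of $\sum_{u,v}\|f(u)-f(v)\|_2^2$, which I do not expect to be an obstacle; mild degenerate cases ($f\equiv 0$ or $y\equiv 0$) are excluded because they violate the edge constraint nontrivially or yield an undefined ratio, and can be disposed of at the outset.
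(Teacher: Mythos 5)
Your proof is correct and follows essentially the same route as the paper's: both rely on the centroid identity $\sum_{u,v}\|f(u)-f(v)\|_2^2 = 2n\sum_v\|f(v)\|_2^2$, on translation-invariance to drop the centroid constraint from $\gamma^{(d)}$, and on scale-homogeneity to match normalizations. Your framing via a common ratio program $R(G)$ over the shared feasible set $\mathcal{F}$ is a slightly more modular presentation of the same argument.
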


\begin{proof}
We will make a sequence of transformations to each optimization problem.

We claim that
for any function $f:V\to\RR^d$ for which $\sum_{v\in V}f(v)=\overline0$,
it holds that
$$2n\sum_{v\in V}||f(v)||_2^2=\sum_{u,v\in V}||f(u)-f(v)||_2^2.$$
To prove this claim, let $f_i(v)$ denote the $i$th coordinate of the vector $f(v)$.
Then,
\begin{equation*}
\everymath{\displaystyle}
\begin{aligned}
    \sum_{u,v\in V}||f(u)-f(v)||^2
    &=\sum_{u,v\in V}\sum_{i\in[d]}(f_i(u)-f_i(v))^2\\
    &=\sum_{u,v\in V}\sum_{i\in[d]}\left[f_i(u)^2+f_i(v)^2-2f_i(u)f_i(v)\right]\\
    &=\sum_{i\in[d]}\sum_{u\in V}\sum_{v\in V}
    \left[f_i(u)^2+f_i(v)^2-2f_i(u)f_i(v)\right]\\
    &=\sum_{i\in[d]}\sum_{u\in V}
    \left[\sum_{v\in V}\left(f_i(u)^2+f_i(v)^2\right)-2f_i(u)\sum_{v\in V}f_i(v)\right]\\
    &=\sum_{i\in[d]}\sum_{u\in V}
    \sum_{v\in V}\left[f_i(u)^2+f_i(v)^2\right]\\
    &=\sum_{i\in[d]}2n\sum_{x\in V}f_i(x)^2\\
    &=2n\sum_{x\in V}||f(x)||^2.
\end{aligned}
\end{equation*}

Therefore,
\begin{equation*}
\everymath{\displaystyle}
\begin{array}{r c c r c l l}
\gamma^{(d)}(G) & =
& 2n\cdot\min_{\substack{f:V\to\RR^d\\y:V\to\RR_{\geq0}}} & \frac{\sum_{v\in V}y(v)}{\sum_{u,v\in V}||f(u)-f(v)||_2^2} \\
&& \textup{subject to} & \sum_{v\in V} f(v) & = & \hspace{1em}\overline{0}\\
&& & y(u)+y(v) & \geq & \hspace{1em}||f(u)-f(v)||_2^2 & \forall uv\in E.
\end{array}
\end{equation*}

No vector $f(u)-f(v)$ is affected by translation,
so the constraint $\sum_{v\in V}f(v)=\overline 0$ is unnecessary,
and thus
the reciprocal of the inner fractional program here is exactly
$\overline Q_2^{(d)}$.
\end{proof}

The key lemma is as follows:

\begin{lemma}
\label{lemma:dim-red-Q-alpha}
For any graph $G$, and any $d,p\geq1$,
$\overline Q_p^{(d)}(G)\leq C_p\cdot\left[\alpha(G)\right]^p\cdot\overline Q_p^{(1)}(G),$
where $C_p$ is a constant dependent only on $p$.
Moreover, given a feasible solution to $\overline Q_p^{(d)}(G)$
and a padded partition scheme with padding $\alpha$,
a corresponding feasible solution to $Q_p^{(1)}(G)$ satisfying this inequality
can be found using \cref{alg:rp-red} with high probability.
\end{lemma}

\begin{proof}
Let $f,y$ be the optimal solution to $Q_p^{(d)}$.
Let $\omega(v):=y(v)^{\frac1p}$.
Let $d_\omega$ be the shortest-path metric over $G$
that uses $\omega$ as vertex weights,
so a path $v_1,\dots,v_k$ has weight $\frac12\omega(v_1)+\omega(v_2)+\cdots+\omega(v_{k-1})+\frac12\omega(v_k)$.
Biswal, Lee, and Rao~\cite[Theorem 4.4]{biswal2010eigenvalue} showed that
applying
\cref{alg:rp-red} (or a generalization of this algorithm for arbitrary $p$)
embeds the metric $(V,d_\omega)$ into $\RR$
with a function $f':V\to\RR$
so that:
\begin{itemize}
    \item For every pair of vertices $u$ and $v$, $|f'(u)-f'(v)|\leq d_\omega(u,v)$.
    \item With high probability, $C_p'\cdot\alpha(V,d_\omega)^p\sum_{u,v\in V}|f'(u)-f'(v)|^p\geq\sum_{u,v\in V}d_\omega(u,v)^p$,
        for some positive constant $C_p'$ depending only on $p$.
\end{itemize}
To verify that $f',y$ forms a feasible solution to $Q_p^{(1)}$,
consider an edge $uv\in E$.
For such an edge,
the shortest path is always to use the edge,
so $d_\omega(u,v)=\frac12\left[\omega(u)+\omega(v)\right]$.
Therefore,
$|f'(u)-f'(v)|^p\leq d_\omega(u,v)^p
\leq\left(\frac{\omega(u)+\omega(v)}2\right)^p
\leq\frac{\omega(u)^p+\omega(v)^p}2
<y(u)+y(v)$,
where the penultimate inequality follows from Jensen on the convex mapping $x\mapsto x^p$.

We now bound the objective value.
For each $u,v\in V$,
there is some path $u=w_1,w_2,\dots,w_k=v\in V$
with
\begin{align*}
d_\omega(u,v)
&=\frac12\omega(w_1)+\omega(w_2)+\cdots+\frac12\omega(w_k)\\
&=\frac12\left[\left(\omega(w_1)+\omega(w_2)\right)+\left(\omega(w_2)+\omega(w_3)\right)
+\cdots+\left(\omega(w_{k-1})+\omega(w_k)\right)\right].
\end{align*}
For each $i\in\{1,\dots,k-1\}$ we have $w_iw_{i+1}\in E$,
so $\left[y(w_i)+y(w_{i+1})\right]^{\frac1p}\geq||f(w_i)-f(w_{i+1})||_p$.
Moreover, since $x\mapsto x^{\frac1p}$ is subadditive for $x\geq0$,
$$\omega(w_i)+\omega(w_{i+1})
=y(w_i)^{\frac1p}+y(w_{i+1})^{\frac1p}
\geq\left[y(w_i)+y(w_{i+1})\right]^{\frac1p}\geq||f(w_i)-f(w_{i+1})||_p.$$
The triangle inequality then gives
$d_\omega(u,v)\geq\frac12||f(u)-f(v)||_p$, and so
$$C_p'\cdot\alpha(V,d_\omega)^p\sum_{u,v\in V}|f'(u)-f'(v)|^p
\geq\sum_{u,v\in V}d_\omega(u,v)^p
\geq\frac12\sum_{u,v\in V}||f(u)-f(v)||^p.
$$
Picking $C_p:=2C_p'$ then gives the stated inequality.
\end{proof}

The combination of
\cref{lemma:qd-to-gammad} and \cref{lemma:dim-red-Q-alpha}
therefore completes the proof of
\cref{thm:refined-cheeger}.

\section{Bounds via Geometric Intersection Graphs}
\label{sec:bounds-geometric}

In this section, we prove
\cref{prop:planar-lambda-2-star-restatement,%
prop:lambda-2-star-neighbourhood-system,%
cor:knn-lambda-2-star-bound,%
prop:genus-geometric-bound}.
Each of these is a bound on $\gamma^{(d)}$ for a different graph class (and a few different values of $d$).
In particular, each of these bounds uses the relationships these classes have
with a few special kinds of $d$-dimensional ball-intersection graphs,
and how such ball-intersection graphs relate directly to $\gamma^{(d)}$.

We start by defining a modified form of $\gamma^{(d)}$:

\begin{equation*}
\label{eq:overline_gamma_d}
\everymath{\displaystyle}
\begin{array}{r c c r c l l}
\dot\gamma^{(d)}(G) & := & \min_{\substack{f:V\to\RR^d\\s:V\to\RR_{\geq0}}}
& \frac{\sum_{v\in V}\left(s(v)\right)^2}{\sum_{x\in V}||f(x)||_2^2} \\
&& \textup{subject to} & \sum_{v\in V} f(v) & = & \hspace{1em}\overline{0}\\
&& & s(u)+s(v) & \geq & \hspace{1em}||f(u)-f(v)||_2 & \forall uv\in E
\end{array}
\end{equation*}

\begin{lemma}
\label{lemma:gamma-to-dot-gamma}
For any graph $G$, and any value $d\geq1$, $\frac12\gamma^{(d)}\leq\dot\gamma^{(d)}(G)\leq\gamma^{(d)}(G)$.
\end{lemma}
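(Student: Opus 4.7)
The plan is to prove both inequalities by converting feasible solutions between the two programs, exploiting the observation that the constraint in $\gamma^{(d)}$ already has a ``squared'' flavour while the constraint in $\dot\gamma^{(d)}$ is a linear triangle-type inequality. Throughout, let $M := \sum_{x \in V} \|f(x)\|_2^2$ for whichever embedding $f$ is under discussion; we may assume $M > 0$, since the degenerate case $f \equiv \overline 0$ can be handled separately (both objectives are $0$, or both are left undefined, consistently).

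For the upper bound $\dot\gamma^{(d)}(G) \leq \gamma^{(d)}(G)$, I would take an optimal (or near-optimal) solution $(f, y)$ to the program defining $\gamma^{(d)}(G)$ and set
\[
s(v) := \sqrt{y(v) \cdot M}
\]
while keeping the same $f$. Feasibility of $\sum_v f(v) = \overline 0$ is preserved. For the edge constraint, the inequality $y(u) + y(v) \geq \|f(u) - f(v)\|_2^2 / M$ gives $\|f(u) - f(v)\|_2 \leq \sqrt{(y(u)+y(v)) M} \leq \sqrt{y(u) M} + \sqrt{y(v) M} = s(u) + s(v)$, using subadditivity of the square root on non-negatives. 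The objective evaluates to $\sum_v s(v)^2 / M = \sum_v y(v) = \gamma^{(d)}(G)$, as desired.

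For the lower bound $\gamma^{(d)}(G) \leq 2\dot\gamma^{(d)}(G)$, I would go the other direction: take a feasible $(f, s)$ for $\dot\gamma^{(d)}(G)$ and set
\[
y(v) := \frac{2 s(v)^2}{M},
\]
keeping $f$ unchanged. The sum constraint is again preserved. For the edge constraint, from $s(u) + s(v) \geq \|f(u) - f(v)\|_2$ I would square both sides and then apply the elementary inequality $(a+b)^2 \leq 2(a^2 + b^2)$, obtaining $\|f(u) - f(v)\|_2^2 \leq 2(s(u)^2 + s(v)^2) = M \cdot (y(u) + y(v))$, which is exactly what $\gamma^{(d)}$ demands after dividing by $M$. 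Summing, $\sum_v y(v) = 2 \sum_v s(v)^2 / M = 2 \dot\gamma^{(d)}(G)$, giving $\gamma^{(d)}(G) \leq 2 \dot\gamma^{(d)}(G)$ and hence $\tfrac{1}{2} \gamma^{(d)}(G) \leq \dot\gamma^{(d)}(G)$.

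There is no real obstacle here; the only subtlety worth flagging in the write-up is the degenerate case $f \equiv \overline 0$, and a brief sanity check that the factor of $2$ on one side is tight enough for our purposes (it is, since both later applications only need these quantities up to constants). The factor $2$ comes solely from $(a+b)^2 \leq 2(a^2+b^2)$; the reverse direction loses nothing because subadditivity of the square root goes the ``right'' way. I would present the two substitutions $s = \sqrt{yM}$ and $y = 2s^2/M$ as the two ``halves'' of the same scaling, making the symmetry of the argument transparent.
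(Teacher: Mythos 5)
Your proof is correct and is essentially the same as the paper's: both directions use the identical substitutions ($s(v) = \sqrt{y(v)M}$ for $\dot\gamma^{(d)} \leq \gamma^{(d)}$, and $y(v) = 2s(v)^2/M$ for $\gamma^{(d)} \leq 2\dot\gamma^{(d)}$), and your use of $\sqrt{a+b}\leq\sqrt a+\sqrt b$ and $(a+b)^2\leq 2(a^2+b^2)$ is just a rewriting of the paper's $(s(u)+s(v))^2 \geq s(u)^2 + s(v)^2$ and $2s(u)^2+2s(v)^2 \geq (s(u)+s(v))^2$ steps. The note about the degenerate $f\equiv\overline 0$ case is a minor bit of extra care that the paper omits but does not change the argument.
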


\begin{proof}
For any $a,b\in\RR_{\geq0}$,
Jensen's inequality and superadditivity on $x\mapsto x^2$ ($x\geq0$)
imply that
$\frac12(a+b)^2\leq a^2+b^2\leq(a+b)^2$.
Therefore,
the first inequality in the statement holds if we choose $y(v):=2s(v)^2$,
which satisfies the constraints since for any edge $uv\in E$,
$y(u)+y(v)\geq 2\left[s(u)^2+s(v)^2\right]
\geq\left[s(u)+s(v)\right]^2\geq||f(u)-f(v)||_2^2$.
Similarly,
the second inequality in the statement holds if we choose
$s(v):=\sqrt{y(v)}$,
in which case for any edge $uv\in E$,
$\left[s(u)+s(v)\right]^2\geq\left[\sqrt{y(u)}+\sqrt{y(v)}\right]^2\geq y(u)+y(v)\geq||f(u)-f(v)||_2^2$.
\end{proof}

Importantly, $\dot\gamma^{(d)}(G)$ gives us a new geometric interpretation of the problem:
A solution satisfying the constraints corresponds
to a representation of $G$ as a subgraph of a \emph{geometric intersection graph}:
For each vertex $v$, create a $d$-dimensional ball $B_v$ of radius $s(v)$ centered
at $f(v)$.
Construct a new graph $H=(V,F)$ so that $uv\in F$ if and only if $B_u\cap B_v\neq\emptyset$.
Then $H\supset G$.
As a consequence, if we can bound the sum of squared radii for such a geometric representation,
with the normalization constraints $\sum_{x\in V}||f(x)||_2^2=1$ and $\sum_{x\in V}f(x)=\overline0$,
then we can bound $\gamma^{(d)}(G)$.

The special conditions we need for our ball-intersection graphs all involve the concept of \emph{ply}:
For a set of balls $\mathcal B$ in $\RR^d$,
a point $p\in\RR^d$ is said to have \defn{ply} $k$ if there are exactly $k$ balls in $\mathcal B$
that contain $p$.
Then, the \defn{ply} of $\mathcal B$ itself is said to be the maximum
ply value of any point not contained in the boundary of some ball.
Note that, due to this boundary condition, sets of balls with even ply $1$
can still contain intersections.
For planar graphs and the geometric graph classes,
we will be able to work with bounded ply for the full set of balls.
However, for bounded-genus graphs, we instead will
use set of balls containing a very large subset that has bounded ply.
See \cref{fig:planar-circle-packing-example} for some examples of balls
with bounded ply, and their corresponding geometric intersection graphs.
A set of $2$-dimensional balls with ply $1$
is also called a (univalent) \defn{circle packing}.

\begin{figure}
    \centering
    \includegraphics[page=1,scale=0.4]{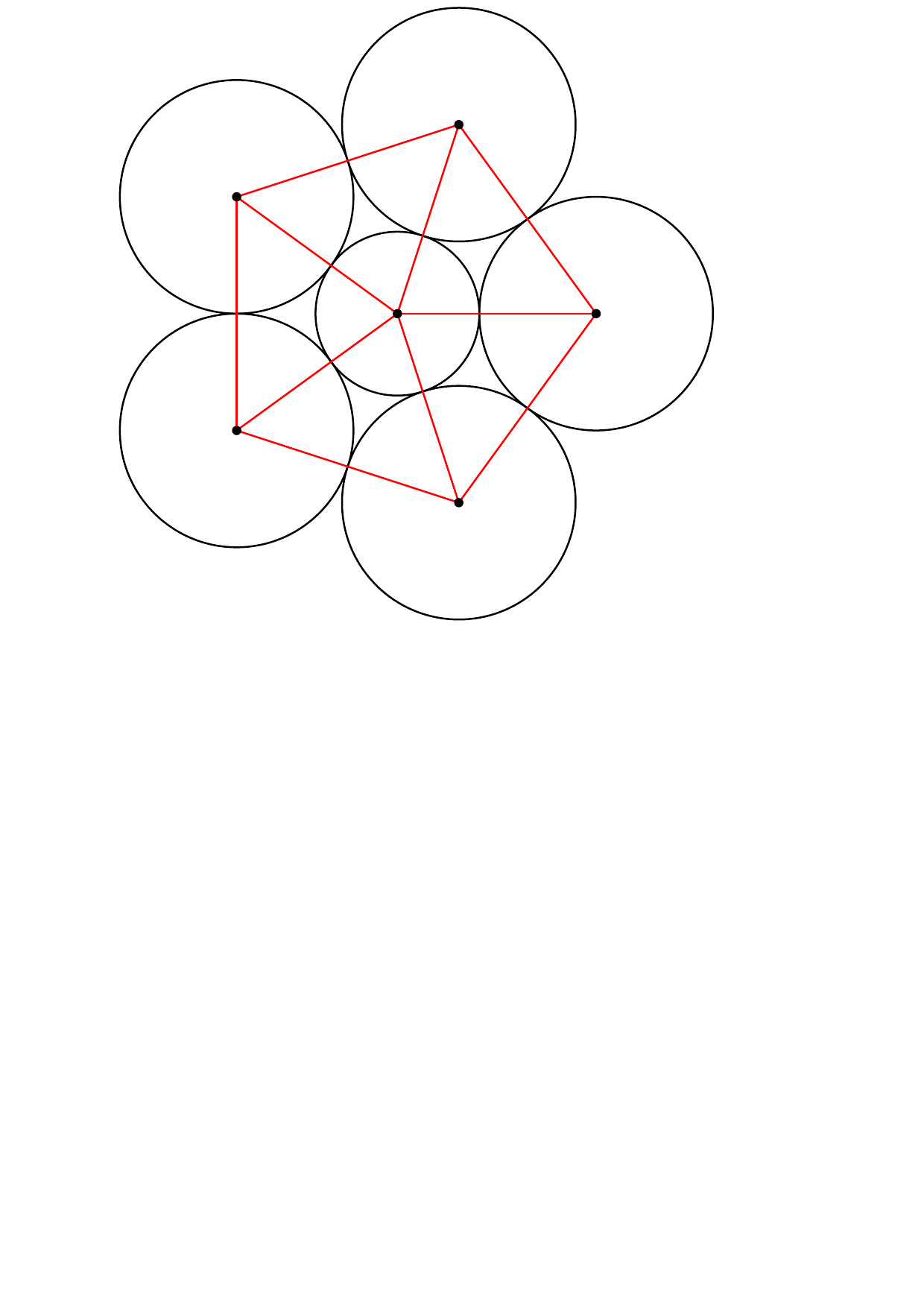}
    \hspace{4em}
    \includegraphics[page=2,scale=0.4]{graphics/touchingcircles-example.pdf}
    \caption{Two examples sets of balls, and the corresponding geometric intersection graphs.
        The left example has ply $1$, while the right example has ply $2$.}
    \label{fig:planar-circle-packing-example}
\end{figure}

It will also be helpful for the results in this section to note the following:
\begin{lemma}\cite[Proof of Proposition 2.9]{olesker2022geometric}.
\label{lemma:easy-dimred}
For a graph $G$ and any value $d\geq1$,
$\gamma^{(1)}(G)\lesssim d\gamma^{(d)}(G)$.
Moreover, given a feasible solution to $\gamma^{(d)}(G)$,
a feasible solution to $\gamma^{(1)}(G)$
satisfying this inequality
can be found in polynomial time.
\end{lemma}

\subsection{Planar Graphs}
\label{subsec:geometric-planar-graphs}

Planar graphs admit
a very well-structured representation as a ball-intersection graph.
This comes from the celebrated
circle packing theorem:

\begin{theorem}[Planar Circle Packing Theorem/Koebe-Andreev-Thurston Theorem \cite{koebe1936kontaktprobleme,andreev1970convex,thurston1979geometry}]
\label{thm:planar-circle-packing}
Let $G$ be a simple undirected graph.
Then $G$ is the geometric intersection graph of $2$-dimensional balls with ply $1$ (a circle packing)
if and only if $G$ is a planar graph.
Moreover, if $G$ is a maximal planar graph,
then this representation is unique up to M\"{o}bius transformations.
\end{theorem}

Spielman and Teng~\cite{spielman1996spectral, spielman2007spectral} made use of this theorem to bound $\lambda_2(G)$
for planar graphs $G$, hence giving a ``spectral'' proof of a weaker version of the planar separator theorem
via \cref{thm:cheeger-inequality-edge-expansion}.
We will use a similar argument to bound $\gamma^{(1)}(G)$,
hence also implying a ``spectral''-type of proof of the planar separator theorem in full generality
via
\cref{thm:refined-cheeger}.
We will make use of one key result of Spielman and Teng to accomplish this:

\begin{theorem}[{\cite[Theorem 4.5]{spielman1996spectral, spielman2007spectral}}]
\label{thm:balls-to-surface}
Let $B_1,\dots,B_n$ be a collection of balls in $\RR^d$
with centres $c_1,\dots,c_n$,
so that no point $x\in\RR^d$ is contained in $\left\lceil\frac n2\right\rceil$ of the balls.
Then there is a homeomorphism $\alpha$
from $\RR^d$ to a subset of the sphere $S^d$,%
\footnote{We use the standard topological convention of $S^d$ as the $d$-dimensional surface of the $(d+1)$-dimensional unit ball embeddable in $\RR^{d+1}$.
This differs from the notation of Spielman and Teng, who used
$S^d$ to denote the $(d-1)$-dimensional surface of the $d$-dimensional unit ball.}
so that $\alpha(B_i)$ is exactly a geodesic ball (or sphere cap) in $S^d$
with center $\alpha(c_i)$,
and moreover so that the centroid of the values $\alpha(c_1),\dots,\alpha(c_n)$,
in the natural representation of $S^d$ as surface of the unit ball of $d+1$ dimensions,
is exactly the origin.
\end{theorem}

This theorem statement is slightly weaker than the statement used by Spielman and Teng, who also described the structure of the homeomorphism as a stereographic projection.
However, this statement is sufficient for our purposes.
In particular, it gives us a method of normalizing a geometric intersection graph of balls in $\RR^d$
to be a geometric intersection graph of (geodesic) balls in the sphere $S^d$.
Importantly, since the theorem gives a homeomorphism, it preserves the ply of each individual point.
From this, we can now obtain the desired bound on $\gamma^{(1)}$ for planar graphs:

\begin{proposition}[Extended restatement of \cref{prop:planar-lambda-2-star}]
\label{prop:planar-lambda-2-star-restatement}
Let $G$ be a planar graph with $n$ vertices and maximum degree $\Delta$.
Then,
$$\gamma^{(1)}(G)\lesssim\gamma^{(3)}(G)\leq\frac 8n.$$
Hence, $\psi(G)^2\lesssim\frac1n$.
\end{proposition}

The method to prove this will be analogous to a proof of
Spielman and Teng~\cite[proof of Theorem 3.3]{spielman1996spectral, spielman2007spectral}.

\begin{proof}
If we can show that
$\gamma^{(3)}(G)\leq\frac8n$,
the other inequalities in the statement follow from previous theorem statements.

By \cref{thm:planar-circle-packing} and \cref{thm:balls-to-surface},
there exists a representation of $G$ as a circle packing on the sphere $S^2\subset\RR^3$.
Let the centers be given by $f:V\to S^2\subset\RR^3$
(i.e., $||f(v)||_2^2=1$ for each $v\in V$).
For each $v\in V$, let $s(v)$
be the longest Euclidean distance from $f(v)$
to a point in its geodesic ball on $S^2$, embedded in $\RR^3$.
See \cref{fig:sphere-cap} for an example of these values.
Note that the surface area of the geodesic ball is given exactly by $\pi s(v)^2$.
We claim first that $f,s$ form a feasible solution to $\dot\gamma^{(3)}(G)$:
The constraint $\sum_{v\in V}f(v)=\overline 0$ is exactly
the statement that the centroid of the centres is the origin,
which is given by \cref{thm:balls-to-surface}.
The other constraints follow from the fact that
two balls intersect if and only if they share an edge.
Next, we bound the objective value:
It follows from the statement of \cref{thm:planar-circle-packing}
that the ply of all points in $S^2$ except for a set of measure $0$ is at most $1$.
Hence, the total area of all balls is bounded by the area of $S^2$ itself,
which is $4\pi$.
That is,
$\sum_{v\in V}\pi s(v)^2\leq4\pi$.
Therefore, we get a bound on the objective value of
$$
\frac{\sum_{v\in V}s(v)^2}{\sum_{x\in V}||f(x)||_2^2}
\leq
\frac{4}{n}.
$$
Finally, the result follows from
\cref{lemma:gamma-to-dot-gamma}.
\end{proof}

\begin{figure}[h]
\centering
\begin{tikzpicture}[scale=1]
\def\R{2}

\pgfmathsetmacro\h{0.8}

\pgfmathsetmacro\viewangle{25}

\pgfmathsetmacro\capwidth{sqrt(\R*\R - \h*\h/(cos(\viewangle)*cos(\viewangle)))}
\pgfmathsetmacro\capheight{\capwidth*sin(\viewangle)}

\pgfmathsetmacro\ytangent{\h/(cos(\viewangle)*cos(\viewangle))}
\pgfmathsetmacro\xtangent{sqrt(\R*\R - \ytangent*\ytangent)}
\pgfmathsetmacro\tangentangle{atan2(\ytangent-\h,\xtangent)}

\shade[ball color=gray!10,opacity=0.8] (0,0) circle (\R);

\pgfmathsetmacro\lefttangentcircleangle{atan2(\ytangent,-\xtangent)}
\pgfmathsetmacro\righttangentcircleangle{atan2(\ytangent,\xtangent)}

\fill[red!50,opacity=0.6]
  (0,\h) + ({180-\tangentangle}:{\capwidth} and {\capheight})
  arc ({180-\tangentangle}:{\tangentangle}:{\capwidth} and {\capheight})
  -- ({\xtangent},{\ytangent})  %
  arc ({\righttangentcircleangle}:{\lefttangentcircleangle}:\R)
  -- cycle;

\begin{scope}
\clip (0,0) circle (\R);
\fill[red!50,opacity=0.6] (0,\h) ellipse ({\capwidth} and {\capheight});
\end{scope}

\draw[thick,red] (0,\h) + ({(180-\tangentangle)}:{\capwidth} and {\capheight})
  arc ({180-\tangentangle}:{360+\tangentangle}:{\capwidth} and {\capheight});
\draw[thick,red,dashed,line cap=round] (0,\h) + ({180-\tangentangle}:{\capwidth} and {\capheight})
  arc ({180-\tangentangle}:{\tangentangle}:{\capwidth} and {\capheight});

\coordinate (fv) at (0,\R);
\fill (fv) circle (3pt);
\node[above=2pt] at (fv) {$f(v)$};

\draw[very thick,dashed,dash pattern=on 6pt off 3pt,line cap=round] (fv) -- (\capwidth,\h) node[midway,below,sloped,yshift=1pt] {$s(v)$};

\draw[thick] (0,0) circle (\R);

\end{tikzpicture}
\caption{Illustration of a geodesic ball (sphere cap) on the unit sphere $S^2$ embedded in $\RR^3$,
    along with its corresponding centre $f(v)$. The distance $s(v)$ is the longest Euclidean
    distance from $f(v)$ to any point in the ball.}
\label{fig:sphere-cap}
\end{figure}

\subsection{Ball-Intersection Graphs and Nearest-Neighbour Graphs}
\label{subsec:intersection-graphs}

Spielman and Teng~\cite{spielman1996spectral, spielman2007spectral} also considered higher-dimensional geometric graphs
with similar properties to planar graphs,
and bounded $\lambda_2$ for these classes.
In particular, they obtained bounds for geometric intersection graphs of $d$-dimensional balls with ply $k$,
and $d$-dimensional $k$-nearest-neighbour graphs.
We can use similar techniques to prove
\cref{prop:lambda-2-star-neighbourhood-system,cor:knn-lambda-2-star-bound}.

Unfortunately, their proof also contains a minor error.
Specifically, they used an inequality that does not hold for $d\geq3$ dimensions:
Let $A_d$ denote the area of the unit $d$-sphere $S^d$,
and let $V_d$ denote the volume of the unit $d$-ball $B^d$ (both of these are $d$-dimensional measures).%
\footnote{We again use the topological convention of the $d$-sphere $S^d$ as the $d$-dimensional surface
of the $(d+1)$-dimensional unit ball.}
Let $C$ be a sphere cap (geodesic ball) on the unit $d$-sphere $S^d$ embedded in $\RR^{d+1}$.
Let $f$ be the centre of $C$ on $S^d$.
Let $r$ be the maximum Euclidean distance from $f$ to a point in $C$.
They claimed that $V_d r^d\leq\text{volume}(B)$.
This is true for $d\leq2$.
In fact, we implicitly make use of this fact in the previous subsection.
However, we will now provide a counter-example for $d=3$:

\begin{example}
For $d=3$ dimensions,
any sphere cap $C$ that covers strictly less
than half the hypersurface of $S^3$
has $3$-dimensional surface area strictly less
than 
the volume of the $3$-dimensional ball $V_3$
of radius $r$,
where $r$ is the maximum Euclidean
distance through $\RR^4$
from the centre of the sphere cap
to any other point inside it.
\end{example}

\begin{proof}
Define $h,r,\phi$
as in the $2$-dimensional cross-section of the $4$-dimensional
unit ball given in \cref{fig:r-h-relationship}.

\begin{figure}[ht]
\centering
\includegraphics{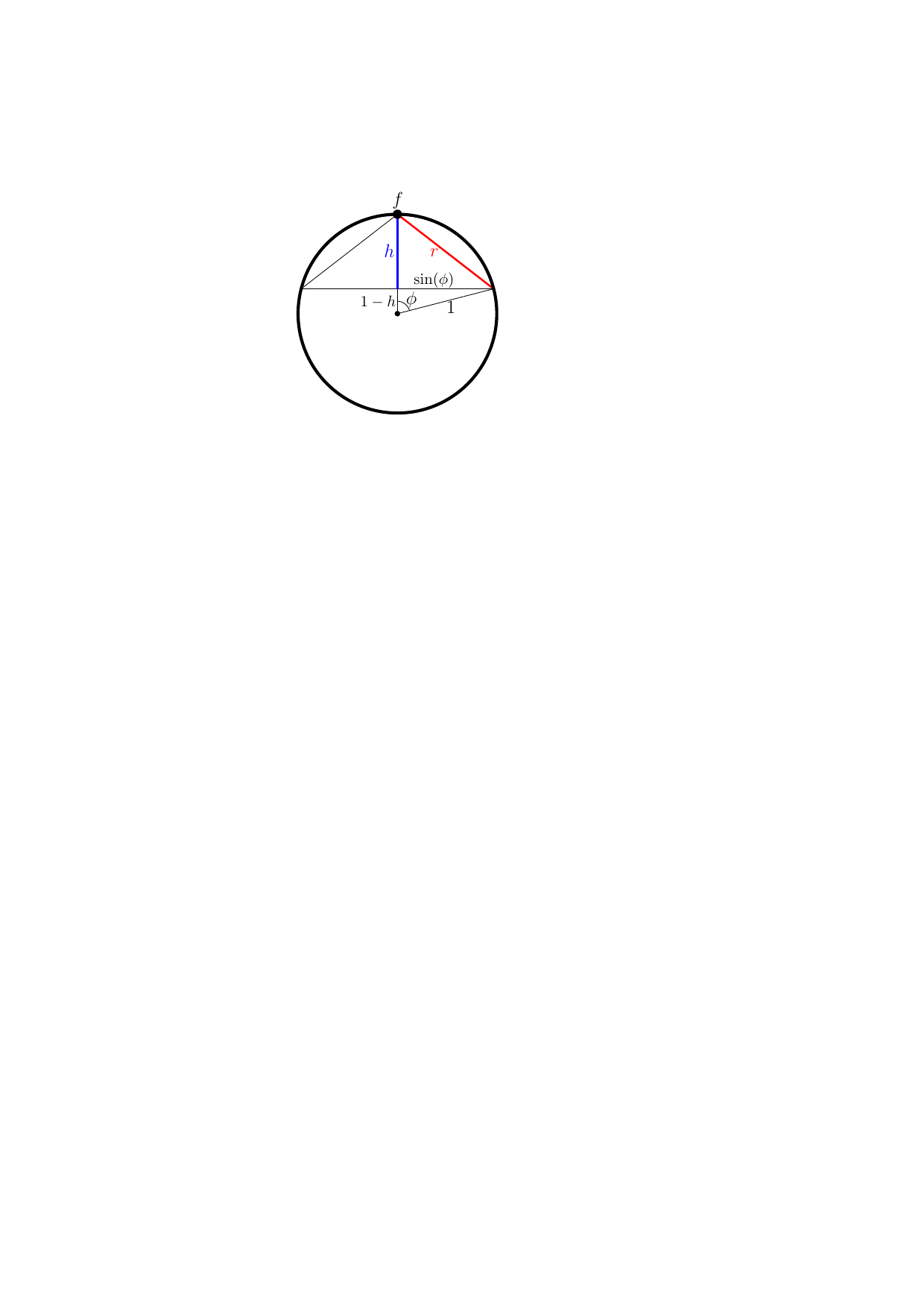}
\caption{Cross-section of a unit ball, and the different associated quantities.}
\label{fig:r-h-relationship}
\end{figure}

The $3$-dimensional area of the $3$-dimensional sphere-cap
with colatitude angle $\phi$ (equivalently, height $h$)
is
$\int_0^\phi A_2(\sin\theta)^2d\theta
=\frac{A_2}2(\phi-\cos\phi\sin\phi)$.
Note that $A_2=4\pi$ is the surface area of the unit $2$-sphere,
and that $\cos\phi=1-h$, so $\phi=\arccos(1-h)$.
For the volume,
$r=\sqrt{2h}$,
so $V_3\cdot r^3=V_3\cdot(2h)^{\frac32}$.
Using, say, $h=0.4$,
we can compare the values,
and verify that the $3$-dimensional surface area of the sphere cap
is $<2.81$,
while $V_3\cdot r^3>2.99$.
\end{proof}

A weaker form of their bound still holds, fortunately.
We will prove and use the following inequality:

\begin{lemma}
\label{lemma:area-ineq}
For a sphere cap $C$ on the $d$-dimensional unit sphere embedded in $\RR^{d+1}$,
let $f$ be its centre
and
let $r$ be the maximum Euclidean distance in $\RR^{d+1}$
from $f$ to a point in $C$.
Assume $C$ contains at most half the surface area of the sphere.
Then,
$d$-dimensional
surface area of the sphere cap
has
$\text{area}(C)\geq\frac{A_dr^d}{4^d}$.
\end{lemma}

We defer the proof of this inequality to
\cref{sec:deferred-area-proof},
since it is a fairly straightforward
expansion of some Gamma and Beta functions.
Using it, we can now prove the main result of this subsection:
\begin{proposition}[Restatement of \cref{prop:lambda-2-star-neighbourhood-system}]
\label{prop:lambda-2-star-neighbourhood-system-full}
Let $G$ be the intersection graph of $d$-dimensional balls with ply $k$
and maximum degree $\Delta$.
Then,
$$
\gamma^{(d+1)}(G)
\lesssim\left(\frac kn\right)^{\frac2d}
.$$
\end{proposition}

\begin{proof}
The proof is very similar to that of
\cref{prop:planar-lambda-2-star},
and again is based on the methods of Spielman and Teng~\cite[proof of Theorem 5.1]{spielman1996spectral, spielman2007spectral}.
By \cref{thm:balls-to-surface},
we may realize $G$ as the geometric intersection graph of a set of geodesic balls on $S^d$,
so that every point in $S^d$ is contained in at most $k$ of these geodesic balls almost surely,
and so that the centroid of the centres of the geodesic balls in $\RR^{d+1}$ is exactly the origin.
Denote the geodesic balls (sphere caps) as $C_1,\dots,C_n$.
For the geodesic ball $C_i$,
let $f_i$ denote its centre
and let $r_i$
denote the maximum Euclidean distance in $\RR^{d+1}$
from $f_i$ to a point in $C_i$.
By
\cref{lemma:area-ineq},
$\text{area}(C_i)\geq\frac{A_d r_i^d}{4^d}$.
The total sum of all such areas is
bounded above by $kA_d$,
so $\sum_{i=1}^n r_i^d\leq k4^d$.
By the power-mean inequality,
we get
$\frac{\sum_{i=1}^n r_i^2}n\leq\left(\frac{\sum_{i=1}^n r_i^d}n\right)^{\frac2d}
\leq\left(\frac{k4^d}n\right)^{\frac2d}\leq16\left(\frac kn\right)^{\frac2d}$.
\end{proof}

As previously mentioned, we will also apply this result to $k$-nearest neighbour graphs, again in a similar manner to
Spielman and Teng~\cite[Corollary 5.2]{spielman1996spectral, spielman2007spectral}.
Let $\tau_d$ denote the \defn{kissing number} in $d$ dimensions,
which is the maximum number of non-overlapping unit balls in $\RR^d$ arranged to all touch a central unit ball.
It is known that as $d\to\infty$, $2^{0.2075d(1+\oo(1))}\leq\tau_d\leq2^{0.401d(1+\oo(1))}$~\cite{wyner65,katabiansky1978bounds}.
Hence, $\log\tau_d\gtrsim d$ and $\tau_d^{\frac1d}\lesssim 1$.
Miller, Teng, Thurston, and Vavasis~\cite{miller1997separators} observed that
every $k$-nearest neighbour graph is the subgraph of an intersection graph of balls with ply $\tau_dk$,
and moreover that every $k$-nearest neighbour graph has maximum degree bounded by $\tau_dk$.
Hence, we obtain the following corollary:
\begin{corollary}[Restatement of \cref{cor:knn-lambda-2-star-bound}]
Let $G$ be a $d$-dimensional $k$-nearest neighbour graph
with $n$ vertices.
Then,
$$\gamma^{(d+1)}(G)\lesssim\left(\frac kn\right)^{\frac2d}.$$
\end{corollary}
Note that the weaker bound on $\frac{|\partial S|}{|S|}$
stated in
\cref{cor:polytime-constructible-separators} for this class
comes from the inequality $\log\Delta\lesssim\log\tau_dk\lesssim d+\log k$.

We will now give specialized algorithms for these graph classes,
since the implied separator results are new.
\begin{corollary}[Restatement of \cref{cor:new-neighbourhood-separator-algorithm}]
Let $G$ be the intersection graph of $d$-dimensional balls with ply $k$ and maximum degree $\Delta$,
provided as a set of $d$-dimensional coordinates and radii for each vertex.
Then, in polynomial time, we can compute a balanced $2/3$-vertex-separator of size
$$\OO\left(\sqrt{\min\{d,\log\Delta\}}\cdot\left(\frac kn\right)^{\frac1d}\right).$$
\end{corollary}

\begin{proof}
Instead of applying \cref{thm:refined-cheeger},
perform the stereographic projection implicit in the proof of
\cref{thm:balls-to-surface},
and then directly apply
\cref{thm:degree-dim-red}
with
\cref{alg:proj-red},
as well as
\cref{lemma:easy-dimred}
(taking the best of the two).
Finally, apply
\cref{thm:cheeger-vertex-one-dim}
with
\cref{alg:sweep}.
\end{proof}

\begin{corollary}[Restatement of \cref{cor:new-knn-separator-algorithm}]
Let $G$ be a $d$-dimensional $k$-nearest neighbour graph with maximum degree $\Delta$,
provided as a set of $d$-dimensional coordinates and a value $k$.
Then, in polynomial time, we can compute a balanced $2/3$-vertex-separator of size
$$\OO\left(\sqrt{\min\{d,\log\Delta\}}\cdot\left(\frac kn\right)^{\frac1d}\right).$$
\end{corollary}

\begin{proof}
We start by computing the graph itself.
By using a simple construction of Kwok, Lau, and Tung~\cite[Proof of Lemma 3.12]{kwok2022cheeger},
we can use the coordinates
to obtain radii for a representation
of the graph as a ball-intersection graph,
while optimizing the objective function up to constant factors
(relative to the fixed choice of coordinates).
Then, instead of applying \cref{thm:refined-cheeger},
perform the stereographic projection implicit in the proof of
\cref{thm:balls-to-surface},
and directly apply
\cref{thm:degree-dim-red}
with
\cref{alg:proj-red},
as well as
\cref{lemma:easy-dimred}
(taking the best of the two).
Finally, apply
\cref{thm:cheeger-vertex-one-dim}
with
\cref{alg:sweep}.
\end{proof}

Another result of Spielman and Teng
for a graph class they call $\alpha$-overlap graphs
also relied on the incorrect inequality for the area of sphere caps~\cite[Theorem 6.4]{spielman2007spectral}.
\Cref{lemma:area-ineq} may also be applied to
fix that result in essentially the same manner.
The result could be generalized to a bound on $\gamma^{(1)}$ as well.

\subsection{Bounded-Genus Graphs}
\label{subsec:genus-graphs-geometric-bound}

In this subsection, we will prove
\cref{prop:genus-geometric-bound},
which states that a genus-$g$ graph $G$ with maximum degree $\Delta$ has $\gamma^{(1)}(G)\lesssim\frac{g\log\Delta}n$.

This will involve several steps.
At a high-level, we will use two different groups of techniques:
\begin{itemize}
    \item Techniques involving \defn{uniform shallow-minors}, which we will define shortly.
    \item Techniques based on the theory of circle packings for genus-$g$ graphs,
        adapting constructions used by Kelner to give a bound on the Fiedler value $\lambda_2$
        for triangulated genus-$g$ graphs~\cite{kelner2004spectral, kelner2006spectral, kelner2006new}.
\end{itemize}

We will now briefly outline some important definitions and challenges related to genus-$g$ graphs.
Recall that a genus-$g$ graph is one that can be drawn on a genus-$g$ (oriented) surface
without crossing edges.
Such a drawing is called a \defn{surface embedding},
and it implies the existence of a \defn{rotation system}:
A (circular) ordering of the edges around each vertex.
Like a plane graph, we can define a \defn{face} of a genus-$g$ graph
as any cycle that is formed by a tour
induced by always moving counter-clockwise in the rotation system at a vertex to choose
the next edge, starting from the previous.
A rotation system (and a corresponding embedding) is called \defn{cellular} if each edge
is a part of exactly two such tours.
Not every surface embedding is cellular, even for plane graphs:
A face on a plane graph
can include an edge twice, each time traversed in opposite directions.
For instance, this is always true of cut edges in any embedding of a planar graph
(see \cref{fig:non-cellular-embeddings}).
In a surface embedding for a genus-$g$ graph with $g\geq1$, it can also be the case that an edge
is included twice in \emph{the same} direction (see \cref{fig:non-cellular-embeddings}).

\begin{figure}[h]
\centering
\includegraphics[page=1,width=0.25\textwidth]{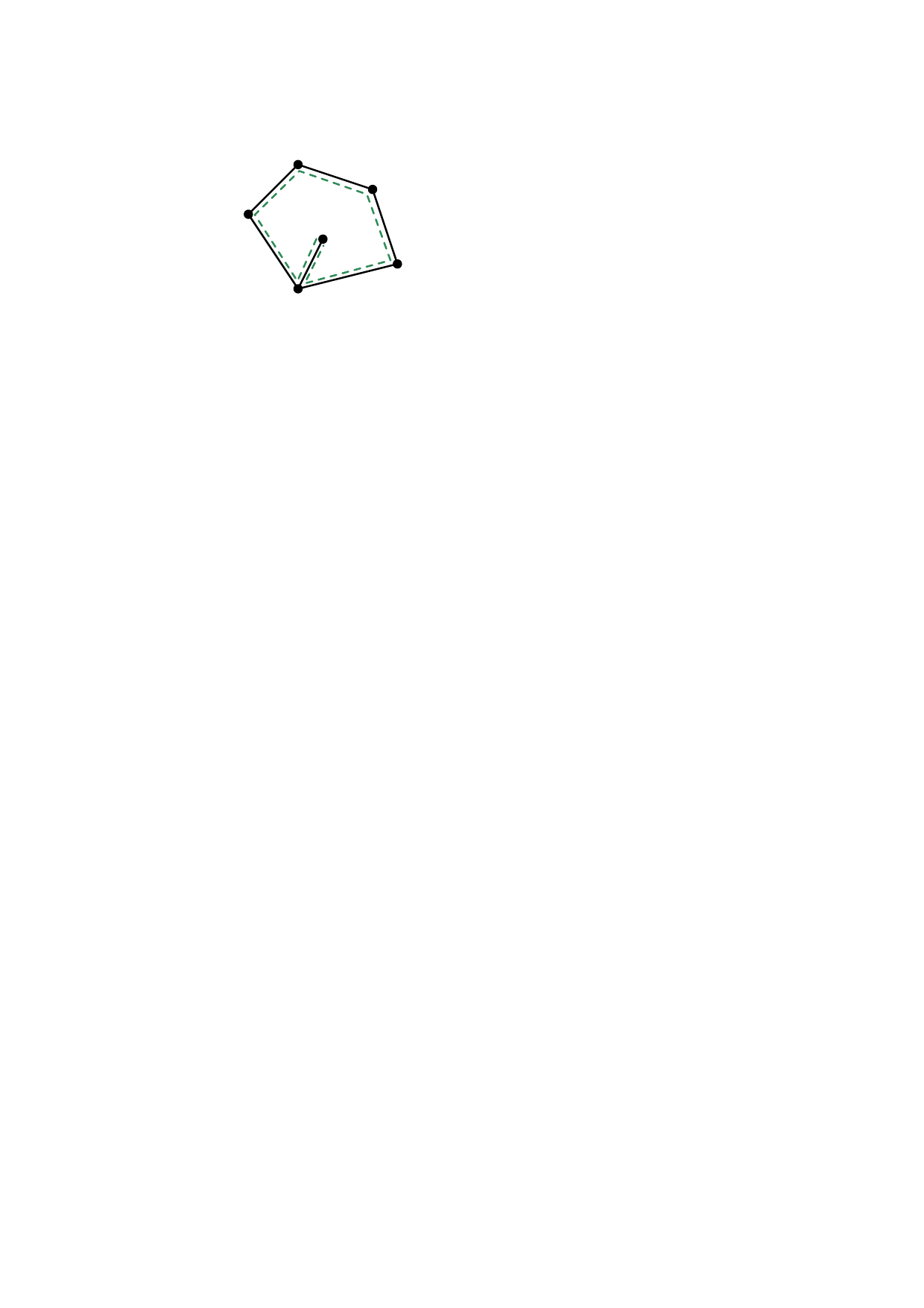}
\hspace{5em}
\includegraphics[page=2,width=0.25\textwidth]{graphics/non-cellular-embeddings.pdf}
\caption{Non-cellular embeddings: planar graph (left) and $K_5$ in the flat torus (right).
A non-cellular face in each is traced.}
\label{fig:non-cellular-embeddings}
\end{figure}

For simple graphs, we call a graph embedded in a surface with cellular faces
a \defn{triangulation} if all of its faces consist of exactly $3$ edges.
Planar triangulations correspond exactly to \defn{maximal} planar graphs,
which are graphs for which the addition of any edge would enforce the invalidation of planarity, regardless of embedding.
For graphs of genus-$g$, this is no longer true:

\begin{example}
$K_5$ is a maximal graph of genus $1$,
and $K_5$ admits no triangulated embedding
into a surface of genus $1$.
\end{example}

See \cref{fig:non-cellular-embeddings} for an embedding of $K_5$ in the torus.

\begin{proof}
The proof is by contradiction.
Suppose this was not the case, and that it had
a triangulated embedding in the torus.
$K_5$ has $5$ vertices and $10$ edges.
By the Euler characteristic, an embedding of $K_5$ in the torus must have $5$ faces.
We will now count the number of edge-face incidences in two different ways.
Since the embedding is triangulated (and hence cellular), each edge is incident to exactly two faces, and there are $10$ edges and hence $2\cdot10=20$ incidences in total.
Since the embedding is triangulated, there are exactly $3$ edges incident to each face, and hence there are $3\cdot5=15$ incidences in total, giving a contradiction.
\end{proof}

Kelner claimed to have bounded $\lambda_2$ for \emph{all} genus-$g$ graphs,
rather than just triangulations.
The method he used to try and prove this
was to reduce the case of general genus-$g$ graphs to the case of triangulated genus-$g$ graphs.
However, the reduction used for this is actually incorrect.
The claimed argument is as follows:
\begin{itemize}
    \item Adding edges to a graph can only increase $\lambda_2$ (this is also true for $\gamma^{(1)}$).
    \item Add edges until the graph is maximal.
    \item Apply the bound on triangulated genus-$g$ graphs that Kelner proves in the rest of his paper.
        This bound relies on a generalized circle packing theorem of He and Schramm~\cite{circlepackinggenus} that applies only to simple triangulations of genus-$g$ surfaces.
\end{itemize}
This argument has two separate issues:
\begin{itemize}
    \item As mentioned, a maximal genus-$g$ graph is not always a triangulation.
    \item Even in a planar graph,
        naively adding edges may increase the maximum degree, possibly up to as large as $\Omega(n)$,
        and the bound that Kelner obtains for triangulations
        has a dependence on the maximum degree.
\end{itemize}
It is worth noting that both of these issues can be sidestepped for planar graphs:
Maximal planar graphs are always planar triangulations,
and
there is a known result that triangulates a planar graph while asymptotically preserving the maximum degree~\cite{kant1992triangulating,kant1997triangulating}.
It is possible that these issues could be amended for genus-$g$ graphs
by allowing augmentation to obtain a multigraph,
but such a construction is not known,
and it would need to be checked that Kelner's construction can be made to work with a multigraph.
Instead, we have both generalized and improved Kelner's bound via our other techniques.

We now move onto the proof.
This will involve several phases.
Henceforth, we will fix a rotation system for our initial graph $G$.
This rotation system need not be cellular, but we will assume that $G$ is simple (as we do throughout this paper).
Importantly for our purposes,
so long as we carefully respect this rotation system,
we
will be able to make certain local transformations
to a graph without changing its genus.

\subsubsection{Uniform Shallow Minors}
\label{subsubsec:uniform-shallow-minors}

For the first phase of our transformations,
we will make use of the following structure:

\begin{definition}
Let $G=(V,E)$ be a graph with $n$ vertices,
and let $H=(V',E')$ be a graph with $r\cdot n$ vertices.
Suppose there is a function $p:V'\to V$
so that $|p^{-1}(v)|=r$ for each $v\in V$,
and $p^{-1}(v)$ is a connected subset of vertices
with diameter $L$.
Suppose furthermore that for each edge $uv\in E'$,
$p(u)p(v)\in E$.
Then we say that $G$ is a \defn{uniform shallow minor} of $H$ with \defn{depth} $L$.
\end{definition}

See \cref{fig:uniform-shallow-minors-examples} for some examples.

\begin{figure}
    \centering
    \includegraphics[page=1,width=0.35\textwidth]{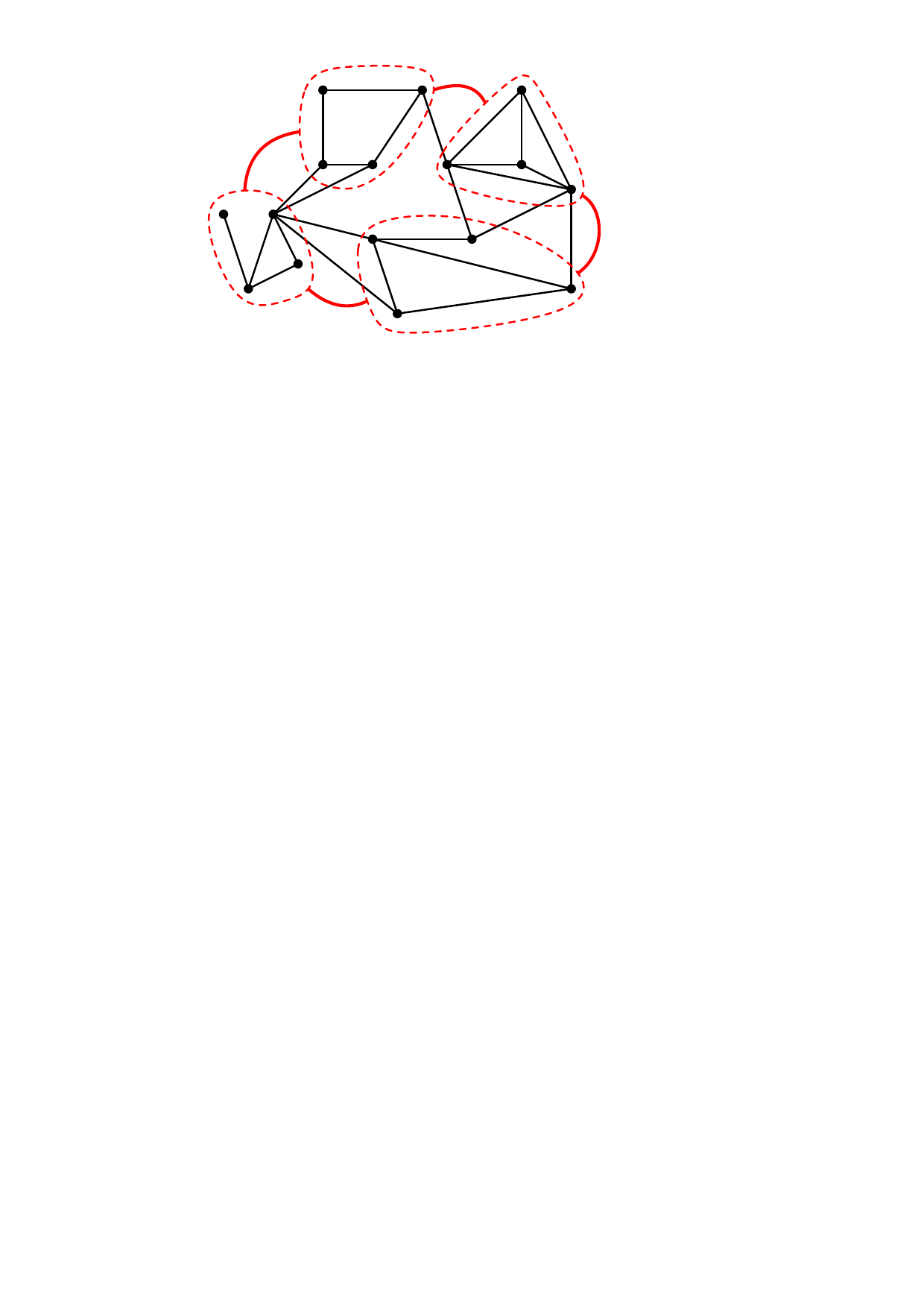}
    \hspace{4em}
    \includegraphics[page=2,width=0.35\textwidth]{graphics/uniform-shallow-minors.pdf}
    \caption{Two examples of uniform shallow minors:
        A graph that admits $C_4$ as a uniform shallow minor with depth $2$ (left),
        and a graph that admits $P_4$ as a uniform shallow minor with depth $1$ (right).}
    \label{fig:uniform-shallow-minors-examples}
\end{figure}

Usually when studying minors,
we start with a graph, and then consider which minors it contains
or does not contain.
Often, we even ``forbid'' minors, resulting in classes like $K_h$-minor-free graphs.
However, we will use uniform shallow minors in the exact opposite sense:
Given a graph $G$, we will aim to find a graph $H$ for which $G$ is a uniform shallow minor.
In particular, their use is characterized by the following technical lemma:

\begin{lemma}
\label{lemma:uniform-shallow-minor-application}
Let $G=(V,E)$ be a graph with $n$ vertices,
and let $H=(V',E')$ be a graph with $|V'|=r\cdot n$ vertices.
Suppose $G$ is a uniform shallow minor of $H$ with depth $L$.
Then $\gamma^{(1)}(G)\lesssim r\cdot L\cdot\gamma^{(1)}(H)$.
\end{lemma}

This lemma will be useful in the following sense:
If we have a graph $G$ of genus $g$,
we may be able to find a graph $H$ of genus $g$
by performing local transformations at each vertex,
so that $G$ is a uniform shallow minor of $H$.
In particular, we may be able to find an $H$ with useful properties,
such as a reduced maximum degree or faces that are easier to triangulate.
Then, a bound on $\gamma^{(1)}(H)$ will imply a bound on $\gamma^{(1)}(G)$,
with ``loss'' $L$.

\begin{figure}
    \centering
    \includegraphics[page=3,width=0.35\textwidth]{graphics/uniform-shallow-minors.pdf}
    \hspace{4em}
    \includegraphics[page=4,width=0.35\textwidth]{graphics/uniform-shallow-minors.pdf}
    \caption{Examples of the construction in \cref{lemma:uniform-shallow-minor-application}
        applied to the uniform shallow minors in \cref{fig:uniform-shallow-minors-examples}.}
    \label{fig:uniform-shallow-minors-construction}
\end{figure}

\begin{proof}
We start with a trivial case whose elimination we will use later:
If $\gamma^{(1)}(H)\cdot r\cdot L\geq1$, then the result holds since $\gamma^{(1)}(G)\lesssim1$:
We can obtain a solution validating this bound by picking two vertices $u,v$ of the graph,
and setting $y(u)=y(v)=2$, $f(u)=1,f(v)=-1$, and $y(w)=f(w)=0$ for any $w\not\in\{u,v\}$.

At a high-level, our construction will randomly sample a \emph{representative} vertex in $H$
for each vertex in $G$,
and then look at the edges along the shortest-paths between these representatives.
This is visualized in \cref{fig:uniform-shallow-minors-construction}.

Let $p:V'\to V$ be the function that certifies the depth-$L$ uniform shallow minor.
Let $f_H,y_H$ be a solution to $\gamma^{(1)}(H)$.
We will construct a solution $f_G,y_G$ to $\gamma^{(1)}(G)$ with random sampling as follows:
Let $\pi(v)$ be a random variable that is uniformly chosen over $p^{-1}(v)$,
so that each $\pi(v)$ is independent.
Let $m(v)$ be samples of all such $\pi(v)$ so that
$$\sum_{u\in V}\sum_{v\in V}|f_H(m(u))-f_H(m(v))|^2
\geq\EX\left[\sum_{u\in V}\sum_{v\in V}|f_H(\pi(u))-f_H(\pi(v))|^2\right].$$
Let $f_G(v):=f_H(m(v))-\frac{1}{|V|}\sum_{x\in V}f_H(m(x))$,
and let
$y_G(v):=2(2L+1)\cdot\sum_{v'\in p^{-1}(v)}y_H(v')$.

We check that the constraints of $\gamma^{(1)}(G)$ are satisfied by $f_G,y_G$.
The normalization constraint is satisfied since $\sum_{v\in V}f_G(v)=\sum_{v\in V}f_H(m(v))-\sum_{x\in V}f_H(m(x))=\overline 0$.
For the edge constraints, let $e=uv\in E$.
There is a path of length $k\leq2L+1$ in $H$ from $m(u)$ to $m(v)$
using only vertices in $p^{-1}(\{u,v\})$.
Denote this path as $m(u)=v_1,\dots,v_{k+1}=m(v)$.
Then,
\begin{center}
\everymath{\displaystyle}
\(
\begin{array}{r c l}
y_G(u)+y_G(v) & = & 2(2L+1)\left[\sum_{u'\in p^{-1}(u)}y_H(u')+\sum_{v'\in p^{-1}(v)}y_H(v')\right]\\[8pt]
              & \geq & 2(2L+1)\left[y_H(v_1)+\cdots+y_H(v_{k+1})\right]\\[8pt]
              & \geq & (2L+1)\sum_{i=1}^{k}\left[y_H(v_i)+y_H(v_{i+1})\right]\\[8pt]
              & \geq & (2L+1)\sum_{i=1}^{k}\left|f_H(v_i)-f_H(v_{i+1})\right|^2\\[8pt]
              & \geq & \frac{2L+1}{k}\left[\sum_{i=1}^{k}\left|f_H(v_i)-f_H(v_{i+1})\right|\right]^2\\[8pt]
              & \geq & \frac{2L+1}{k}\left|f_H(v_1)-f_H(v_{k+1})\right|^2\\[8pt]
              & = & \frac{2L+1}{k}\left|f_G(u)-f_G(v)\right|^2\\[8pt]
              & \geq & \left|f_G(u)-f_G(v)\right|^2,
\end{array}
\)
\end{center}
where
the preantepenultimate step follows from Cauchy-Schwarz,
the antepenultimate step follows from the triangle inequality,
and the last step uses $k\leq2L+1$.
Thus, the edge constraints are satisfied.

It remains to verify the objective.
To obtain a lower bound on $\sum_{v\in V}|f_G(v)|^2$, we first bound the squared distances within each $p^{-1}(v)$.
For any vertex $v\in V$ and any two vertices $u',v'\in p^{-1}(v)$,
there is a path $u'=v_1',\dots,v_{l+1}'=v'$ of length $l\leq L$ contained entirely within $p^{-1}(v)$.
By the triangle inequality, Cauchy-Schwarz, and the constraint from $\gamma^{(1)}(H)$,
\begin{center}
\everymath{\displaystyle}
\(
\begin{array}{r c l}
|f_H(u')-f_H(v')|^2
& \leq & \left[|f_H(v_1')-f_H(v_2')|+\cdots+|f_H(v_l')-f_H(v_{l+1}')|\right]^2\\[4pt]
& \leq & l\left[|f_H(v_1')-f_H(v_2')|^2+\cdots+|f_H(v_l')-f_H(v_{l+1}')|^2\right]\\[4pt]
& \leq & L\sum_{i=1}^l\left[y_H(v_i')+y_H(v_{i+1}')\right].
\end{array}
\)
\end{center}
Summing over all pairs $u',v'\in p^{-1}(v)$ and noting that each $y_H(w')$ for $w'\in p^{-1}(v)$ appears in at most $2|p^{-1}(v)|^2$ terms,
\begin{center}
\everymath{\displaystyle}
\(
\sum_{u'\in p^{-1}(v)}\sum_{v'\in p^{-1}(v)}
|f_H(u')-f_H(v')|^2
\leq
2Lr^2
\sum_{w'\in p^{-1}(v)}y_H(w').
\)
\end{center}

Using this bound, we obtain a lower bound on $\sum_{v\in V}|f_G(v)|^2$:
\begin{center}
\everymath{\displaystyle}
\(
\begin{array}{r c l}
2|V|\cdot\sum_{v\in V}|f_G(v)|^2
& = & \sum_{u,v\in V}|f_G(u)-f_G(v)|^2\\[12pt]
& = & \sum_{u,v\in V}|f_H(m(u))-f_H(m(v))|^2\\[12pt]
& \geq & \EX\left[\sum_{u,v\in V}|f_H(\pi(u))-f_H(\pi(v))|^2\right]\\[12pt]
& = & \frac1{r^2}
\sum_{u\neq v\in V}
\sum_{u'\in p^{-1}(u)}
\sum_{v'\in p^{-1}(v)}
|f_H(u')-f_H(v')|^2\\[12pt]
& = & \frac1{r^2}
\left[
\sum_{u',v'\in V'}
|f_H(u')-f_H(v')|^2
-
\sum_{v\in V}
\sum_{u',v'\in p^{-1}(v)}
|f_H(u')-f_H(v')|^2
\right]\\[12pt]
& \geq &
\frac1{r^2}
\left[
2r|V|
\sum_{v'\in V'}|f_H(v')|^2
-
2Lr^2
\sum_{v'\in V'}y_H(v')
\right]\\[12pt]
& = &
\frac{2\sum_{v'\in V'}|f_H(v')|^2
}{r}
\left[
|V|
-
Lr\gamma^{(1)}(H)
\right]\\[12pt]
& \geq &
\frac{2(|V|-1)\sum_{v'\in V'}|f_H(v')|^2
}{r},
\end{array}
\)
\end{center}
where the antepenultimate step uses $\sum_{u',v'\in V'}|f_H(u')-f_H(v')|^2=2|V'|\sum|f_H(v')|^2$ and the bound on within-set distances,
the penultimate step uses $\sum y_H(v')=\gamma^{(1)}(H)\sum|f_H(v')|^2$,
and the last step uses the assumption $Lr\gamma^{(1)}(H)\leq1$ from the beginning of the proof.
Thus, $\sum_{v\in V}|f_G(v)|^2\gtrsim\frac{\sum_{v'\in V'}|f_H(v')|^2}{r}$.

Finally, we verify the objective:
\begin{center}
\everymath{\displaystyle}
\(
\begin{array}{r c l}
\frac{\sum_{v\in V} y_G(v)}{\sum_{v\in V} |f_G(v)|^2}
& = &
\frac{2(2L+1)\sum_{v'\in V'} y_H(v')}{\sum_{v\in V} |f_G(v)|^2}\\[12pt]
& \lesssim &
\frac{2(2L+1)\sum_{v'\in V'} y_H(v')}{\sum_{v'\in V'} |f_H(v')|^2 / r}\\[12pt]
& = &
2r(2L+1) \cdot \gamma^{(1)}(H)\\[12pt]
& \lesssim &
r \cdot L \cdot \gamma^{(1)}(H).
\end{array}
\)
\end{center}
\end{proof}

We will now apply this result to prove the following lemma:
\begin{lemma}[Restatement of \cref{lemma:genus-degree-reduction}]
\label{lemma:genus-degree-reduction-full}
Let $G$ be a graph with genus $g$, $n$ vertices, and maximum degree $\Delta$.
Then there exists a graph $H$ with $n\Delta$ vertices, maximum degree $4$,
and genus $g$,
so that
$\gamma^{(1)}(G)\lesssim\gamma^{(1)}(H)\cdot\Delta\cdot\log\Delta$.
\end{lemma}

\begin{proof}
We create a set $V'$ that contains $\Delta$ vertices
for each vertex in $V(G)$.
Let $p:V'\to V(G)$ denote the mapping of these vertices.
Likewise, for each edge-vertex pair $(e,v)\in E\times V$ with $e$ incident to $v$,
map $(e,v)$ to some vertex in $p^{-1}(v)$,
in such a manner that no two distinct edge-vertex pairs are mapped to the same vertex in $V'$.
Use this second mapping to construct a set of endpoint-distinct edges $E'$.
For each $v\in V(G)$, construct a near-perfect binary tree $T_v$
over the $\Delta$ vertices in $p^{-1}(v)$
so that some Euler tour of $T_v$ contains the vertices mapped to the rotation system around $v$
as a subsequence
(it suffices to construct a near-perfect binary search tree over an arbitrary ordering of $p^{-1}(v)$ containing this subsequence).
Note that $T_v$ will have
max-degree $3$ and diameter at most $2\log_2\Delta$.
Let $E''$ be the union over all such trees.
Let $H:=(V',E'\cup E'')$.
Note that the maximum degree of $H$ is $4$, since each vertex in $V'$
is incident to at most one edge in $E'$ and at most $3$ edges in $E''$.

We claim that $H$ is of genus exactly $g$:
$G$ is a minor of $H$ (by contracting all edges in $E''$), so clearly the genus of $H$ is at least $g$.
Furthermore, a rotation system of $G$ can be extended to a rotation system of $H$
since the trees forming $E''$ are planar respecting the ordering induced by the rotation system
(see \cref{fig:degree-reduction}),
so the genus of $H$ is also at most $g$.

\begin{figure}[h]
    \centering
    \includegraphics[scale=0.5,page=2]{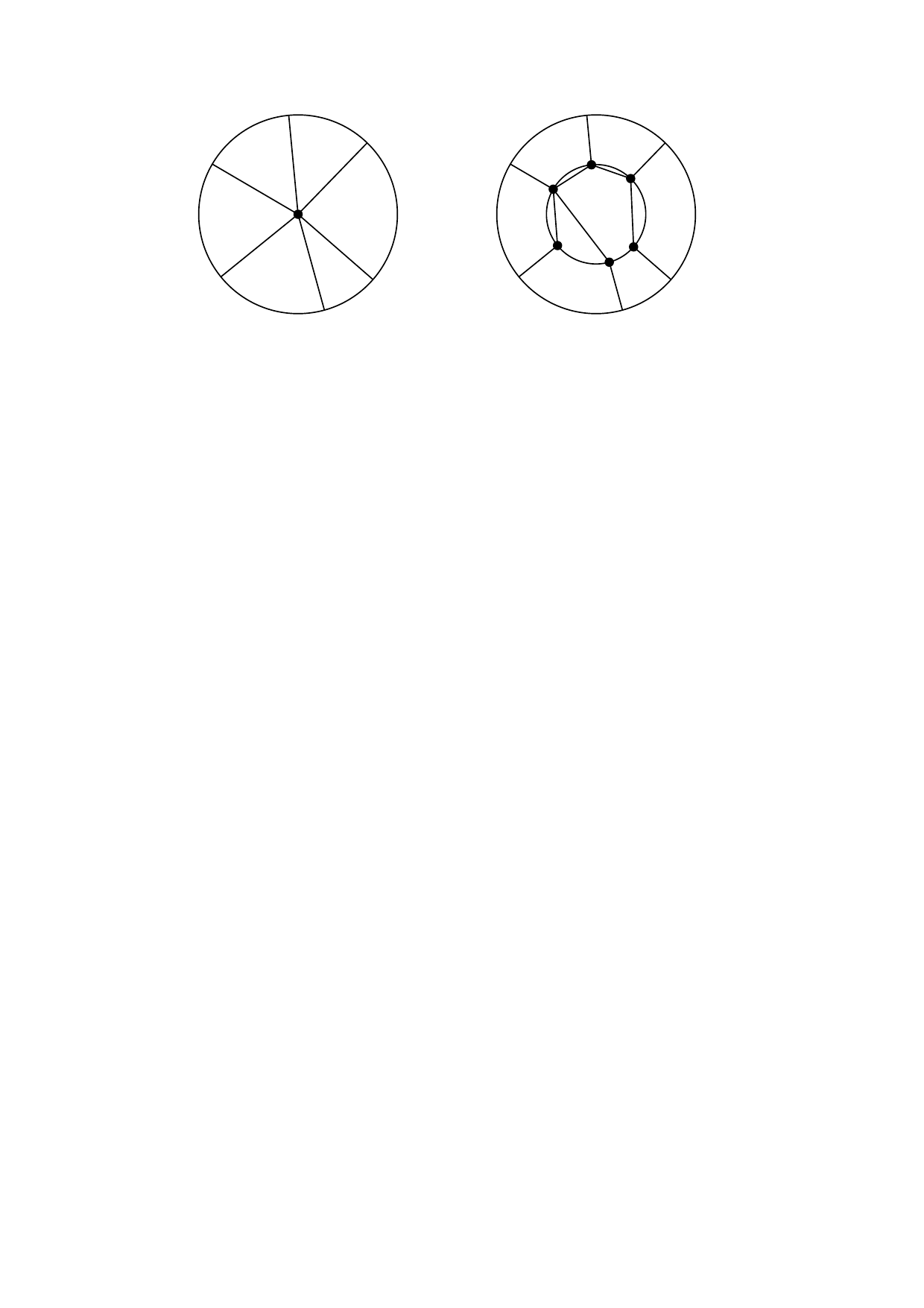}
    \hspace{4em}
    \includegraphics[scale=0.5,page=3]{graphics/degree-reduction}
    \caption{Two examples of the degree-reduction construction for a genus-$g$ graph, respecting a rotation system, local to a vertex. The vertex is replaced with $\Delta$ vertices, at least one vertex for each incident edge, and these new vertices are connected with a near-perfect binary tree. In the left example, the displayed vertex is of maximum degree in its graph, while in the right example it is not (resulting in a degree-$1$ vertex).}
    \label{fig:degree-reduction}
\end{figure}

It remains to show that
$\gamma^{(1)}(G)\lesssim\gamma^{(1)}(H)\cdot\Delta\cdot\log\Delta$.
In fact, $G$ is exactly a uniform shallow minor of $H$ with depth $2\log_2\Delta$,
so we simply apply
\cref{lemma:uniform-shallow-minor-application}.
\end{proof}

We will be able to leverage this lemma in order to turn a bound on
$\gamma^{(1)}$
for bounded-genus graphs
of \emph{constant} maximum degree into a bound on
$\gamma^{(1)}$
for bounded-genus graphs of \emph{arbitrary} maximum degree
at only a $\log\Delta$ loss.
We will now use a similar technique to further reduce to the case of
triangulated bounded-genus graphs with constant maximum degree:

\begin{lemma}[Restatement of \cref{lemma:genus-triangulated-reduction}]
\label{lemma:genus-triangulated-reduction-full}
Let $G$ be a graph of genus $g$ with $n$ vertices and maximum degree $\Delta$.
Then there is a triangulated genus-$g$ graph $H$ with $(\Delta+1)\cdot n$ vertices and maximum degree $\OO(\Delta)$, so that
$\gamma^{(1)}(G)\lesssim(\Delta+1)\cdot\gamma^{(1)}(H)$.
\end{lemma}

\begin{proof}
We will assume for simplicity that $G$ is connected.
$O(1)$ edges can be added to each vertex to make this the case if it is not true.

\begin{figure}
    \centering
    \includegraphics[scale=0.53,page=6]{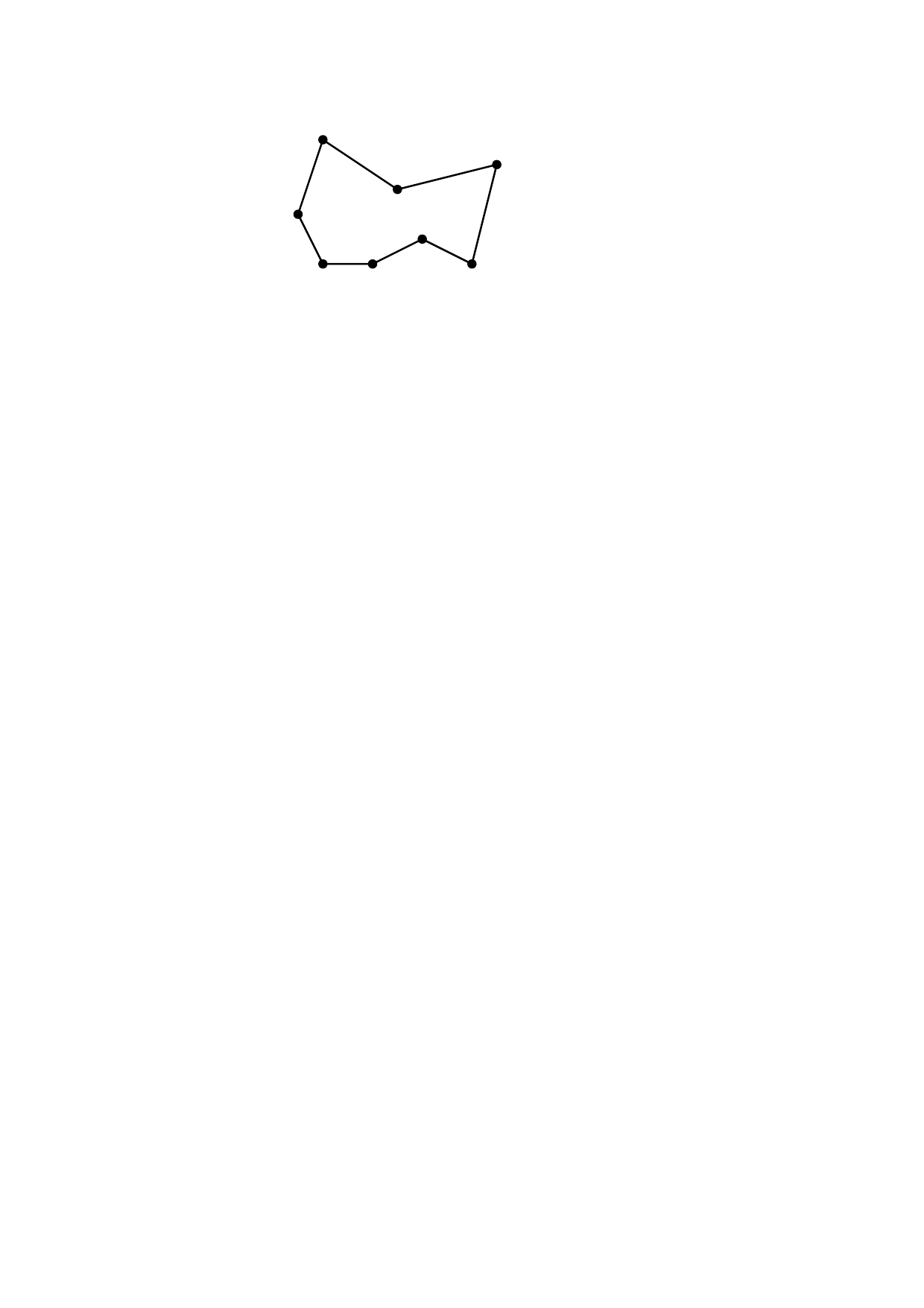}
    \hfill
    \includegraphics[scale=0.53,page=7]{graphics/zig-zag-ear-cuts.pdf}
    \hfill
    \includegraphics[scale=0.53,page=8]{graphics/zig-zag-ear-cuts.pdf}\\
    \includegraphics[scale=0.53,page=9]{graphics/zig-zag-ear-cuts.pdf}
    \hspace{3em}
    \includegraphics[scale=0.53,page=10]{graphics/zig-zag-ear-cuts.pdf}
    \caption{
        A demonstration of the triangulation steps on a face.
        From left to right, top to bottom: $G$, $H_1$, $H_2$, $H_3$, $H$.
        The same steps can also be performed on a non-cellular face.
        }
    \label{fig:zig-zag-ear-cuts-steps}
\end{figure}

Let $H_1$ be the graph that adds $\Delta$ vertices around each vertex to $G$,
and ensure that between any two edges in the rotation system around a vertex in $G$,
at least one such edge is added.
Observe that $H_1$ is a genus-$g$ graph with $(\Delta+1)n$ vertices and maximum degree $2\Delta$.
Observe also that $G$ is a uniform shallow minor of $H_1$
of depth $2$,
and hence
\cref{lemma:uniform-shallow-minor-application}
shows that $\gamma^{(1)}(G)\lesssim(\Delta+1)\gamma^{(1)}(H_1)$.

Now, we will add edges to $H_1$.
In particular, we will construct a sequence of graphs
that add edges to $H_1$, but never add vertices,
and hence $\gamma^{(1)}$ will only increase for this sequence of graphs.
Specifically, within each face of $H_1$,
we create a cycle of all the newly added vertices
in order around the face.
The result is a graph we call $H_2$
(with genus $g$, $(\Delta+1)n$ vertices, and maximum degree $\max\{3,2\Delta\}$)
that has no adjacent pair of
faces both with size $>4$.
Moreover, any vertex incident
to a face of size $>4$ (i.e., those in the created cycles)
is incident exactly two other faces, and has degree exactly $3$.
For each face of exactly size $4$ in $H_2$,
triangulate it arbitrarily to get another new graph $H_3$
(with genus $g$, $(\Delta+1)n$ vertices, and maximum degree $\max\{5,4\Delta\}$).
See
\cref{fig:zig-zag-ear-cuts-steps}
for examples of each of these graphs around a particular face.
It's worth noting that for non-cellular embeddings of higher genus graphs, it is possible that a non-simple face may include the same edge twice
in the same ``direction'' during a traversal around the face (as opposed to the example in the figure, where the doubly-included edge is included in opposite directions).
Fortunately, this does not cause issues for the construction described here either.
It is also worth noting that these steps are necessary even if all the faces are simple:
Specifically, we would like to handle the remaining non-triangular faces
independently, and the separation ensures that we are not able to later produce parallel edges
(see~\cite[Figure 6.3(b)]{kant1997triangulating} for a demonstration of the issues of adjacent simple faces).

It remains to triangulate the remaining non-triangular faces of $H_3$,
none of which are adjacent,
and all of which are simple.
This can be accomplished by adding at most $2$ edges per vertex.
In particular,
Kant and Bodlaender~\cite{kant1992triangulating,kant1997triangulating} gave an algorithm called ``zig-zagging''
that adds edges to a single face using a sequence of ear-cuts,
triangulating the face.
This algorithm applies in the case of faces on \emph{any} rotation system,
not just a rotation system on a planar graph.
Call the final graph $H$,
which is a triangulated genus-$g$ graph with maximum degree $\max\{7,4\Delta\}$.
See the last diagram in 
\cref{fig:zig-zag-ear-cuts-steps}
for a demonstration of this algorithm.

Since $H$ has the same vertex set as $H_1$, and a strict superset of its edges,
$\gamma^{(1)}(H)\geq\gamma^{(1)}(H_1)$,
so the lemma follows from the earlier argument.
\end{proof}

\subsubsection{Circle Packings for Bounded-Genus Graphs}
\label{subsubsec:circle-packings-genus}

Because of the lemmas in
\cref{subsubsec:uniform-shallow-minors}
we henceforth need only consider triangulated genus-$g$ graphs of constant degree.
We will now begin generalizing the arguments of Kelner~\cite{kelner2004spectral,kelner2006spectral,kelner2006new}
to obtain bounds on $\gamma^{(1)}(G)$ for this class.

A genus-$g$ graph naturally embeds into a genus-$g$ surface.
Moreover, a genus-$g$ graph in fact has a circle packing on a genus-$g$ surface.
Both the surface and packing is unique up to a certain set of transformations~\cite{circlepackinggenus}.
However, to bound $\gamma^{(3)}$
in a similar manner to 
\cref{prop:planar-lambda-2-star},
we would need a circle packing on the unit sphere $S^2$,
likely of ply $O(g)$.
Unfortunately,
as Kelner~\cite{kelner2004spectral, kelner2006spectral, kelner2006new} pointed out,
such circle packings need-not exist.
Instead, in order to bound $\lambda_2$, Kelner made use of two steps: First, a specific subdivision procedure is applied to the graph,
and it is shown that a bound for the graph obtained by a sequence of subdivision procedures also induces a bound on the initial graph.
Second, by applying this procedure a sufficient number of times, it is shown that a continuous analogue of the circle packing theory can be used to obtain a bound.
We will apply analogues of each of these steps for $\gamma^{(3)}$
(and $\gamma^{(1)}$, which differs only by a constant factor of at most $3$)
instead of $\lambda_2$.

The subdivision procedure used by Kelner is the \defn{hexagonal subdivision} of a triangulation $G$,
which produces a graph $G'$ by bisecting every edge in $E(G)$, and then connecting all three bisection vertices around each resulting face to form a triangle with new edges.
See \cref{fig:triforcing-example} for an example of this operation.

\begin{figure}
    \centering
    \raisebox{4em}{
    \includegraphics[scale=0.5,page=6]{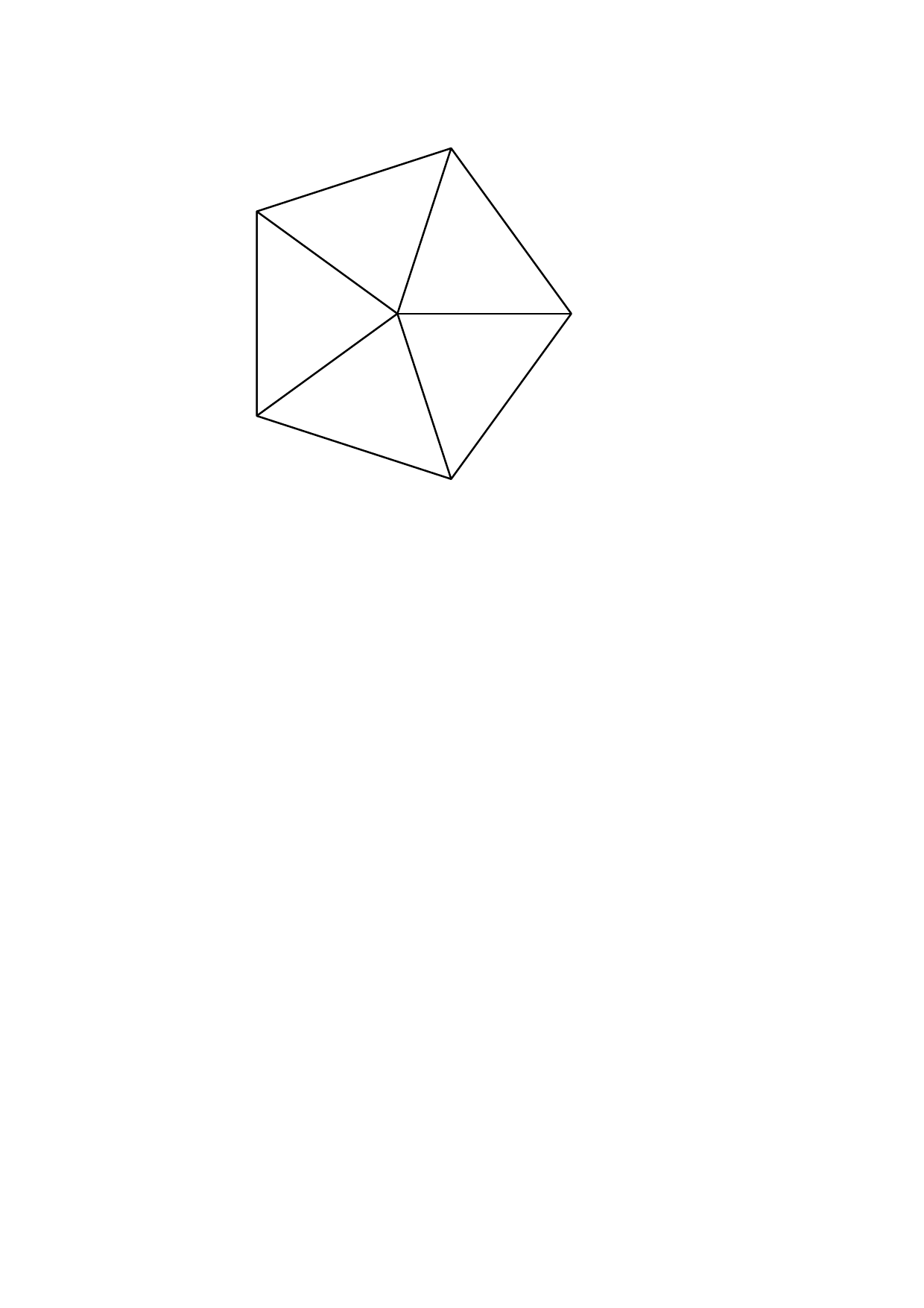}
    }
    \hspace{2em}
    \raisebox{3em}{
    \includegraphics[scale=0.25,page=3]{graphics/triforcing-example.pdf}
    }
    \hspace{2em}
    \includegraphics[scale=0.2,page=9]{graphics/triforcing-example.pdf}
    \caption{The hexagonal subdivision procedure applied to small example triangulation subgraphs.
    The left and centre drawings show a single application
    of the procedure, while the right drawing shows many.}
    \label{fig:triforcing-example}
\end{figure}

For a triangulation $G$, let $G^{(k)}$ denote the triangulation that results from $k$ successive applications of hexagonal subdivision.
The first step is to show that we can relate $\gamma^{(1)}(G)$ and $\gamma^{(1)}(G^{(k)})$:

\begin{lemma}
\label{lemma:subdivision-bound-propagation}
Let $G$ be a triangulation of genus $g$
with maximum degree $\Delta$.
Then there is some (universal) constant $c$
so that $|V(G)|\cdot\gamma^{(1)}(G)\lesssim \Delta^c\cdot|V(G^{(k)})|\cdot\gamma^{(1)}(G^{(k)})$.
\end{lemma}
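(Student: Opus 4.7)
The plan is to construct a feasible solution for $\gamma^{(1)}(G)$ from a near-optimal solution for $\gamma^{(1)}(G^{(k)})$, exploiting two facts: (i) $V(G) \subseteq V(G^{(k)})$ since original vertices persist under hexagonal subdivision; and (ii) each edge $uv \in E(G)$ corresponds to a natural $2^k$-path $u = w_0, w_1, \ldots, w_{2^k} = v$ in $G^{(k)}$ traced through successive bisections. Given a near-optimal $(f, y)$ for $\gamma^{(1)}(G^{(k)})$, set $f_G(v) := f(v) - \mu_G$ for $v \in V(G)$, where $\mu_G := \frac{1}{|V(G)|}\sum_{v \in V(G)} f(v)$, ensuring $\sum_v f_G(v) = 0$. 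Partition $V(G^{(k)})$ into Voronoi cells $\{P_v\}_{v \in V(G)}$ based on graph distance to $V(G)$ in $G^{(k)}$ (with $v \in P_v$); a structural check shows that each natural $2^k$-path associated with an edge $uv \in E(G)$ lies in $P_u \cup P_v$. Set
\[
y_G(v) := C \cdot 2^k \cdot \frac{Z}{Z_G} \cdot \sum_{w \in P_v} y(w),
\]
for a universal constant $C$, where $Z := \sum_{V(G^{(k)})}|f|^2$ and $Z_G := \sum_{V(G)}|f_G|^2$.

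For each edge $uv \in E(G)$, Cauchy--Schwarz along the natural $2^k$-path combined with the LP constraints of $\gamma^{(1)}(G^{(k)})$ gives
$(f_G(u) - f_G(v))^2 = (f(u) - f(v))^2 \leq 2^k \sum_i (f(w_i) - f(w_{i+1}))^2 \leq 2^{k+1}\bigl(\sum_{w \in \text{path}} y(w)\bigr) \cdot Z$,
which by the cell structure is at most $(y_G(u) + y_G(v)) \cdot Z_G$, verifying the edge constraint with $C$ any sufficiently large constant. Summing $y_G$ over $V(G)$ then yields $\sum_v y_G(v) \lesssim 2^k (Z/Z_G)\, \gamma^{(1)}(G^{(k)})$ since $\{P_v\}$ is a partition. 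Because $|V(G^{(k)})| \gtrsim 4^k |V(G)|$ for triangulated genus-$g$ graphs (iterating the Euler-formula estimate $|V(G^{(i+1)})| \approx 4|V(G^{(i)})|$, with the small-$|V(G)|/g$ regime handled separately by the trivial bound $\gamma^{(1)} \lesssim 1$), the $2^k$ factor from Cauchy--Schwarz is absorbed by the $4^k$ ratio in the vertex counts, giving $|V(G)|\cdot\gamma^{(1)}(G) \lesssim (Z/Z_G)\cdot |V(G^{(k)})|\cdot\gamma^{(1)}(G^{(k)})$.

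The main obstacle is bounding $Z/Z_G \leq \Delta^c$ by a quantity polynomial in $\Delta$ but independent of $k$. This calls for a smoothness argument: using the LP constraints $(f(u')-f(v'))^2 \leq (y(u')+y(v'))Z$ on each edge of $G^{(k)}$ to iteratively propagate bounds from values on $V(G)$ across the subdivision tree, one obtains $\sum_{V(G^{(k)}) \setminus V(G)}|f|^2 \lesssim \Delta^{c'} \sum_{V(G)}|f|^2$ up to a lower-order term proportional to $\gamma^{(1)}(G^{(k)}) \cdot Z$ that is absorbable by rearrangement whenever $\gamma^{(1)}(G^{(k)})$ is below a constant threshold (the complementary regime making the lemma trivial). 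The most delicate sub-case is when the mean $\mu_G$ absorbs most of the variance, so that $Z_G \ll \sum_{V(G)}|f|^2$; this is handled either by replacing $(f,y)$ with an equivalent optimal solution of favourable centering, or by arguing that a near-constant $f$ on $V(G)$ forces the LP objective to be already so large (via the same smoothness constraints) that the claimed bound holds directly.
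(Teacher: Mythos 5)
The paper's proof of this lemma uses Kelner's \emph{random embedding} construction: random variables $\pi_V$ mapping each $v\in V(G)$ uniformly into a support $p(v)\subset V(G^{(k)})$ of size $\Theta_\Delta(4^k)$, and random paths $\pi_E$ with the property that each vertex and edge of $G^{(k)}$ lies in the support of only $O_\Delta(1)$ of the $\pi_E(e)$. It then derandomizes via the averaging inequality $\min_x \tfrac{a(x)}{b(x)}\leq\tfrac{\sum a(x)}{\sum b(x)}$, so that the crucial denominator $\sum_x|f_G(x)|^2$ and the crucial numerator (sum of $y_H$ over chosen paths) scale together. Your proposal instead fixes the map deterministically to the identity on $V(G)\subset V(G^{(k)})$, and introduces Voronoi cells $P_v$ so that $y_G$ aggregates $y$ over $P_v$. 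This is a genuinely different route, but it has a gap that I believe is fatal.

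The gap is the claim that $Z/Z_G\leq\Delta^{c'}$ independently of $k$. This cannot hold: $V(G)$ is a $\Theta_\Delta(4^{-k})$-fraction of $V(G^{(k)})$, and there is no structural reason the optimal $f$ should concentrate a more-than-proportional share of $\sum|f|^2$ on those $n$ fixed vertices. The natural (and best achievable) bound is $Z/Z_G\lesssim_\Delta 4^k$, not a $k$-free polynomial in $\Delta$. Concretely: take $G=K_4$; the optimal $f$ for $\gamma^{(1)}(G^{(k)})$ is roughly a smooth coordinate-type function on the $k$-fold subdivision, and the four corners carry only a $\Theta(4^{-k})$-fraction of $\sum|f|^2$, giving $Z/Z_G=\Theta(4^k)$. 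Your smoothness argument cannot improve this: the error term it produces is $O_\Delta(2^k)\cdot(\text{paths through each vertex})\cdot\gamma^{(1)}(G^{(k)})\cdot Z$, and because each vertex near $v\in V(G)$ lies on roughly $\Theta_\Delta(4^k)$ of the Voronoi-tree paths emanating from $v$, this error is of order $\Theta_\Delta(8^k)\gamma^{(1)}(G^{(k)})Z$, which is not absorbable even in the interesting regime $\gamma^{(1)}(G^{(k)})\approx_\Delta 4^{-k}$.

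Even granting $Z/Z_G\lesssim_\Delta 4^k$, your chain gives $\gamma^{(1)}(G)\lesssim 2^k(Z/Z_G)\gamma^{(1)}(G^{(k)})\lesssim_\Delta 8^k\gamma^{(1)}(G^{(k)})$, whereas the lemma requires $\gamma^{(1)}(G)\lesssim_\Delta \frac{|V(G^{(k)})|}{|V(G)|}\gamma^{(1)}(G^{(k)})\approx_\Delta 4^k\gamma^{(1)}(G^{(k)})$; you are off by a factor of $2^k$, which diverges as $k\to\infty$. (Your final displayed inequality $|V(G)|\gamma^{(1)}(G)\lesssim(Z/Z_G)|V(G^{(k)})|\gamma^{(1)}(G^{(k)})$ silently drops a $2^{-k}$ that appears when you replace $n$ by $n'/4^k$, masking the true requirement $Z/Z_G\lesssim_\Delta 2^k$.) The paper's random-restriction trick is precisely what avoids this: in expectation the restricted squared norm is $\Theta_\Delta(4^{-k})Z$, the random paths touch each vertex $O_\Delta(1)$ times rather than $\Theta_\Delta(4^k)$ times, and the ratio formulation means one never has to lower-bound the denominator in isolation. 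Without replacing your deterministic identity map by a random (or at least cleverly chosen) representative map into each Voronoi cell, the argument cannot close.
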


We defer the proof of 
\cref{lemma:subdivision-bound-propagation} to \cref{sec:deferred-subdivision-proof}
since it is similar to that of
\cref{lemma:uniform-shallow-minor-application},
and the details are quite technical.
Essentially, it will make use of an argument of Kelner%
~\cite{kelner2006new,kelner2004spectral,kelner2006spectral}
that randomly ``embeds'' $G$ into $G^{(k)}$.
The main distinguishing feature compared to the proof of
\cref{lemma:uniform-shallow-minor-application} is that
this randomized argument includes not just random vertices, but also random paths
between vertices.

The second step is to show that a sufficient number of subdivisions allows us to obtain a bound.
This step will make use of the following lemma
essentially proved by Kelner:

\begin{lemma}[{\cite[Lemma 5.3, Lemma 5.4, proof of Theorem 2.3]{kelner2006spectral}}]
\label{lemma:kelner-extracted}
Let $G$ be a graph of genus $g$.
For each $k$, let $F^{(k)}$ denote a genus-$g$ surface on which $G^{(k)}$ admits a circle packing.
For each vertex $v\in G^{(k)}$, let $C_v$ denote the disk in the packing corresponding to $v$,
and let $r_v$ denote its radius.
Let $A(C)$ and $D(C)$ denote the area (measure) and diameter (longest geodesic path between a point pair),
respectively, of a compact and connected region $C$ either in $F^{(k)}$ or $S^2$.
Then there is a sequence
$f^{(k)}:F^{(k)}\to S^2$
of analytic maps
so that,
for any $\varepsilon>0$,
there exists a threshold $N$
such that for any $k\geq N$
a partition of $S^{(k)}$
into $S^{(k)}_1\cup S^{(k)}_2$
exists with the following properties:
\begin{itemize}
    \item For any vertex $v\in G^{(k)}$
        where $C_v\subset S^{(k)}_1$,
        $$r_v^2\lesssim D\left(f^{(k)}(C_v)\right)^2\lesssim 
        A\left(f^{(k)}(C_v)\right).$$
    \item The projection of $S^{(k)}_1$
        with $f^{(k)}$ has ply $O(g)$.
        That is: For each point $p\in S^2$,
        $$\left|\left(f^{(k)}\right)^{-1}(p)\right|\lesssim g,$$
        almost surely.
    \item
        $\sum_{C_v:C_v\cap S^{(k)}_2\neq\emptyset}D\left(f^{(k)}(C_v)\right)^2
        \lesssim\varepsilon.$
    \item The mapping of the centres of each $C_v$ ($v\in V(G^{(k)})$)
        under $f^{(k)}$,
        for $S^2$ embedded in $\RR^3$ in the standard manner,
        is exactly the origin.
\end{itemize}
\end{lemma}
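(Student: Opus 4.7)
The plan is to adapt Kelner's argument essentially verbatim, since the statement is a repackaging of his Lemma 5.3, Lemma 5.4, and the final portion of his proof of Theorem 2.3. The four required pieces of machinery are: (i) the He--Schramm circle packing theorem for genus-$g$ surfaces, (ii) an analytic branched covering $F^{(k)}\to S^2$ coming from uniformization, (iii) a Koebe-type distortion estimate away from branch points, and (iv) a Möbius normalization for the centroid, analogous to the one used in \cref{thm:balls-to-surface}.

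First, for each $k$, invoke the He--Schramm theorem on the triangulation $G^{(k)}$ to realize it as a univalent circle packing on a genus-$g$ Riemann surface $F^{(k)}$, giving the disks $C_v$ and radii $r_v$. Then construct the analytic maps $f^{(k)}:F^{(k)}\to S^2$ as branched coverings. By Riemann--Hurwitz, the degree of such a covering need only be $O(g)$, with a corresponding finite set $B^{(k)}\subset S^2$ of branch points. Away from $B^{(k)}$ the map is a local biholomorphism, and every non-branch point has exactly $O(g)$ preimages. Identifying $S_1^{(k)}$ with the portion of the packing lying over $S^2\setminus N_\varepsilon(B^{(k)})$ (for a small $\varepsilon$-neighborhood $N_\varepsilon$) and $S_2^{(k)}$ with its complement, the ply bullet follows immediately on $S_1^{(k)}$ almost surely.

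Second, the two distortion inequalities $r_v^2\lesssim D(f^{(k)}(C_v))^2\lesssim A(f^{(k)}(C_v))$ on $S_1^{(k)}$ follow from the Koebe quarter theorem and Koebe distortion theorem applied to the conformal restrictions of $f^{(k)}$ to small disks that avoid the branch locus: the conformal derivative has bounded oscillation on such disks, so the image of $C_v$ is sandwiched between two Euclidean balls of comparable radii. For the $\sum D(f^{(k)}(C_v))^2 \lesssim\varepsilon$ bullet, pick $N$ large enough that for every $k\geq N$ the hexagonal-subdivision radii $r_v$ are uniformly much smaller than $\varepsilon$; then only those $C_v$ whose images genuinely enter $N_\varepsilon(B^{(k)})$ can hit $S_2^{(k)}$, and a packing area argument bounds $\sum_{v\in S_2^{(k)}}D(f^{(k)}(C_v))^2$ by a constant times the total area of $N_\varepsilon(B^{(k)})$, which is $O(\varepsilon)$ as $B^{(k)}$ is a finite set.

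Finally, to achieve the centroid condition, compose $f^{(k)}$ with a Möbius transformation of $S^2$. Möbius transformations act transitively and continuously on $S^2$, and a standard degree-theoretic argument (the same one that underlies \cref{thm:balls-to-surface}) shows that for any finite non-degenerate point set one can find a Möbius map sending its centroid to the origin in the standard embedding $S^2\subset\RR^3$. Composition with a Möbius map preserves analyticity and only shifts the constants in the previous three bullets by universal factors, so all properties survive simultaneously. The main obstacle is the uniform control in the distortion bullet: one must check that the Koebe-type constants do not degenerate on $S_1^{(k)}$ as $k$ grows, and that the bad set $S_2^{(k)}$ absorbs precisely those circles whose images are significantly distorted. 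This is the technical heart of Kelner's Lemmas 5.3 and 5.4, and rather than reprove it the plan is to cite those estimates directly.
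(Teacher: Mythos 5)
Your proposal matches the paper's own treatment: the paper does not reprove this lemma either, presenting it as a tailored extraction of Kelner's Lemma~5.3, Lemma~5.4, and the proof of Theorem~2.3, and directing the reader to~\cite{kelner2006spectral} and~\cite[Chapter 11]{kelner2006new} for the full exposition, which is exactly what you end up doing. Your inventory of the ingredients (He--Schramm packing, a low-degree analytic branched cover of $S^2$, Koebe-type distortion control away from the branch locus, a M\"obius normalization for the centroid) is faithful to Kelner's outline.

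Two small caveats in the sketch, neither of which is a gap since you are citing rather than reproving: the existence of a degree-$O(g)$ branched cover $F^{(k)}\to S^2$ comes from Riemann--Roch (a meromorphic function of degree at most $g+1$), not Riemann--Hurwitz, which merely constrains the branching data once the degree is fixed. More importantly, the assertion that post-composing with the normalizing M\"obius map ``only shifts the constants in the previous three bullets by universal factors'' is not automatic --- a M\"obius transformation of $S^2$ can inflate a small geodesic ball to nearly a hemisphere, so one must show that the \emph{specific} transformation achieving the centroid condition does not degenerate the distortion or ply estimates. That is part of the technical content of Kelner's Lemma~5.4 (and, analogously, the degree-theoretic fixed-point argument behind Spielman--Teng's construction in \cref{thm:balls-to-surface}), and your deferral to those references is where that obligation is discharged; the sentence as written could mislead a reader into thinking the composition step is free.
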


This is a significant simplification of the sequence of results used by Kelner,
tailored for our purposes.
We note, for the interested reader, that Kelner gives a complete exposition of the required steps
that can be used to prove the above statement
only in the journal version of his work~\cite{kelner2006spectral}
and his thesis~\cite[Chapter 11]{kelner2006new}.
Only a high-level outline is given in the original paper~\cite{kelner2004spectral}.

This lemma gives us an analogous approximate circle packing construction for bounded-genus graphs
after a sufficient number of hexagonal subdivisions.
We will now use it to prove the following lemma:

\begin{lemma}
\label{lemma:subdivision-bound-deep}
Let $G$ be a triangulation of genus $g$.
Then, there is some threshold $N\geq0$
so that for all $k\geq N$,
$\gamma^{(1)}(G^{(k)})\lesssim\gamma^{(3)}(G^{(k)})\lesssim\frac{g}{|V(G^{(k)})|}$.
\end{lemma}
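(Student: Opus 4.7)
The plan is to mirror the proof of \cref{thm:planar-lambda-2-star}, replacing the exact circle packing on $S^2$ with the approximate one obtained from \cref{lemma:kelner-extracted}. Specifically, I will construct a feasible solution to $\dot\gamma^{(3)}(G^{(k)})$ whose objective is $\lesssim g/|V(G^{(k)})|$, then translate the bound to $\gamma^{(3)}$ using \cref{lemma:gamma-to-dot-gamma}, and finally to $\gamma^{(1)}$ using the dimension-blowup bound $\gamma^{(1)}(H)\lesssim d\,\gamma^{(d)}(H)$ mentioned below \cref{thm:degree-dim-red} (with $d=3$, so the factor is absorbed).

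For the construction, fix a small constant $\varepsilon>0$, and let $N$ be the threshold from \cref{lemma:kelner-extracted}. For $k\geq N$, pick the analytic map $f^{(k)}:F^{(k)}\to S^2$ and partition $S^{(k)}=S_1^{(k)}\cup S_2^{(k)}$ guaranteed by the lemma. For each vertex $v\in V(G^{(k)})$ with packing disk $C_v$ of center $z_v\in F^{(k)}$, I define $f(v):=f^{(k)}(z_v)\in S^2\subset\RR^3$ and $s(v):=D\bigl(f^{(k)}(C_v)\bigr)$. Verifying the constraints of \cref{eq:overline_gamma_d}: the centroid condition $\sum_v f(v)=\overline 0$ is exactly the fourth bullet of \cref{lemma:kelner-extracted}; for an edge $uv\in E(G^{(k)})$ the packing disks $C_u,C_v$ share a tangent point $p$, so $f^{(k)}(C_u)\cap f^{(k)}(C_v)\ni f^{(k)}(p)$, and therefore by the triangle inequality in $\RR^3$,
\[
\|f(u)-f(v)\|_2 \;\leq\; D\bigl(f^{(k)}(C_u)\bigr)+D\bigl(f^{(k)}(C_v)\bigr)\;=\;s(u)+s(v).
\]
Finally, $\sum_v\|f(v)\|_2^2=|V(G^{(k)})|$ because every $f(v)$ lies on the unit sphere.

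To bound the numerator $\sum_v s(v)^2$, I split the sum by whether $C_v$ is entirely contained in $S_1^{(k)}$ or meets $S_2^{(k)}$. For the second group, the third bullet of \cref{lemma:kelner-extracted} directly gives $\sum_{C_v\cap S_2^{(k)}\neq\emptyset}D(f^{(k)}(C_v))^2\lesssim\varepsilon$. For the first group, the first bullet gives $s(v)^2=D(f^{(k)}(C_v))^2\lesssim A(f^{(k)}(C_v))$, and the ply-$O(g)$ property combined with $A(S^2)=4\pi$ yields
\[
\sum_{C_v\subset S_1^{(k)}}A\bigl(f^{(k)}(C_v)\bigr)\;\lesssim\;g\cdot A(S^2)\;\lesssim\;g.
\]
Combining, $\sum_v s(v)^2\lesssim g+\varepsilon\lesssim g$ (choosing $\varepsilon=1$, say). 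Therefore
\[
\dot\gamma^{(3)}(G^{(k)})\;\leq\;\frac{\sum_v s(v)^2}{\sum_v\|f(v)\|_2^2}\;\lesssim\;\frac{g}{|V(G^{(k)})|}.
\]

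Applying \cref{lemma:gamma-to-dot-gamma} gives $\gamma^{(3)}(G^{(k)})\leq 2\dot\gamma^{(3)}(G^{(k)})\lesssim g/|V(G^{(k)})|$, and then the dimension-blowup bound gives $\gamma^{(1)}(G^{(k)})\lesssim 3\gamma^{(3)}(G^{(k)})\lesssim g/|V(G^{(k)})|$, completing the proof. The only real ``obstacle'' is checking that the constraints of $\dot\gamma^{(3)}$ are compatible with the approximate (rather than exact) circle packing supplied by \cref{lemma:kelner-extracted}; the potential trouble region $S_2^{(k)}$ is precisely what the $\varepsilon$-term in the lemma is designed to absorb, so by taking $k$ large this contribution becomes negligible compared to $g$ and the planar-style area-vs-ply argument carries over essentially unchanged on $S_1^{(k)}$.
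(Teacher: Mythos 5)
Your proof is correct and follows essentially the same route as the paper's: you build a feasible solution to $\dot\gamma^{(3)}(G^{(k)})$ from the approximate packing supplied by \cref{lemma:kelner-extracted}, split the sum of squared radii according to the partition $S_1^{(k)}\cup S_2^{(k)}$, bound the first part via the ply-$O(g)$ area argument and the second via the $\varepsilon$-term, then pass to $\gamma^{(3)}$ through \cref{lemma:gamma-to-dot-gamma} and to $\gamma^{(1)}$ via the $\gamma^{(1)}\lesssim d\,\gamma^{(d)}$ bound. You are actually a bit more careful than the paper in two places — you spell out why the edge constraint holds (via the shared tangent point and chord-$\leq$-geodesic-$\leq$-diameter chain), and you explicitly invoke the dimension-blowup step that the paper leaves implicit — so no corrections are needed.
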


\begin{proof}
Apply \cref{lemma:kelner-extracted}
with $\varepsilon=O(1)$
to get $N$.
Fix any $k\geq N$.
Denote $n^{(k)}:=|V(G^{(k)})|$.
We will now choose the values $y,f$ for the formulation of $\dot\gamma^{(3)}$
to obtain a similar argument
to the proof of
\cref{prop:planar-lambda-2-star}.
For each $v\in V(G^{(k)})$,
choose $s(v):=D(f^{(k)}(C_v))$.
Choose $f(v)$
as the image of the centre of $C_v$
under $f^{(k)}$.
Note that under this
choice, $\sum_{v\in V(G^{(k)})}f^{(k)}(v)$ is exactly the origin.

For each
$uv\in E(G^{(k)})$,
$$||f(v)-f(u)||_2\leq D(f^{(k)}(C_v))+D(f^{(k)}(C_u))=s(u)+s(v).$$
Hence, all the constraints are satisfied.

Note that $\sum_{v\in V(G^{(k)})}|f^{(k)}(v)|^2=n^{(k)}$.
Let $V_1=\{v\in V(G^{(k)}):C_v\subset S_1\}$
and $V_2=\{v\in V(G^{(k)}):C_v\cap S_2\neq\emptyset\}$.
Then,
$$n^{(k)}\cdot\dot\gamma^{(3)}(G^{(k)})
=\sum_{v\in V(G^{(k)})}D(f^{(k)}(C_v))^2
=\sum_{v\in V_1}D(f^{(k)}(C_v))^2
+\sum_{v\in V_2}D(f^{(k)}(C_v))^2$$
$$\lesssim\sum_{v\in V_1}A(f^{(k)}(C_v))
+\varepsilon
\lesssim g\cdot A(S^2)
+\varepsilon
\lesssim g,
$$
so $\dot\gamma^{(3)}(G^{(k)})\lesssim\frac g{n^{(k)}}$.
\end{proof}

\subsubsection{Combining the Lemmas}

The combination of
\cref{lemma:genus-degree-reduction},
\cref{lemma:genus-triangulated-reduction},
\cref{lemma:subdivision-bound-propagation}
and
\cref{lemma:subdivision-bound-deep}
together imply the main result:

\begin{proposition}[Restatement of \cref{prop:genus-geometric-bound}]
\label{prop:genus-geometric-bound-restate}
Let $G$ be a graph with $n$ vertices,
maximum degree $\Delta$, and genus at most $g$.
Then,
$$\gamma^{(1)}(G)\lesssim\frac{g\log\Delta}n.$$
\end{proposition}

\begin{proof}
Let $G'$ be the genus-$g$ graph on $n\Delta$ vertices with maximum degree at most $4=O(1)$
produced by
\cref{lemma:genus-degree-reduction}.
Since the maximum degree of $G'$ is constant,
a triangulated genus-$g$ graph $G''$ with $O(n\Delta)$ vertices and constant maximum degree
can be obtained by
\cref{lemma:genus-triangulated-reduction}
so that
$\gamma^{(1)}(G')\lesssim\gamma^{(1)}(G'')$.
By 
\cref{lemma:subdivision-bound-propagation},
and
\cref{lemma:subdivision-bound-deep}
$\gamma^{(1)}(G'')\lesssim\frac{g}{n\Delta}$,
since the degree of $G''$ is constant.
Hence,
$$\gamma^{(1)}(G)
\lesssim\gamma^{(1)}(G')\cdot\Delta\log\Delta
\leq\gamma^{(1)}(G'')\cdot\Delta\log\Delta
\lesssim\frac{g\log\Delta}n.$$
\end{proof}

\section{Bounds via $1$-dimensional Extremal $L_p$-Embedded Spread}
\label{sec:bounds-metric}

In this section, we will prove
\cref{prop:gamma-s2-relationship}, which we restate here:

\begin{proposition}[Restatement of \cref{prop:gamma-s2-relationship}]
For a graph $G$ with maximum degree $\Delta$,
$$\gamma^{(1)}(G)
\lesssim
\left[\alpha(G)\right]^2
\cdot\frac{n^3}{\left[\overline s_2(G)\right]^2}.$$
\end{proposition}

When combined with
\cref{prop:genus-s2-bound}
and
\cref{prop:minor-free-s2-bound},
this will imply the remaining bounds in
\cref{thm:all-base-gamma-bounds}.

To do this, we will use the definition of
$L^p$-embedded spread, $\overline{Q}_p^{(d)}$,
given in
\cref{def:embedded-spread}.
Recall the lemma relating $\overline{Q}_2^{(d)}$ and $\gamma^{(d)}$:

\begin{lemma}[Restatement of \cref{lemma:qd-to-gammad}]
For a graph $G$,
$\gamma^{(d)}(G)=\frac{2n}{\overline Q^{(d)}_2(G)}$.
\end{lemma}

Given this equivalence,
\cref{prop:gamma-s2-relationship}
is equivalent to the following lemma:
\begin{lemma}
\label{lemma:s2-to-q21}
Let $G$ be a graph,
$$\left[\alpha(G)\right]^2\cdot\overline Q_2^{(1)}(G)\gtrsim\frac{\left[\overline s_2(G)\right]^2}{n^2}.$$
\end{lemma}

To prove this, we will
apply Cauchy-Schwarz directly to $\overline s_2(G)$,
and then apply a metric embedding to the resulting quantity.
The specific metric embedding result we will use was proven by
Biswal, Lee, and Rao~\cite{biswal2010eigenvalue}:
\begin{proposition}[{\cite[Theorem 4.4]{biswal2010eigenvalue}}]
\label{prop:blr-dimred}
For a metric $(X,d)$ with padded partition modulus $\alpha(X,d)$,
there exists an embedding $f:X\to\RR$ such that:
\begin{itemize}
    \item For every $x,y\in X$, $|f(x)-f(y)|\leq d(x,y)$ (i.e., $f$ is \defn{non-expansive}).
    \item $\alpha(X,d)^2\sum_{x,y\in X}|f(x)-f(y)|^2\gtrsim\sum_{x,y\in X}d(x,y)^2$.
\end{itemize}
\end{proposition}
We previously used a more general algorithmic form of this result in the proof of \cref{lemma:dim-red-Q-alpha}.

The first step is given by the following lemma:

\begin{lemma}
\label{lemma:s2-to-s22}
For a graph $G$,
$$\frac{\left[\overline s_2(G)\right]^2}{n^2}\leq
\sup_{\omega:V(G)\to\RR_{\geq0},||\omega||_2\leq1}\sum_{u,v\in V(G)}\left[d_G^\omega(u,v)\right]^2.$$
\end{lemma}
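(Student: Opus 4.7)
The proof should be a direct application of the Cauchy–Schwarz inequality, applied pointwise in $\omega$, followed by taking the supremum.

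The plan is to fix an arbitrary non-negative weighting $\omega : V(G) \to \RR_{\geq 0}$ with $\|\omega\|_2 \leq 1$ and use Cauchy–Schwarz on the sum $\sum_{u,v} d_G^\omega(u,v)$ viewed as an inner product of $d_G^\omega(u,v)$ against the constant vector $1$ over the $n^2$-element index set $V \times V$. This yields
\[
\Bigl(\sum_{u,v\in V(G)} d_G^\omega(u,v)\Bigr)^{2}
\;=\;\Bigl(\sum_{u,v\in V(G)} d_G^\omega(u,v)\cdot 1\Bigr)^{2}
\;\leq\;\Bigl(\sum_{u,v\in V(G)} 1\Bigr)\Bigl(\sum_{u,v\in V(G)} [d_G^\omega(u,v)]^{2}\Bigr)
\;=\;n^{2}\sum_{u,v\in V(G)} [d_G^\omega(u,v)]^{2}.
\]
Dividing both sides by $n^2$ gives the pointwise-in-$\omega$ bound
\[
\frac{\bigl(\sum_{u,v} d_G^\omega(u,v)\bigr)^{2}}{n^{2}} \;\leq\; \sum_{u,v} [d_G^\omega(u,v)]^{2}.
\]

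Next I would take the supremum over all admissible $\omega$ on both sides. Because $\sum_{u,v} d_G^\omega(u,v)$ is non-negative for every $\omega$, the sup commutes with squaring, so $\sup_\omega \bigl(\sum_{u,v} d_G^\omega(u,v)\bigr)^{2} = \bigl(\sup_\omega \sum_{u,v} d_G^\omega(u,v)\bigr)^{2} = [\overline{s}_2(G)]^{2}$ by the definition of $\overline{s}_2$. On the right side, the sup is exactly the quantity appearing in the lemma. Combining gives the claim.

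There is essentially no obstacle here: the only mild subtlety is recognizing that the sup of the pointwise inequality yields the lemma (which works because both sides vary over the same feasible set and the left-hand function is non-negative so taking sup and squaring commute). No metric-embedding or structural graph argument is needed — this is purely the $L^1$-versus-$L^2$ comparison on the $n^2$ pairs, and exactly mirrors the analogous Cauchy–Schwarz step used by Biswal, Lee, and Rao in their bound on $\lambda_2$.
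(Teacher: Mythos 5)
Your proof is correct and takes exactly the approach the paper intends: the paper's own proof of this lemma is just the one-line remark ``The result follows directly from Cauchy--Schwarz,'' and your pointwise Cauchy--Schwarz bound followed by a supremum over $\omega$ is precisely the intended elaboration of that remark.
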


\begin{proof}
The result follows directly from Cauchy-Schwarz.
\end{proof}

The second step is then given by the following lemma:

\begin{lemma}
\label{lemma:s22-to-q21}
For a graph $G$,
$$
\left[\alpha(G)\right]^2
\cdot\overline Q_2^{(1)}(G)\gtrsim\sup_{\omega:V(G)\to\RR_{\geq0},||\omega||_2\leq1}\sum_{u,v\in V(G)}\left[d_G^\omega(u,v)\right]^2.$$
\end{lemma}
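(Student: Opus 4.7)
The plan is to produce a feasible solution to the maximization defining $\overline Q_2^{(1)}(G)$ from any competitor $\omega$ for the quantity on the right-hand side. Fix an arbitrary vertex-weight function $\omega:V(G)\to\RR_{\geq0}$ with $\|\omega\|_2\leq 1$, inducing the vertex-weighted shortest-path metric $d_G^\omega$. I would then invoke \cref{thm:pad-to-non-expansive} with $p=2$ applied to the metric space $(V(G),d_G^\omega)$: by the definition of $\overline\alpha(G)$, the modulus of padded decomposability of this metric is at most $\overline\alpha(G)$, so there is a non-expansive map $f:V(G)\to\RR$ with average $2$-distortion at most $C_2\,[\overline\alpha(G)]^2$. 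In other words,
\[
\sum_{u,v\in V(G)}|f(u)-f(v)|^2 \;\gtrsim\; \frac{1}{[\overline\alpha(G)]^2}\sum_{u,v\in V(G)}\bigl[d_G^\omega(u,v)\bigr]^2.
\]

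Next I would set $y(v):=\omega(v)^2$ and use $(f,y)$ as a candidate solution to the program defining $\overline Q_2^{(1)}(G)$. Feasibility has two pieces. The normalization is immediate: $\|y\|_1=\sum_v \omega(v)^2=\|\omega\|_2^2\leq 1$. For the edge constraint, I would use that for any edge $uv\in E(G)$, the single-edge path certifies $d_G^\omega(u,v)\leq (\omega(u)+\omega(v))/2$. Combining with non-expansiveness of $f$ and the elementary inequality $\tfrac14(a+b)^2\leq \tfrac12(a^2+b^2)$ yields
\[
|f(u)-f(v)|^2 \;\leq\; \bigl[d_G^\omega(u,v)\bigr]^2 \;\leq\; \tfrac{1}{2}\bigl(\omega(u)^2+\omega(v)^2\bigr) \;\leq\; y(u)+y(v),
\]
so $(f,y)$ is feasible.

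Finally, the objective value achieved by this feasible pair is $\sum_{u,v}|f(u)-f(v)|^2$, which by the average-distortion estimate above is at least $\Omega\bigl(\sum_{u,v}d_G^\omega(u,v)^2/[\overline\alpha(G)]^2\bigr)$. Therefore
\[
[\overline\alpha(G)]^2\cdot\overline Q_2^{(1)}(G) \;\gtrsim\; \sum_{u,v\in V(G)}\bigl[d_G^\omega(u,v)\bigr]^2,
\]
and taking the supremum over all $\omega$ with $\|\omega\|_2\leq 1$ gives the claim.

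I do not expect any serious obstacle: the only slightly non-obvious choice is $y(v)=\omega(v)^2$ (rather than $\omega(v)$), which is forced by the need to pass from a (non-squared) vertex-weighted path bound to the squared edge constraint in $\overline Q_2^{(1)}$ while simultaneously matching the $L^2$ normalization on $\omega$. Everything else is a direct invocation of \cref{thm:pad-to-non-expansive}, non-expansiveness, and two elementary inequalities.
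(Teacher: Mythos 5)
Your proof is correct and follows essentially the same route as the paper's: invoke \cref{thm:pad-to-non-expansive} on $(V,d_G^\omega)$ with the bound $\alpha(V,d_G^\omega)\le\overline\alpha(G)$, set $y=\omega^2$, verify feasibility via the power-mean inequality, and read off the objective from the average $2$-distortion guarantee. The only cosmetic difference is that you use the sharp single-edge bound $d_G^\omega(u,v)\le(\omega(u)+\omega(v))/2$ directly, which lets you skip the paper's rescaling $f:=f'/2$; both variants work.
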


\begin{proof}
Apply \cref{prop:blr-dimred} with $p=2$
to get a non-expansive embedding $f':V\to\RR$
from $(V,d_G^{\omega})$ to $\RR$.
Using $y(v):=\omega(v)^2$
and $f(v):=f'(v)/2$,
we claim the constraints of $\overline Q_2^{(1)}(G)$ are satisfied:
For the normalization constraint, $||y||_1=\sum_{v\in V}\omega(v)^2=||\omega||_2^2\leq1$.
For the edge constraints,
$$y(u)+y(v)=\omega(u)^2+\omega(v)^2\geq\frac12\left[\omega(u)+\omega(v)\right]^2\geq\frac12|f'(u)-f'(v)|^2
\geq|f(u)-f(v)|^2.$$
Thus, $y,f$ forms a feasible solution to $\overline Q_2^{(1)}(G)$.
Finally,
for the objective value,
it follows that
$$\sum_{u,v\in V(G)}\left[d_G^\omega(u,v)\right]^2\lesssim
\left[\alpha(G)\right]^2\cdot\sum_{u,v\in V(G)}|f'(u)-f'(v)|^2\lesssim
\left[\alpha(G)\right]^2\cdot\sum_{u,v\in V(G)}|f(u)-f(v)|^2.$$
\end{proof}

Finally, we can prove
\cref{lemma:s2-to-q21}:
\begin{proof}[Proof of \cref{lemma:s2-to-q21}]
Combine
\cref{lemma:s2-to-s22} and \cref{lemma:s22-to-q21}.
\end{proof}

Consequently, we have completed the proof of
\cref{prop:gamma-s2-relationship},
and therefore we have proved
the remaining bounds stated in
\cref{thm:all-base-gamma-bounds}.

\section{Conclusion and Future Work}

In order to
obtain guarantees on the performance
of reweighted spectral partitioning,
we have provided bounds on $\gamma^{(n)}(G)$ for a number of graph classes.
However, very few of them are tight, even up to constant factors.
We pose the following open questions:
\begin{itemize}
    \item For a graph $G$ of $n$ vertices with genus $g$,
        is it true that $\gamma^{(n)}(G)\lesssim\frac gn$?
        A positive answer would also imply a conjecture
        of Spielman and Teng~\cite[Conjecture 1]{spielman1996spectral, spielman2007spectral}.
    \item For a graph $G$ of $n$ vertices with no $K_h$-minor,
        is it true that $\gamma^{(n)}(G)\lesssim\frac{h^2}n$?
        If true, this would be an even stronger result than the first question.
    \item For a graph $G$ of $n$ vertices with no $K_h$-minor,
        can it be shown that
        $\gamma^{(1)}(G)\lesssim\gamma^{(n)}(G)\log h$?
        If true, this would strengthen \cref{thm:refined-cheeger}.
        Such a result would likely involve a slightly different construction
        than padded partitions.
\end{itemize}

We also ask if
\cref{alg:rp-red} could be replaced by
an algorithm that does not directly depend on
the partition oracle,
but still obtains the same guarantees.
By analyzing the argument of
Biswal, Lee, and Rao~\cite[Theorem 4.4]{biswal2010eigenvalue},
it can be shown that
it would suffice to obtain a constant-factor pseudo-approximation algorithm
for the \defn{obnoxious $k$-median problem}
(specifically with $k=\Omega(n)$),
which asks for a set $S$ of $k$ points in an approximate metric space $(X,d)$
(satisfying the triangle inequality up to a constant factor)
maximizing $\sum_{x\in X}\min_{s\in S}d(x,s)$.
The intuition for this problem is that proximity to some types of facilities may be undesirable
(e.g., garbage dumps).
Here, ``pseudo-approximation'' means that
it would suffice to obtain a set of $k'=\Omega(n)<k$ points
that produces a constant-factor approximation for the best solution with $k$ points.
This problem is well-studied in practice~\cite{lin2020alternative,colmenar2017multi,gokalp2020iterated,lin2018hybrid,belotti2007branch,salazar2025efficient},
but has seen only limited study in theory~\cite{kalczynski2022obnoxious}.

\ifanonymous
\else
\section{Acknowledgements}

Thank you to Therese Biedl for pointing out that not every embedded graph in the torus
can be simply triangulated, which led to the $K_5$ example discussed at the start of
\cref{subsec:genus-graphs-geometric-bound}.

Thank you also to Lap Chi Lau for his excellent course in 2022,
in which the some of the preliminary results of this work began as a course project.

In parallel to this work,
and also as a part of the same course,
Kam Chuen Tung independently derived
essentially the same result as
\cref{prop:planar-lambda-2-star-restatement},
and similar bounds for $\gamma^{(1)}$
based on $L^2$-extremal spread for bounded-genus graphs and $K_h$-minor-free graphs.
These results have appeared in his recent PhD thesis~\cite[Chapter 7]{tung2025reweighted},
which also includes some interesting generalizations.

After the initial preprint of this work, Lap Chi Lau notified the author
that he was also independently aware of the bound on $\gamma^{(1)}$
for planar graphs stated in \cref{prop:planar-lambda-2-star-restatement} prior to either work.
\fi

\bibliographystyle{alpha}
\bibliography{citations}

\pagebreak
\appendix

\section{Graph Class Definitions}
\label{sec:class-defs}

There are a number of graph classes that are relevant to our results.
The most notable is the well-known class of \defn{planar graphs},
which are graphs that can be drawn on the plane without crossing edges.
Most of the other classes we will consider generalize planar graphs in some manner.

\begin{definition}
A graph $G$ is said to have \defn{genus}
$g$ if it can be embedded into an orientable surface of genus $g$,
and it cannot be embedded into an orientable surface of smaller genus.
Planar graphs are exactly those that have genus $0$.
\end{definition}

Up to homeomorphism, there is only one orientable surface of genus $g$,
for any particular $g$.
For $g=0$, it is the sphere.
The orientable surface of genus $g>0$ can be obtained
from the orientable surface of genus $g-1$ by attaching a ``handle''.
Alternatively, it can be expressed as $g$ ``doughnuts'' merged together.
See \cref{fig:surface-examples} for examples of these surfaces,
and see 
\cref{fig:surface-graph-example} for an example of a graph drawn on a genus-$3$ surface.

\begin{figure}
    \centering
    \includegraphics[scale=0.09]{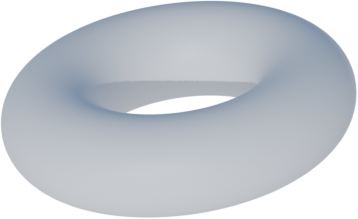}
    \hspace{1em}
    \includegraphics[scale=0.09]{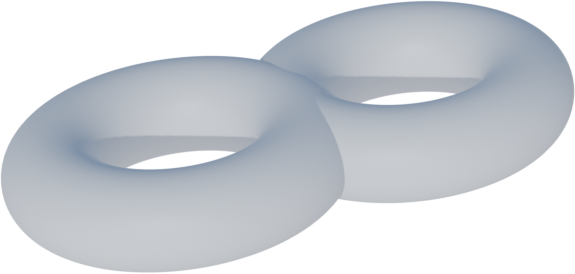}
    \hspace{1em}
    \includegraphics[scale=0.09]{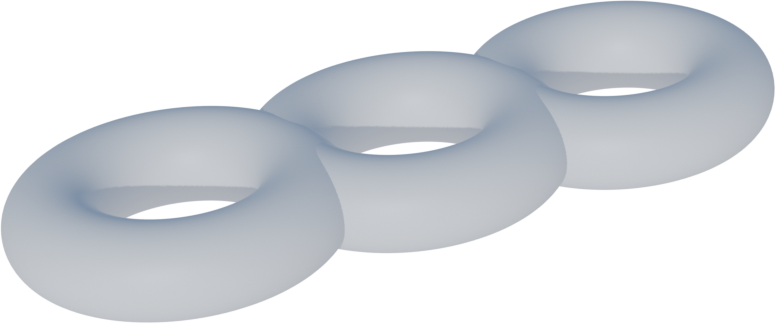}
    \hspace{1em}
    \includegraphics[scale=0.09]{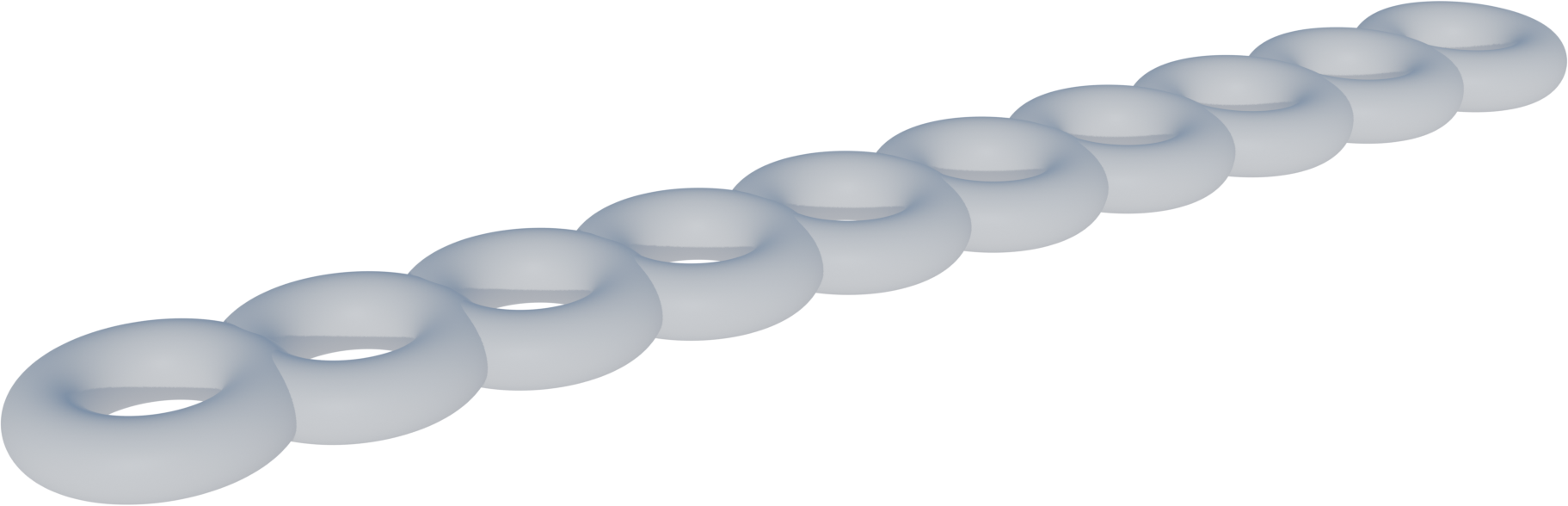}
    \caption{Examples of (left to right) a genus-$1$ surface, a genus-$2$ surface, a genus-$3$ surface,
    and a genus $10$ surface.}
    \label{fig:surface-examples}
\end{figure}

\begin{figure}
    \centering
    \includegraphics[scale=0.6]{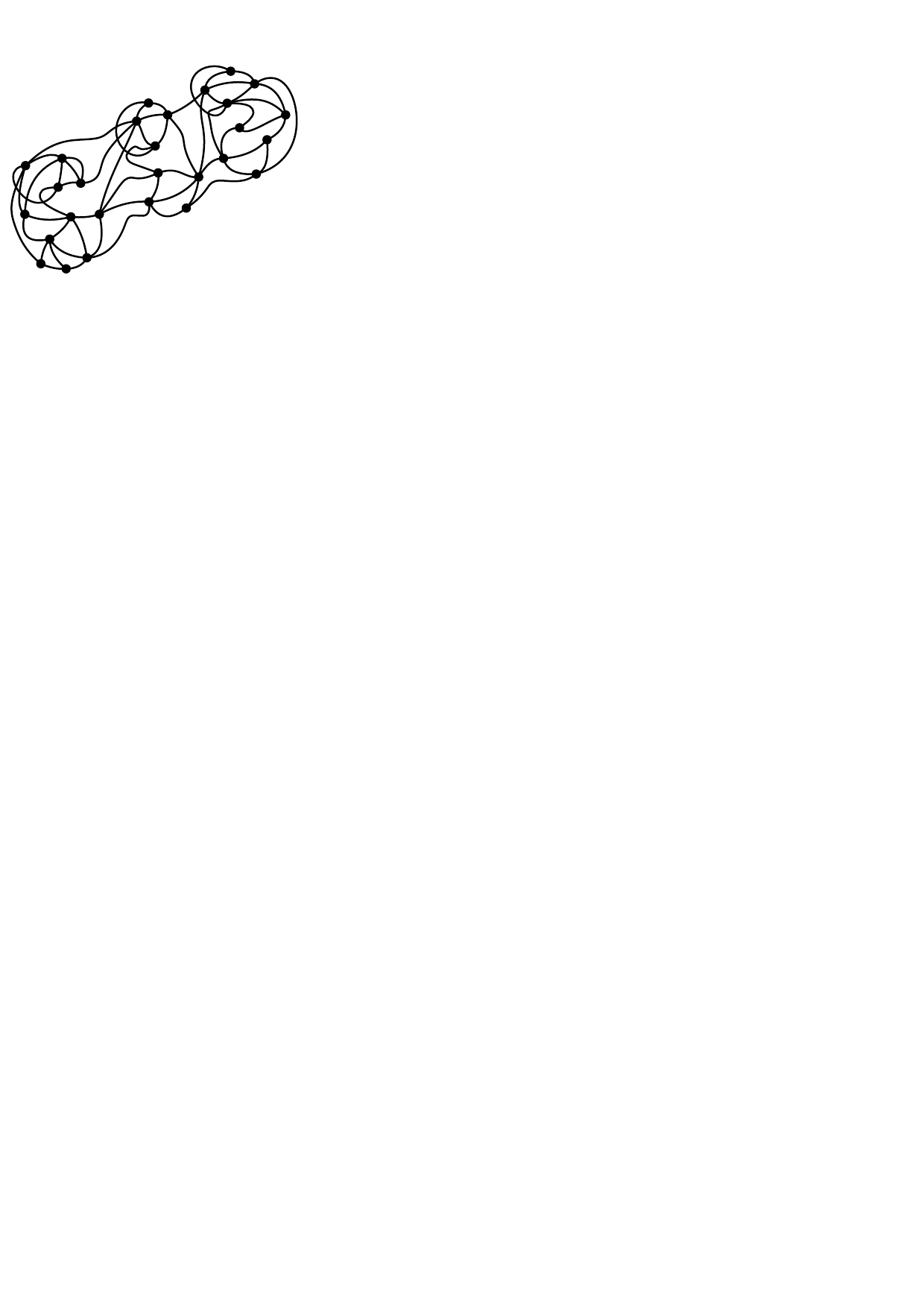}
    \hspace{2em}
    \includegraphics[scale=0.25]{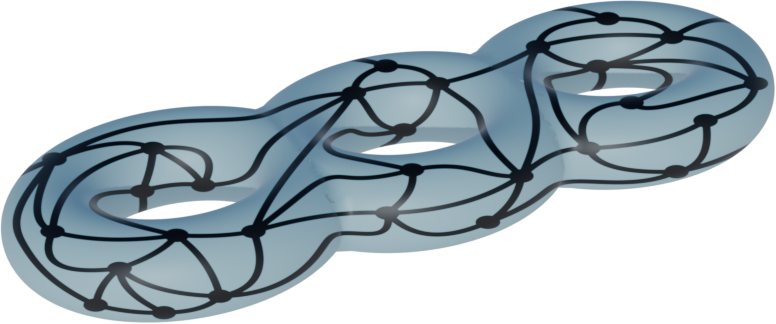}
    \caption{An example of a graph (left) and a drawing of the graph on a genus-$3$ surface (right).}
    \label{fig:surface-graph-example}
\end{figure}

\begin{definition}
For a graph $G$ and a graph $H$,
we say that $H$ is a \defn{minor} of $G$
if it can be obtained from $G$ by a sequence of
edge deletions, vertex deletions,
and edge contractions
(that is, merging the two vertices incident to an edge into one).
Equivalently, $H$ is a minor of $G$ if there exists a mapping
$f:V(G)\to V(H)$
so that if $uv\in E(H)$, then
$\left(f^{-1}(u)\times f^{-1}(v)\right)\cap E(G)\neq\emptyset$.

We say a graph $G$ is
\defn{$H$-minor-free}
if it does not contain $H$ as a minor.
\end{definition}

We will also consider two much more fundamentally geometric graph classes:

\begin{definition}
A graph $G$ is said to be
\defn{the intersection graph of
$d$-dimensional balls with ply $k$}
(sometimes called a $d$-dimensional $k$-ply neighbourhood system)
if there is a collection of
$d$-dimensional balls $B_1,\dots,B_n$
with ply at most $k$ for all points almost surely,
so that each ball corresponds to a vertex, and each edge corresponds to a pair of vertices whose balls intersect.
\end{definition}

\begin{definition}
A \defn{$d$-dimensional $k$-nearest neighbour graph}
is a graph $G$ with $n$ vertices $v_1,\dots,v_n$ corresponding to points $p_1,\dots,p_n\in\RR^d$
so that an edge $v_iv_j$ exists if and only if $p_i$ is among
the $k$-nearest neighbours of $p_j$, or $p_j$ is among the $k$-nearest neighbours of $p_i$.
\end{definition}

\section{Vertex Expansion is NP-hard}
\label{sec:vertex-expansion}

Although vertex-expansion is widely-known to be NP-hard (see e.g.~\cite{louis2013complexity}), we have not been able to find an explicit proof in the literature, so we will provide one here for completeness.
Kaibel~\cite[Theorem 2]{kaibel2004expansion} faced a similar issue with edge expansion, and presented a proof of hardness of edge expansion based on the hardness of maximum cut.

In this section, we will give a proof of the hardness of vertex expansion based on the hardness of edge expansion.
Specifically, we consider the decision problem variants of each,
where we ask if there is a subset $S$ of the vertices with edge or vertex expansion at most $\alpha$.

\begin{theorem}
The decision problem form of vertex expansion
is NP-hard.
\end{theorem}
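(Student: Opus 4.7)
The plan is to reduce from the decision version of minimum edge expansion, proved NP-hard by Kaibel~\cite{kaibel2004expansion}, to the decision version of vertex expansion. Given an edge-expansion instance $(G,\alpha)$ on an $n$-vertex graph $G$ of maximum degree $\Delta$, I would construct $G'$ via the \emph{clique replacement} gadget: replace each vertex $v\in V(G)$ with a clique $K_v$ on $M$ new vertices (where $M$ is a sufficiently large polynomial in $n$, e.g., $M=\Theta(n^3\Delta^2)$), and for each edge $uv\in E(G)$ add a single edge between a fresh vertex of $K_u$ and a fresh vertex of $K_v$, so cross-edge endpoints are vertex-disjoint within each clique. Set the vertex-expansion threshold to $\alpha':=\alpha/M$.

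In the forward direction, for any $A\subseteq V(G)$ with $|A|\leq n/2$ and $\phi_G(A)\leq\alpha$, the cut $T:=\bigcup_{v\in A}K_v$ in $G'$ has $|T|=M|A|\leq|V(G')|/2$, and each cross-edge in $\delta_G(A)$ contributes exactly one distinct vertex to $\partial T$ (using the vertex-disjointness assumption), giving $|\partial T|=|\delta_G(A)|$ and $\psi(T)=\phi_G(A)/M\leq\alpha/M$.

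For the converse, I would show that any cut $T\subseteq V(G')$ achieving $\psi(T)\leq\alpha/M$ can be transformed, without increasing $\psi$, into a clique-respecting cut $T^*=\bigcup_{v\in A^*}K_v$; then $\phi_G(A^*)=M\psi(T^*)\leq\alpha$. The exchange argument examines any split clique $K_v$ with $k_v:=|T\cap K_v|\in(0,M)$: if $k_v\leq M/2$, the clique's internal edges force $|\partial T\cap K_v|\geq M-k_v\geq M/2$, so $\psi(T)\geq 1/n$, contradicting $\psi(T)\leq\alpha/M$ for $M>\alpha n$; if $k_v>M/2$, replacing $T$ with $T\cup K_v$ eliminates $M-k_v$ vertices from $\partial T\cap K_v$ and introduces at most $\deg_G(v)\leq\Delta$ new cross-edge partners to $\partial T$, so for $M-k_v>\Delta$ the ratio $|\partial T|/|T|$ strictly decreases. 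Iterating produces the required $T^*$.

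The principal obstacle is the balance constraint $|T|\leq|V(G')|/2$ during the addition step: when $T$ is near-balanced, $T\cup K_v$ may overshoot. I would handle this by pairing the addition of $K_v$ with the simultaneous removal of some fully-included clique $K_u$ (for $u\in V(G)$ non-adjacent to $v$), preserving $|T|$ while still strictly improving $\psi$ up to an $O(\Delta)$-perturbation from the changed cross-edge structure around $K_u$, negligible for our choice of $M$. A secondary technical point is the corner case $k_v\in[M-\Delta,M)$, in which a single exchange is not strictly improving; however, the total ``defect'' $\sum_{v\in A^*}(M-k_v)$ is at most $\Delta n$, so the deviation from a clean cut perturbs $\psi$ by at most $O(\Delta/M)$, negligible for $M$ polynomially larger than $\Delta n$, and the rounding to $T^*$ still yields $\phi_G(A^*)\leq\alpha$.
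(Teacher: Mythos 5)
Your high-level plan is the same as the paper's: reduce from edge expansion (\cite{kaibel2004expansion}) by blowing each original vertex up into a large clique, so a vertex cut in the gadget corresponds (up to a $1/M$ factor) to an edge cut in the original graph. Your construction differs in a cosmetic but genuine way: you connect the cliques $K_u,K_v$ directly by a single cross-edge between fresh clique vertices, while the paper inserts a bisection vertex $e$ for each original edge and joins $e$ to all copies of both endpoints (and additionally keeps the clique edges). Your simpler construction also works, so the difference in gadget is fine.

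The execution, however, is more complicated than it needs to be, and the two places you flag as ``technical points'' are exactly where the argument is left genuinely incomplete. Concretely: the iterative exchange (add $K_v$ when $k_v>M/2$, then patch the balance constraint by simultaneously deleting a non-adjacent fully-included clique $K_u$) is not justified --- you would still need to verify that the combined move does not increase $|\partial T|/|T|$ and does not run out of valid $K_u$, and the ``$O(\Delta)$-perturbation, negligible for our choice of $M$'' claims are not quantified against the smallest gap between achievable ratios. Similarly, the residual defect analysis for $k_v\in[M-\Delta,M)$ compares $\psi(T)$ to $\psi(T^*)$ only up to an additive $O(\Delta/M)$, which is not obviously smaller than the separation between consecutive values of $\phi_G$; one would have to tune $M$ against that gap and show the rounding step actually lands on the same side of the threshold. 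None of this is impossible, but as written these are gaps, not minor remarks.

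The cleaner route --- closer in spirit to what the paper does with its single-step exchange --- is to skip the iterative case analysis entirely. Given any $T$ with $\psi(T)\le\alpha/M$, set $A:=\{v:K_v\cap T\ne\emptyset\}$ and $T^*:=\bigcup_{v\in A}K_v$, and compare directly. Writing $D:=M|A|-|T|\ge0$ for the total clique defect, one has $|T^*|=|T|+D\ge|T|$, while $\partial T$ contains all $D$ defect vertices (each is clique-adjacent to $T$) plus, for each edge of $\delta_G(A)$ whose $A$-side endpoint lies in $T$, its partner in a $\bar A$-clique. Since each defect vertex carries at most one cross-edge, the missing partners number at most $D$, so $|\partial T|\ge D+(|\delta_G(A)|-D)=|\delta_G(A)|=|\partial T^*|$. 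Hence $\psi(T^*)\le\psi(T)$ in one shot, with no corner cases. The balance constraint is also disposed of immediately: if $|A|>n/2$ then $D\ge M/2$, forcing $\psi(T)\ge(M/2)/(nM/2)=1/n>\alpha/M$ once $M>\alpha n$, contradicting $\psi(T)\le\alpha/M$. Your choice $M=\Theta(n^3\Delta^2)$ already satisfies this, so the fix you sketch (pairing additions with removals) is unnecessary. I would rewrite the converse along these lines rather than trying to make the exchange-plus-pairing argument rigorous.
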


\begin{proof}
We will show that edge expansion can be reduced to vertex expansion.
Let $G=(V,E)$ be a graph with $n=|V|$ vertices,
and let $\alpha>0$.
These form the input to the edge expansion problem, and hence to our reduction.
We will construct a graph $G'$ and a value $\beta$
so that $G$ has edge expansion at most $\alpha$ if and only if $G'$ has vertex expansion at most $\beta$.
At a high-level, $G'$ will be formed by bisecting each edge of $G$, and then replacing each vertex that was not a bisection vertex with a clique of size $k$.
For a sufficiently large value of $k$ (which is still polynomial in $n$),
we will be able to show that an upper bound on the vertex expansion of $G'$ can be ``rounded'' to an upper bound on the edge expansion of $G$.
In particular, for any $\varepsilon>0$, we may assume $k$ to be a sufficiently large polynomial in $n$ so that
for any $S\subset V$,
$|S|k\leq|S|k+|E[S]|\leq|S|k+|E|\leq(|S|+\varepsilon)k$,
We will place further requirements on our choice of $k$ (including the choice of $\varepsilon$) later.

We now give the explicit construction of $G'$ and $\beta$, for a parameter $k$.
Let $E[S]$ denote the set of edges in the induced subgraph of $G$ with the vertices $S$.
Let $V_1$ be a set with $k$ copies of each $v\in V$ labelled $v^1,\dots,v^k$.
Let $V_2=E$.
Let $E_1=\{v^iv^j:v\in V,1\leq i,j\leq k\}$
and $E_2=\{ev^i:e\in E,v\in V,1\leq i\leq k\}$.
Let $G'$ be a graph with vertices $V'=V_1\cup V_2$
and edges $E':=E_1\cup E_2$.
We choose $\beta:=\frac\alpha k$.

We will henceforth use the notation
$\partial A:=\{N_{G'}(A)\cap V'\setminus A\}$ for a set of vertices $A\subset V'$,
and the notation
$\delta B:=E\cap(B\times(V\setminus B))$ for a set of edges $B\subset E$.

Before proving that $\phi(G)\leq\alpha$ if and only if $\psi(G')\leq\beta$,
we will prove some useful structure.

We first claim that, if there is a subset $S\subset V$,
then there is a subset $S'\subset V'$ so that
$|\partial(S')|=|\delta(S)|$
and $|S'|=|S|k+|E[S]|$,
where $E[S]$ denotes the edges in the induced subgraph $G[S]$.
In particular, if we let $S'$ be the subset of $V'$ formed by
the union of $E[S]\subset V_2$
and $\{v^i:v\in S,1\leq i\leq k\}\subset V_1$,
then we actually have the stronger condition that $\partial(S')=\delta(S)$.

We next claim that for any subset $S'\subset V'$,
there are subsets $S''\subset V'$
and $S\subset V$
so that
$\frac{|\partial(S'')|}{|S''|}\leq\frac{|\partial(S')|}{|S'|}$,
$|\delta(S)|=|\partial(S')|$
and $|S|k+|E[S]|=|S''|$.
Specifically,
we choose $S''$ to be the set
that contains, for each $v\in V$, every element of $\{v^1,\dots,v^k\}$
if and only if $S$ contains any element of this set,
and furthermore $S''$ contains $e=uv\in V_2$ if and only if
$S$ contains some $u^i$ and $v^j$.
Note that this choice of $S''$
corresponds to a set
that can be mapped back to sets $S\subset V$
with the desired correspondence via the inverse of the transformation in the previous paragraph.

We now prove that
$\frac{|\partial(S'')|}{|S''|}\leq\frac{|\partial(S')|}{|S'|}$:
We can make all the modifications for vertices in $V_1$ ``first'',
and count the changes to the fraction,
and then make the modifications for the vertices in $V_2$
and count the remaining changes.
For the vertices $v^i\in V_1$
that are
in $S''$ but not in $S'$,
they contributed at least $1$ to the numerator and $0$ to the denominator
under $S'$, and under $S''$ they contribute $0$ to the numerator and $1$ to the denominator.
Henceforth, we may assume without loss of generality
that $S'$ does not ``cut'' individual sets $\{v_1,\dots,v_k\}$.
For each vertex $e=uv\in V_2=E$
in $S''$ but not in $S'$,
its neighbours
$N_{G'}(e)$ are a subset of $S''$.
Hence, it contributed at least $1$ to the numerator and $0$ to the denominator
under $S'$, and under $S''$ they contribute $0$ to the numerator and $1$ to the denominator.
For the vertices $e=uv\in V_2$
in $S'$ but not in $S''$,
such vertices contributed $1$ to the denominator and $k$ to the numerator
under $S'$, and under $S''$ they contribute $1$ to the numerator and $0$ to the denominator.
We claim that each of these replacements (whose sequence eventually results in $S''$) decreases the value of the fraction.
Let $a$ be the total size of $|\partial(S')|$ excluding a clique of vertices in $V_1$,
and let $b\geq1$ be the total size of $|S'|$ excluding a bisection vertex in $V_2$.
We need to prove that
$\frac{a+k}{b+1}\geq\frac{a+1}{b}$.
If we assume $k\geq n^2+n+1$ (which is a polynomial in $n$)
then,
$a\leq n^2$ and $b\leq n$, so
$k\geq n^2+n+1\implies kb\geq n^2+n+1\geq a+b+1\implies ab+kb\geq ab+a+b+1\implies (a+k)b\geq(a+1)(b+1)
\implies\frac{a+k}{b+1}\geq\frac{a+1}{b}$.

We will now finally prove that $\phi(G)\leq\alpha$ if and only if $\psi(G')\leq\beta=\frac\alpha k$.
In particular, we claim that, for every $\alpha\geq0$,
there exists $S\subset V$ so that $\frac{|\delta(S)|}{|S|}\leq\alpha$ in $G$
if and only if there exists $S'$
so that
$\frac{|\partial(S')|}{|S'|}\leq\frac\alpha k$ in $G'$.
Note that there are only a polynomial number of possible values of $\frac{|\delta(S)|}{|S|}$
(specifically, there are at most $|E|\cdot|V|$ possible values),
so we may assume $\alpha$ is one such value,
and that $k$ is chosen so that $\alpha\cdot(1+\varepsilon)$ is strictly less than any larger such value.

In the forward direction,
assume 
$\frac{|\delta(S)|}{|S|}\leq\alpha$.
Use $S'$ so that $|\partial(S')|=|\delta(S)|$ and $|S'|=|S|k+|E[S]|$.
Then, $\frac{|\partial(S')|}{|S'|}=\frac{|\delta(S)|}{|S|k+|E[S]|}\leq\frac{|\delta(S)|}{|S|k}\leq\frac\alpha k$.

In the backward direction,
assume
$\frac{|\partial(S')|}{|S'|}\leq\frac\alpha k$.
Then,
use the prior construction to find $S$
so that
$\frac{|\delta(S)|}{|S|k+|E[S]|}\leq\frac{|\partial(S')|}{|S'|}\leq\frac{\alpha}{k}$.
Recall that $|S|k+|E[S]|\leq(|S|+\varepsilon)k$,
so
$\frac{|\delta(S)|}{(|S|+\varepsilon)k}\leq\frac{\alpha}{k}$
and thus
$\frac{|\delta(S)|}{|S|}\leq\alpha(1+\varepsilon)$.
Hence, by the earlier choice of $k$,
we also get
$\frac{|\delta(S)|}{|S|}\leq\alpha$.

\end{proof}

\section{Deferred Proof of Subdivision Lemma}
\label{sec:deferred-subdivision-proof}
We now prove \cref{lemma:subdivision-bound-propagation}
by adapting an argument of Kelner~\cite{kelner2004spectral,kelner2006spectral,kelner2006new}.

\begin{lemma}[Restatement of \cref{lemma:subdivision-bound-propagation}]
Let $G$ be a triangulation of genus $g$
with maximum degree $\Delta$.
Then there is some (universal) constant $c$
so that $|V(G)|\cdot\gamma^{(1)}(G)\lesssim \Delta^c\cdot|V(G^{(k)})|\cdot\gamma^{(1)}(G^{(k)})$.
\end{lemma}

The proof will be similar to, but not quite the same as, that of 
\cref{lemma:uniform-shallow-minor-application}.
The primary differences will be that the paths corresponding to edges will be randomly sampled,
and that the ``uniformity'' is only approximate.
It seems quite plausible that a more general form of 
\cref{lemma:uniform-shallow-minor-application} could be extracted from the below proof,
but the most straightforward proof method using these random paths results in an extra factor of $\Delta$.
This factor is perfectly fine for proving
\cref{lemma:subdivision-bound-propagation},
but poses issues for a generalized form of
\cref{lemma:uniform-shallow-minor-application},
so we have kept this proof and construction separate.

\begin{proof}
Denote $G=(V,E)$, $n:=|V|$, $H:=G^{(k)}=(V',E')$, and $n':=|V'|$.
For the remainder of this proof, we will use the notation $\OO_\Delta()$ and $\Theta_\Delta()$ to hide polynomial factors in $\Delta$.
Note that the maximum degree of $H$ is $\max\{6,\Delta\}$, since no new vertices of degree $>6$ are added from a hexagonal subdivision, nor does any existing vertex have its degree changed.
Each edge in $G$ is split into $2^k$ pieces in $H$,
and each triangle in $G$ is partitioned into $4^k$ triangles in $H$.
The number of triangles incident to any vertex of degree $>6$ remains constant during subdivision.
Hence, $\frac{n'}{n}\in\Theta_\Delta(4^k)$.
Let $y_H,f_H$ denote the optimal solution to $\gamma^{(1)}(H)$.
Assume without loss of generality that $\sum_{x'\in V'}|f_H|^2=1$.
This assumption will allow us to slightly simplify some later steps.

Kelner~\cite[Proof of Lemma 5.2]{kelner2004spectral,kelner2006new}
is able to show that there exist random variables $\pi_V:V\to V'$
and $\pi_E:E\to\{\text{paths through $H$}\}$
with the following properties:
\begin{enumerate}
    \item For each $uv\in E$, $\pi_E(uv)$ is a path in $H$ from $\pi_V(u)$ to $\pi_V(v)$.
    \item For each $v\in V$, $\pi_V(v)$ is a uniform distribution over its support, which we denote $p(v)$.
        Moreover, every vertex $v'\in V'$ is contained in some $p(v)$ for a $v\in V$.
    \item For each $u\neq v\in V$, $\pi_V(u)$ and $\pi_V(v)$ are independent and have disjoint supports.
    \item For each $uv\in E$, $\pi_E(uv)$ is dependent only on $\pi_V(u)$ and $\pi_V(v)$.
    \item Each path $\pi_E(e)$ (for $e\in E$) has length at most $\OO_\Delta\left(2^k\right)$.
    \item Each vertex in $H$ appears in the image of $\pi_V$ with probability $\Theta_\Delta(1/4^k)$.
        That is, $v'\in V'$ appears in the image of $\pi_V(p(v'))$ with this probability.
    \item Each edge in $H$ appears in the image of $\pi_E$ with probability $\OO_\Delta(1/2^k)$,
        and moreover each edge (and hence also vertex endpoint) in $H$ appears in the support of $\OO_\Delta(1)$ random variables $\pi_E(e)$ for $e\in E$.
        Denote the support of edges in $\pi_E(e)$ as $p(e)$.
    \item Each vertex $v'\in V'$ appears as an endpoint in the support of some $\pi_E(uv)$
        only if it is contained in $p(u)\cup p(v)$
        (i.e., the vertices incident to elements of $p(uv)$ are a subset of $p(u)\cup p(v)$).
\end{enumerate}
Essentially, this construction is similar to our earlier uniform shallow minors,
except that the uniformity is approximate and ignores factors of $\Delta$.
Additionally, this construction also uses randomized paths
that use only a small number of vertices in expectation.
In contrast, our earlier construction for uniform shallow minors
used an upper bound that accounted for every vertex in every path.

Using these properties, we obtain that for each pair $u,v\in V$,
$$\EX\left[|f_H(\pi_V(u))-f_H(\pi_V(v))|^2\right]\in\Omega_\Delta\left(\frac1{16^k}\sum_{u'\in p(u), v'\in p(v)}|f_H(u')-f_H(v')|^2\right),$$
and if $uv\in E$ then
$$\EX\left[\sum_{u'v'\in\pi_E(uv)}y_H(u')+y_H(v')\right]\in \OO_\Delta\left(\frac1{2^k}\sum_{u'v'\in p(uv)}y_H(u')+y_H(v')\right).$$
Hence, there exists some deterministic choices $\pi_V^*,\pi_E^*$
so that
\begin{align*}
\frac{\sum_{uv\in E}\sum_{u'v'\in\pi_E^*(uv)}y_H(u')+y_H(v')}
{\sum_{u,v\in V}|f_H(\pi_V^*(u))-f_H(\pi_V^*(v))|^2}
&\leq
\frac{\EX\left[\sum_{uv\in E}\sum_{u'v'\in\pi_E(uv)}y_H(u')+y_H(v')\right]}
{\EX\left[\sum_{u,v\in V}|f_H(\pi_V(u))-f_H(\pi_V(v))|^2\right]}\\
&\in\OO_\Delta\left(
    8^k
\frac{
\sum_{uv\in E}\sum_{u',v'\in p(uv)}y_H(u')+y_H(v')
}
{
\sum_{u\neq v\in V}\sum_{u'\in p(u),v'\in p(v)}|f_H(u')-f_H(v')|^2
}
\right)\\
&=\OO_\Delta\left(
    8^k
\frac{
\sum_{v'\in V'}y_H(v')
}
{
\sum_{u\neq v\in V}\sum_{u'\in p(u),v'\in p(v)}|f_H(u')-f_H(v')|^2
}
\right).
\end{align*}
Let $\rho:=\max_{uv\in E}|\pi_E^*(uv)|$
be the maximum length (in terms of edges) of a sampled path,
so $\rho\in\OO_\Delta(2^k)$.
Choose $f_G(v):=f_H(\pi_V^*(v))-\frac1n\sum_{x\in V}f_H(\pi_V^*(x))$,
and $y_G(v):=2\rho
\sum_{u:uv\in E}\sum_{x\in p(v)\cap\pi_E^*(uv)}y_H(x)
$.

We start by showing that the constraints are satisfied.
By definition, $\sum_{v\in V}f_G(v)=\overline 0$.
Consider some edge $uv\in E$,
and the corresponding path $\pi_E^*(uv)$ through $H$.
By Cauchy-Schwarz and the triangle inequality (similar to the argument in
\cref{lemma:dim-red-Q-alpha}),
$$y_G(u)+y_G(v)\geq|f_G(u)-f_G(v)|^2,$$
so the constraints are satisfied.

It remains only to check the objective value.
First, note that
$$2n\sum_{x\in V}|f_G(x)|^2=\sum_{u,v\in V}|f_G(u)-f_G(v)|^2=\sum_{u,v\in V}|f_H(\pi_V^*(u))-f_H(\pi_V^*(v))|^2.$$
Moreover,
$$\sum_{u\neq v\in V}\sum_{u'\in p(u),v'\in p(v)}|f_H(u')-f_H(v')|^2$$
$$=\sum_{u',v'\in V'}|f_H(u')-f_H(v')|^2-\sum_{v\in V}\sum_{u',v'\in p(v)}|f_H(u')-f_H(v')|^2.$$
For any $v\in V$, any two $u',v'\in p(v)$, and any $w'\in V'\setminus p(v)$,
the triangle inequality gives
$|f_H(u')-f_H(v')|\leq|f_H(u')-f_H(w')|+|f_H(v')-f_H(w')|$,
so
by Jensen's inequality on the convex function $x\mapsto x^2$,
$\frac12|f_H(u')-f_H(v')|^2\leq|f_H(u')-f_H(w')|^2+|f_H(v')-f_H(w')|^2$.
In other words,
every squared distance within $p(v)$
can be bounded above by two squared distances
crossing out of $p(v)$.
So long as there are at least two vertices,
there are always more of the latter type (up to factors in $\Delta$),
so it follows that
$$\sum_{u\neq v\in V}\sum_{u'\in p(u),v'\in p(v)}|f_H(u')-f_H(v')|^2
\in\Omega_\Delta\left(\sum_{u',v'\in V'}|f_H(u')-f_H(v')|^2\right).$$

Finally,
\begin{align*}
\frac{\sum_{v\in V}y_G(v)}{\sum_{v\in V}||f_G(v)||_2^2}
&=
\frac{4n\rho
\sum_{v\in V}
\sum_{u:uv\in E}\sum_{x\in p(v)\cap\pi_E^*(uv)}y_H(x)
}{\sum_{u,v\in V}|f_H(\pi_V^*(u))-f_H(\pi_V^*(v))|^2}\\
&\in
\OO_{\Delta}\left(
n\rho\cdot
8^k
\frac{
\sum_{v'\in V'}y_H(v')
}
{
\sum_{u\neq v\in V}\sum_{u'\in p(u),v'\in p(v)}|f_H(u')-f_H(v')|^2
}
\right)\\
&=
\OO_{\Delta}\left(
n\rho\cdot
8^k
\frac{
\sum_{v'\in V'}y_H(v')
}
{
\sum_{u',v'\in V'}|f_H(u')-f_H(v')|^2
}
\right)\\
&=
\OO_{\Delta}\left(
n\rho\cdot
8^k
\frac{
\sum_{v'\in V'}y_H(v')
}
{
2n'\sum_{v'\in V'}|f_H(v')|^2
}
\right)\\
&=
\OO_{\Delta}\left(
4^k
\frac{
\sum_{v'\in V'}y_H(v')
}
{
\sum_{v'\in V'}|f_H(v')|^2
}
\right).
\end{align*}
\end{proof}

\section{Deferred Proof of Area Inequality}
\label{sec:deferred-area-proof}

In this section, we prove \cref{lemma:area-ineq}.

\begin{lemma}[Restatement of \cref{lemma:area-ineq}]
For a sphere cap $C$ on the $d$-dimensional unit sphere embedded in $\RR^{d+1}$,
let $f$ be its centre
and
let $r$ be the maximum Euclidean distance in $\RR^{d+1}$
from $f$ to a point in $C$.
Assume $C$ contains at most half the surface area of the sphere.
Then,
$d$-dimensional
surface area of the sphere cap
has
$\text{area}(C)\geq\frac{A_dr^d}{4^d}$.
\end{lemma}

The proof will use three important well-understood functions:
The Gamma function $\Gamma$, the Beta function $B$, and the regularized incomplete Beta function $I$.
We will not require the explicit representation of the Gamma function,
but we will use Gautschi's inequality:

\begin{proposition}[{Gautschi's inequality~\cite{gautschi1959some,NIST_DLMF_Gautschi}}]
For $x\geq0,s\in(0,1)$, it holds that
$$x^{1-s}<\frac{\Gamma(x+1)}{\Gamma(x+s)}<(x+1)^{1-s}.$$
\end{proposition}

We will use the explicit forms of the Beta and regularized incomplete Beta functions:

\begin{definition}
For $z_1,z_2,x>0$,
the \defn{incomplete Beta function} is defined as
$B(x;z_1,z_2):=\int_0^x t^{z_1-1}(1-t)^{z_2-1}dt$.
The \defn{Beta function} is defined as
$B(z_1,z_2):=B(1;z_1,z_2)$,
and it is also known to be equal to $\frac{\Gamma(z_1)\Gamma(z_2)}{\Gamma(z_1+z_2)}$.
The \defn{regularized incomplete Beta function}
is defined as $I_x(z_1,z_2):=\frac{B(x;z_1,z_2)}{B(z_1,z_2)}$.
\end{definition}

We also have the following important result for our purposes:

\begin{proposition}[{\cite[Equation (1)]{li2011concise}}]
For a sphere cap $C$ with height $h\leq1$
of the $d$-dimensional unit sphere,
$\text{area}(C)=\frac12 A_d\cdot I_{2h-h^2}\left(\frac d2,\frac12\right)$.
\end{proposition}

Note that the convention for dimension differs in the reference.

\begin{proof}[Proof of \cref{lemma:area-ineq}]
Let $h=\frac{r^2}2$ be the height of the sphere cap
(see \cref{fig:r-h-relationship}).
Then,
$$B\left(2h-h^2;\frac d2,\frac12\right)
=\int_0^{2h-h^2}t^{\frac d2-1}(1-t)^{-\frac12}dt
\geq\int_0^{2h-h^2}t^{\frac d2-1}dt
=\frac{2(2h-h^2)^{\frac d2}}d.
$$

Next,
$$B\left(\frac d2,\frac12\right)
=\frac{\Gamma\left(\frac d2\right)\cdot\Gamma\left(\frac12\right)}{\Gamma\left(\frac d2+\frac12\right)}
=\frac{\Gamma\left(\frac d2\right)\cdot\sqrt{\pi}}{\Gamma\left(\frac d2+\frac12\right)}
<\sqrt{\pi}\left(\frac d2-\frac12\right)^{-\frac12},
$$
where the last inequality follows from Gautschi's inequality.

Now,
$2h-h^2=h(2-h)\geq h$ (since $h\leq1$), so
$$I_{2h-h^2}\left(\frac d2,\frac12\right)\geq
\frac{
\frac{2(2h-h^2)^{\frac d2}}d
}{
\sqrt{\pi}\left(\frac d2-\frac12\right)^{-\frac12}
}
\geq
\frac{2h^{\frac d2}}{\sqrt{\pi}d}
\left(\frac d2-\frac12\right)^{\frac12}
=
\frac{2h^{\frac d2}}{\sqrt{2\pi}}
\frac{\sqrt{d-1}}d.
$$

Finally,
$$\text{area}(C)
=\frac12 A_d\cdot I_{2h-h^2}\left(\frac d2,\frac12\right)
\geq
\frac12 A_d\cdot
\frac{2h^{\frac d2}}{\sqrt{2\pi}}
\frac{\sqrt{d-1}}d
=A_d\cdot
\frac{r^d}{2^{\frac d2}\sqrt{2\pi}}
\frac{\sqrt{d-1}}d
\geq
A_d\cdot
\frac{r^d}{4^d},
$$
where the last inequality holds since $d\geq2$.
\end{proof}

\end{document}